\newcommand{\ZZ}{\mathbb{Z}}
\newcommand{\NN}{\mathbb{N}}
\newcommand{\FF}{\mathbb{F}}
\DeclareMathOperator{\distance}{d}
\DeclareMathOperator{\weight}{w}
\newcommand{\norm}[2][]{\operatorname{N}^{#1}_{#2}}
\newcommand{\rdiv}{\mid_r}
\newcommand{\ldiv}{\mid_\ell}
\newcommand{\pesc}[2]{\langle #1,#2\rangle}
\newcommand{\units}[1]{#1^\times}
\DeclareMathOperator{\tforall}{for\ all}
\DeclareMathOperator{\Gal}{Gal}
\DeclareMathOperator{\Inn}{Inn}
\newcommand{\gcrd}[1]{\left( #1 \right)_r}
\newcommand{\lclm}[1]{\left[ #1 \right]_\ell}
\newcommand{\lcrm}[1]{\left[ #1 \right]_r}
\newtheorem{theorem}{Theorem}
\newtheorem{lemma}{Lemma}
\newtheorem{prop}{Proposition}
\newtheorem{corollary}{Corollary}
\theoremstyle{definition}
\newtheorem{defn}{Definition}
\newtheorem{example}{Example}
\theoremstyle{remark}
\newtheorem{remark}{Remark}
\setlist[enumerate]{label=\upshape(\arabic*)}
\begin{document}

\thanks{Research funded under grants PID2023-149565NB-I00 and PRE2020-093254, both financed by the Spanish Research Agency (MCIN/AEI / 10.13039/501100011033), the second one being also financed by the European Social Fund ``FSE invierte en tu futuro''.}

\title[LCPs of skew constacyclic codes]{Linear complementary pairs of skew constacyclic codes}

\author{F. J. Lobillo}
\address{IMAG, CITIC and Dept. of Algebra, University of Granada, Spain}
\email{jlobillo@ugr.es}

\author{José Manuel Muñoz}
\address{Dept. of Algebra, University of Granada, Spain}
\email{munoz@ugr.es}

\begin{abstract}
	Linear complementary pairs (LCPs) of codes have been studied since they were introduced in the context of discussing mitigation measures against possible hardware attacks to integrated circuits. In this situation, the security parameters for LCPs of codes are defined as the (Hamming) distance and the dual distance of the codes in the pair. We study the properties of LCPs of skew constacyclic codes, since their algebraic structure provides tools for studying their duals and their distances. As a result, we give a characterization for those pairs, as well as multiple results that lead to constructing pairs with designed security parameters. We extend skew BCH codes to a constacyclic context and show that an LCP of codes can be immediately constructed from a skew BCH constacyclic code. Additionally, we describe a Hamming weight-preserving automorphism group in the set of skew constacyclic codes, which can be used for constructing LCPs of codes.
\end{abstract}

\keywords{Linear codes, LCPs of codes, Skew polynomial rings, Skew cyclic codes, Skew constacyclic codes, Skew BCH codes, Dual codes}

\subjclass[2020]{%
16S36, 
94B60, 
94B65
} 

\maketitle

\section{Introduction}

Given a field \( F \) and two \( F \)-linear codes \( \mathcal C, \mathcal D \subset F^n \), the pair \( (\mathcal C, \mathcal D) \) is said to be a \emph{linear complementary pair (LCP) of codes} if \( F^n \) is the direct sum of the vector spaces \( \mathcal C \) and \( \mathcal D \); that is, if \( \mathcal C \) and \( \mathcal D \) are supplementary.
The concept of linear complementary pairs of codes was introduced in \cite{Ngo15} in order to generalize linear complementary dual codes, defined in \cite{Massey92}, into a wider family of codes which kept relevant properties that allow them to be used for encoding integrated circuits in such a way that hardware Trojan horses against such circuits are forced to be as easy as possible to detect in order to have any chance to be successful.
In the context of this application, two security parameters are relevant for any linear complementary pair of codes \( (\mathcal C, \mathcal D) \): \( d_{\text{Trigger}} \) is the minimum Hamming distance between distinct elements in the dual code of \( \mathcal D \), \( \distance(\mathcal D^\perp) \) (also referred to as the \emph{dual distance of \( \mathcal D \)}), and \( d_{\text{Payload}} \) is the minimum Hamming distance between distinct elements in \( \mathcal C \), \( \distance(\mathcal C) \). They represent the minimum number of symbols that a hardware Trojan horse must be able to access in order to, respectively, extract information about the state of the circuit or modify its state; see \cite[Section II]{Ngo15} for details.

For the purpose of constructing linear complementary pairs of codes with suitable dimensions and security parameters, the proposal in \cite[Section III]{Ngo15} is as follows.
If the maximization of \( d_{\text{Payload}} \) is considered more critical than the one of \( d_{\text{Trigger}} \), then \( \mathcal C \) is chosen among linear codes with known, high enough lower bounds on their minimum Hamming distance. Then, random supplementary codes \( \mathcal D \) are computed until \( d_{\text{Trigger}} \), which has to be computed as the dual distance of \( \mathcal D \), is acceptably high.
If, otherwise, the maximization of \( d_{\text{Trigger}} \) is preferable over the one of \( d_{\text{Payload}} \), then \( \mathcal D^\perp \) is chosen and a similar approach is followed for computing \( \mathcal C \).
Computing a random supplement of a vector space is computationally trivial for the practical purposes considered in \cite{Ngo15}. However, the computation of the minimum Hamming distance of a code is costly and has to be performed once for every attempt. Moreover, the number of expected attempts required in order to randomly find a code whose minimum distance is at least the required security parameter could be considerable, especially if the security parameter is close to the best known lower bounds for codes of the required characteristics.
As a result, asymmetric security parameters are taken in the examples for the proposal in \cite{Ngo15}.

This contrasts with the suggested approach in \cite{Guneri18}, which restricts the codes in the pair to families with known algebraic properties. While this results in less available codes as well as less possible supplements for a given code, the additional structure leads to better knowledge about some relevant properties of the codes, such as the dual of a code and lower bounds on their minimum Hamming distance -- which, depending on the considered family, may also be significantly higher than what a random search through the set of supplements of a code could be expected to yield within a reasonable running time. As a result of such properties, under certain circumstances, starting from a single algebraic code \( \mathcal C \) or \( \mathcal D \), a linear complementary pair \( (\mathcal C, \mathcal D) \) can be constructed in such a way that the remaining security parameter is at least as high as the other one. This means that the task of finding a pair with high enough security parameters is reduced to the easier one of finding a single code with high enough distance or dual distance, which can be done, for example, through known databases of codes of the considered family, or known lower bounds for the distance of codes. Furthermore, additional algebraic structure on linear codes may also imply the existence of known fast nearest-neighbor error-correcting decoding algorithms, which could also be relevant for the application in \cite{Ngo15}.

Following this algebraic approach, this work will consider skew constacyclic codes. These are an immediate generalization of skew cyclic codes, which were introduced in \cite{BGU07} as a non-commutative version of the well-known cyclic codes. As shown in recent work on skew constacyclic codes, such as \cite{BBB21} and \cite{GKLN22}, for each field \( F \), each field automorphism \( \sigma : F \to F \), each length \( n \) and each constant \( \lambda \) in the fixed field of \( \sigma \), the set of skew \( \lambda \)-constacyclic codes is identified with the set of left ideals in a ring \( \mathcal R \) determined by \( F \), \( \sigma \), \( n \) and \( \lambda \).
Many properties of skew constacyclic codes, as well as constructions that result from them, are analogous versions of the ones for non-skew constacyclic codes. For example, see the error-correcting decoding algorithms in \cite{BGU07, GLN17s, GLN17pgz} or the lower bounds for the distance of these codes in \cite{BGU07, GLNN18, ALN21}, which are the skew cyclic analogues of similar constructions for cyclic codes.
On the other hand, the noncommutativity of \( \mathcal R \) leads to the potential existence of multiple supplementary left ideals for each left ideal; that is, each code might have multiple supplementary codes. This contrasts with the situation for non-skew constacyclic codes, described in \cite[Section II]{Guneri18}.

A few criteria for determining whether a pair of skew \( \lambda \)-constacyclic codes constitutes a linear complementary pair of codes are given in Theorem \ref{thm:LCP}. This result generalizes \cite[Theorem 2]{BBB21}, which deals with the case where the pairs are constructed by a code and its dual (for two different notions of dual), which requires \( \lambda^2 = 1 \). We also describe skew BCH constacyclic codes of designed distance in Definition \ref{defnBCHlike}, giving a result that allows the construction of LCPs of skew BCH constacyclic codes with designed security parameters in Theorem \ref{thm:BCHLCP}. By studying Hamming weight-preserving isomorphisms and automorphisms and using Theorem \ref{thm:LCP}, this work also describes a few procedures for constructing LCPs of codes while controlling the security parameters.

\section{Preliminaries}

We shall first define two central concepts for this work. We consider linear codes of length \( n \) over a field \( F \); that is, our codes are vector subspaces in \( F^n \).

\begin{defn}[{\cite[Section II.B]{Ngo15}}]\label{defn:LCP}
	Given two \( F \)-linear codes \( \mathcal C, \mathcal D \subset F^n \), the pair \( (\mathcal C, \mathcal D) \) is said to be a \emph{linear complementary pair (LCP) of codes} if their ambient vector space \( F^n \) is the direct sum of \( \mathcal C \) and \( \mathcal D \).
\end{defn}

We will consider a single security parameter following \cite{Guneri18}. We will define the security parameter for any pair of linear codes, not requiring the pair to be LCP as per the previous definition. The distance function is the usual Hamming metric in \( F^n \): the Hamming distance between two vectors in \( F^n \) is the number of nonzero entries in their difference.

\begin{defn}\label{defn:secparam}
	The \emph{security parameter} of a pair of linear codes \( (\mathcal C, \mathcal D) \) is the minimum of \( \distance(\mathcal C) \) and \( \distance(\mathcal D^\perp) \).
\end{defn}

Thus, if the security parameter as in Definition \ref{defn:secparam} is high enough, so will be the two security parameters, \( d_{\text{Trigger}} \) and \( d_{\text{Payload}} \), considered in \cite[Section II.B]{Ngo15}.

We shall now recall some common tools for studying skew constacyclic codes, mostly following \cite{GLNN18} for their construction. We consider an automorphism \(\sigma\) of finite order \(|\sigma| = \mu\) in the Galois group of the field \( F \). Let \(K = F^\sigma\) be its fixed field. The \(i\)-th truncated norm with respect to \( \sigma \) is defined as
\[
\norm[\sigma]{0}(a) = 1, \quad \norm[\sigma]{i}(a) = a \sigma(a) \cdots \sigma^{i-1}(a), i \geq 1
\]
for each \(a \in F\). Observe that the usual norm associated to the field extension \(F/K\) is
\[
\norm{F/K} = \norm[\sigma]{\mu}.
\]
The notations $\sigma$, $\mu$, $F$, $K$, $\norm[\sigma]{i}$ and $\norm{F/K}$ will be used throughout this paper.

Recall that the ring of skew polynomials \(R = F[x;\sigma]\) is the set of polynomials in \(x\) written with coefficients on the left, with the usual sum and a multiplication derived from the rule \(x a = \sigma(a) x\) for each \(a \in F\). This is an instance of the non-commutative polynomials introduced in the seminal paper \cite{Ore33}. It is well-known, see \cite{Jacobson96}, that $R$ is a left and right Euclidean Domain. In particular, every left and every right ideal of $R$ is principal.
The monic generators of the left ideals $Rf_1 \cap \cdots \cap Rf_k$ and \(Rf_1 + \cdots + Rf_k\) are, respectively, the \emph{least common left multiple} and the \emph{greatest common right divisor} of $f_1, \dots, f_k \in R$, denoted by $\lclm{f_1, \dots, f_k}$ and \(\gcrd{f_1, \dots, f_k}\), i.e.,
\begin{equation}\label{lclmgcrd}
Rf_1 \cap \cdots \cap Rf_k = R \lclm{f_1, \dots, f_k} \text{ and }  Rf_1 + \cdots + Rf_k = R \gcrd{f_1, \dots, f_k}.
\end{equation}
These generators can be computed by means of the extended left Euclidean Algorithm (see \cite[Ch. I, Theorem 4.33]{Bueso/alt:2003}). We use the notation $g \rdiv h$ to declare that $g$ is a \emph{right divisor of} $h$, that is, $Rh \subseteq Rg$. Moreover, \(\gamma \in F\) is said to be a \emph{right root} of \(g \in R\) if \(x-\gamma \rdiv g\). For each \(\gamma \in F\) and every \(g = \sum_i g_i x^i \in R\), it follows from \cite[Lemma 2.4]{LL88} that
\begin{equation}\label{eq:rightev}
g = q(x) (x - \gamma) + \sum_i g_i \norm[\sigma]{i}(\gamma),
\end{equation}
so \(\gamma\) is a right root of \(g = \sum_i g_i x^i\) if and only if \(\sum_i g_i \norm[\sigma]{i}(\gamma) = 0\). Greatest common left divisors, least common right multiples, left divisors and left roots are defined analogously. More specifically,
\begin{equation}\label{eq:leftev}
g = (x - \gamma)q(x) + \sum_i \sigma^{-i}(g_i) \norm[\sigma^{-1}]{i}(\gamma),
\end{equation}
so \(\gamma\) is a left root of \( g \) if and only if \(\sum_i \sigma^{-i}(g_i) \norm[\sigma^{-1}]{i}(\gamma) = 0\).

Multiple concepts from polynomials are readily extended to skew polynomials, such as the degree $\deg(f)$ of a skew polynomial $f$ and the concept of a monomial, which is any element of the form $ax^k$ for some $a \in F$ and $k \in \NN$. As \( R \) is a domain, the degree satisfies
\begin{equation}\label{degequations}
\deg(fg) = \deg(f) + \deg(g) = \deg(\gcrd{f,g}) + \deg(\lclm{f,g})
\end{equation}
for any \( f, g \in R \), as it does for commutative polynomials over a field; see \cite{Ore33} for a primary source, or \cite{Bueso/alt:2003} for a more modern approach.

By \cite[Theorem 1.1.22]{Jacobson96}, the center of \(R\) is \(Z(R) = K[x^\mu]\) where \( \mu = |\sigma| \). Hence, for each \(n = \mu s\) for some \( s \ge 1 \) and each \(\lambda \in K\), \(x^n - \lambda\) is an element in \(Z(R)\). It follows that the ring
\begin{equation}\label{introR}
\mathcal{R} = \frac{R}{R(x^n - \lambda)}
\end{equation}
is a \(K\)-algebra. Its elements can be represented by the skew polynomials in \( R \) of degree less than \(n\), so it is isomorphic to \(F^n\) as an \(F\)-vector space. Its set of units is denoted by \(\units{\mathcal{R}}\).

\begin{defn}[{\cite[Definition 2.2]{GLNN18}}]\label{degreesextension}
We say that \(\sigma\) has an extension \(\theta\) of degree \(s\) if there exists a field extension \(F \subseteq L\) and \(\theta \in \Gal(L)\) such that \(|\theta| = n = s\mu\), \(\theta_{|F}= \sigma\) and \(L^\theta = F^\sigma = K\).
\end{defn}

For such an extension and for each \(a \in F \subseteq L\),
\(
\theta^\mu(a) = \sigma^\mu(a) = a,
\)
so \(F \subseteq L^{\theta^\mu}\). Since \([L:F] = s\) and \(|\theta^\mu| = s\), the equality \(F = L^{\theta^\mu}\) follows. Denoting \(S = L[x;\theta]\) and
\begin{equation}\label{introS}
\mathcal{S} = \frac{S}{S(x^n-\lambda)},
\end{equation}
there is a canonical inclusion
\(
\mathcal{R} \subseteq \mathcal{S}
\)
as a consequence of \cite[Lemma 2.3]{GLNN18}. If \(F\) is a finite field, degree \(s\) extensions always exist, whilst if \(F\) is the field of rational functions over a finite field, they exist when \(\mu\) and \(s\) are coprime. See \cite[Examples 2.4 and 2.5]{GLNN18}.

\section{Skew constacyclic codes}

For convenience, we shall recall that the ring defined in \eqref{introR} is 
\(
\mathcal{R} = \frac{R}{R(x^n-\lambda)}
\)
where \(R = F[x;\sigma]\), \(|\sigma| = \mu\), \(n = s \mu\) and \(\lambda \in K = F^\sigma\). From now on, we will require \( \lambda \) to be nonzero. We shall identify each coset $a \in \mathcal R$ with the unique skew polynomial in \( R \) of degree less than $n$ that is contained in $a$, as the arithmetic in $\mathcal R$ is the one in $R$ modulo $x^n - \lambda$. Additionally, $\mathcal R$ is an $n$-dimensional $F$-vector space, as shown by the canonical $F$-vector space isomorphism
\begin{equation}\label{eq:vsisom}
\begin{aligned}
F^n & \to \mathcal R \\
(a_0, a_1, \dots, a_{n-1}) & \mapsto a_0 + a_1 x + \dots + a_{n-1} x^{n-1}.
\end{aligned}
\end{equation}
\begin{defn}
	A \emph{skew $\lambda$-constacyclic code over $F$ of length $n$} is a left ideal of $\mathcal R$ or, equivalently, a vector subspace in $F^n$ such that its image under the map in \eqref{eq:vsisom} is a left ideal of $\mathcal R$.
	A \emph{skew cyclic code} is a skew $1$-constacyclic code, and a \emph{skew negacyclic code} is a skew $(-1)$-constacyclic code.
\end{defn}

The isomorphism in \eqref{eq:vsisom} induces a Hamming metric in $\mathcal R$: the Hamming weight of $f \in \mathcal R$ is the number of nonzero coefficients of the only skew polynomial in $R$ of degree less than $n$ that is projected into $f$, which is the remainder of the right division of $f$ by $x^n - \lambda$. We will only consider the Hamming metric, as it is the one that suits the application in \cite{Ngo15}. If we know the dimension $k$ and the Hamming distance $d$ of a skew $\lambda$-constacyclic code $\mathcal C$ of length $n$, we say that $\mathcal C$ is an $[n, k, d]$-code. This notation is standard for linear codes, as well as \(\weight(c)\), \(\distance(c,c') = \weight(c-c')\) and \(\distance(\mathcal{C})\) to denote the Hamming weight of \( c \in \mathcal R \), the Hamming distance between \( c, c' \in \mathcal R \) and the minimum distance between distinct elements in \( \mathcal C \), respectively.

Recall that \(R\) is a left and right principal ideal domain. This implies that every left ideal in $\mathcal R$ is principal as well, being also generated by some $g \in R$ (note that \( \mathcal R \) is a right \( R \)-module) such that $g \rdiv x^n - \lambda$, which can be taken as a monic polynomial. The monic generator of each left ideal in $\mathcal R$ is the unique monic polynomial of minimal degree in the left ideal. Therefore, there is a one-to-one correspondence between the set of skew $\lambda$-constacyclic codes in $\mathcal R$ and the set of monic right divisors of $x^n - \lambda$.

Note that, if \( f \) does not right divide \( x^n - \lambda \), then \( \mathcal R f = \mathcal R g \) where \( g = \gcrd{f, x^n - \lambda} \) is monic and right divides \( x^n - \lambda \).
In the sequel, unless otherwise stated, we assume that a skew \(\lambda\)-constacyclic code is given by its monic right divisor of \(x^n-\lambda\). For example, if we state that the code \( \mathcal C \) is \( \mathcal R g \) or is generated by \( g \), we mean that \( g \) is a monic skew polynomial right dividing \( x^n - \lambda \).
The $F$-dimension of the code generated by some skew polynomial $g$ of degree $\deg(g)$ is equal to $n - \deg(g)$.

If \( x^n - \lambda \) admits a decomposition of the form \( x^n - \lambda = \lclm{x - \beta_1, \dots, x - \beta_n} \), then some right divisors of \( x^n - \lambda \) can be constructed as the least common left multiple of subsets of \( \{ x - \beta_1, \dots, x - \beta_n \} \).
We do not need those $\beta_i$ to be in $F$. In fact, if $n = s \mu$ for some $s > 1$, in order to fully understand such possible decompositions of $x^n - \lambda$, we will require the existence of an extension \( \theta : L \to L \) of \(\sigma\) of degree \(s\) according to Definition \ref{degreesextension}.

Using the notation of \cite{LL88,LL04,DL07}, a skew polynomial which can be decomposed as least common left multiple of linear skew polynomials is called a Wedderburn polynomial (W-polynomial for short). Moreover, \(\{\beta_1, \dots, \beta_k\} \subseteq L\) are called P-independent if \(\deg(\lclm{x-\beta_1, \dots, x - \beta_k}) = k\) (see \cite[Definition 2.3]{DL07}), so a skew polynomial \( f \in S = L[x;\theta] \) of degree \( k \) is a W-polynomial when \( f = \lclm{x-\beta_1, \dots, x - \beta_k} \) for some P-independent set \(\{\beta_1, \dots, \beta_k\} \subseteq L\) (see \cite[Theorem 5.8, i) iff iii)]{LLO07}). As \( S \) is a domain and \( (x^n - \lambda) \) is in its center, the proof of the following lemma is straightforward.

\begin{lemma}\label{lemma:ghhg}
	For any \(g, h \in S \), \( gh = x^n - \lambda \) if and only if \( hg = x^n - \lambda \).
\end{lemma}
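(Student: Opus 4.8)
The plan is to use only the two stated hypotheses: $S$ is a domain and $z := x^n - \lambda$ lies in the center of $S$. The first thing I would record is that $z \neq 0$ (indeed $\lambda \neq 0$, and in any case $x^n - \lambda$ has degree $n \geq 1$), so that from $gh = z$ one immediately gets $g \neq 0$ and $h \neq 0$. This observation is what legitimizes the cancellation step below, since $S$ is noncommutative.

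Assuming $gh = z$, I would then invoke centrality of $z$ to write $zg = gz$, and substitute $z = gh$ on both sides to obtain $(gh)g = g(gh)$, that is, $ghg = g^{2}h$ by associativity. Since $S$ is a domain and $g \neq 0$, left cancellation of $g$ gives $hg = gh$, and therefore $hg = z = x^{n} - \lambda$, which is the desired implication. The converse is obtained by the identical argument with the roles of $g$ and $h$ interchanged, or simply by applying the implication just proved to the pair $(h, g)$; note that $hg = x^{n}-\lambda \neq 0$ again forces $h \neq 0$, so the cancellation is again valid.

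I do not expect any genuine obstacle in this argument. The only point that deserves a word of care is the justification that $g$ and $h$ are nonzero before cancelling in the noncommutative domain $S$, and this is immediate from $x^{n}-\lambda \neq 0$. Everything else is a one-line manipulation using centrality and associativity, consistent with the claim in the text that the proof is straightforward.
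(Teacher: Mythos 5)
Your argument is correct and is precisely the ``straightforward'' proof the paper has in mind: the paper omits the details, noting only that \( S \) is a domain and \( x^n - \lambda \) is central, and your centrality-plus-left-cancellation computation (with the observation that \( g, h \neq 0 \)) fills in exactly those details. No gap and no genuinely different route.
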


The following result shows that either all factors of \( x^n - \lambda \) including itself can be decomposed as the least common left multiple of linear skew polynomials, where \( g \) is a factor of \( x^n - \lambda \) if \( x^n - \lambda = f_1 g f_2 \) for some \( f_1, f_2 \in S \), or no factors of positive degree can be decomposed in such a way.

\begin{prop}\label{prop:decomposition}
	The following statements are equivalent:
	\begin{enumerate}
		\item \label{prop:decomposition-factor} Every monic factor \( g \in S \) of the skew polynomial \( x^n - \lambda \) is a W-polynomial.
		\item \label{prop:decomposition-lambda} \( x^n - \lambda \) is a W-polynomial.
		\item \label{prop:decomposition-anyfactor} There exists some monic factor \( g \) of \( x^n - \lambda \) of degree at least \( 1 \) which is a W-polynomial.
		\item \label{prop:decomposition-u} There exists $u \in L$ such that $\norm{L/K}(u) = \lambda$, or equivalently, \( x - u \rdiv x^n - \lambda \).
	\end{enumerate}
	If some monic factor \( g \) of degree \( k \) is decomposed as \( g = \lclm{x - \beta_1, \dots, x - \beta_k} \), then there exist \( \alpha_1, \dots, \alpha_k \in L \) linearly independent over \( K \) such that \( \beta_i = \alpha_i^{-1}u\theta(\alpha_i) \) for each \( 1 \le i \le k \); in particular, \( \norm{L/K}(\beta_i) = \lambda \).
\end{prop}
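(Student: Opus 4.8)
The plan is to establish the cycle of implications \ref{prop:decomposition-factor} $\Rightarrow$ \ref{prop:decomposition-lambda} $\Rightarrow$ \ref{prop:decomposition-anyfactor} $\Rightarrow$ \ref{prop:decomposition-u} $\Rightarrow$ \ref{prop:decomposition-factor} and then to read the final assertion off the resulting description of the monic right divisors of \( x^n - \lambda \). The implications \ref{prop:decomposition-factor} $\Rightarrow$ \ref{prop:decomposition-lambda} and \ref{prop:decomposition-lambda} $\Rightarrow$ \ref{prop:decomposition-anyfactor} are immediate, since \( x^n - \lambda \) is itself a monic factor of \( x^n - \lambda \) of degree \( n \ge 1 \). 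For \ref{prop:decomposition-anyfactor} $\Rightarrow$ \ref{prop:decomposition-u}, I would first observe that any monic factor \( g \) of \( x^n - \lambda \), say \( x^n - \lambda = f_1 g f_2 \), is in fact a monic \emph{right} divisor of \( x^n - \lambda \): applying Lemma \ref{lemma:ghhg} to \( f_1 \cdot (g f_2) = x^n - \lambda \) yields \( g \cdot (f_2 f_1) = x^n - \lambda \), and a second application to this identity yields \( (f_2 f_1)\, g = x^n - \lambda \), i.e. \( g \rdiv x^n - \lambda \). If moreover \( g \) is a W-polynomial of degree at least \( 1 \), write \( g = \lclm{x - \beta_1, \dots, x - \beta_k} \); then \( x - \beta_1 \rdiv g \rdiv x^n - \lambda \), so \eqref{eq:rightev} applied to \( x^n - \lambda \) forces \( \norm[\theta]{n}(\beta_1) = \lambda \), that is \( \norm{L/K}(\beta_1) = \lambda \) because \( |\theta| = n \) and \( L^\theta = K \); take \( u = \beta_1 \). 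The internal equivalence in \ref{prop:decomposition-u} is exactly \eqref{eq:rightev} for \( x^n - \lambda \).

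The substantial implication is \ref{prop:decomposition-u} $\Rightarrow$ \ref{prop:decomposition-factor}. Since \( \theta \) has finite order \( n \) with fixed field \( K \), the extension \( L/K \) is cyclic Galois of degree \( n \) with \( \Gal(L/K) = \langle \theta \rangle \), so Hilbert's Theorem 90 applies. Fix \( u \) with \( \norm{L/K}(u) = \lambda \). Using multiplicativity and \( \theta \)-invariance of \( \norm{L/K} \) and then Hilbert 90, one checks that the \( \theta \)-conjugacy class of \( u \) in the sense of \cite{Lam/Leroy:1988, DL07}, namely \( \Delta = \{ \theta(c) u c^{-1} : c \in \units L \} = \{ c^{-1} u \theta(c) : c \in \units L \} \), equals \( \{ \beta \in L : \norm{L/K}(\beta) = \lambda \} \); by \eqref{eq:rightev}, \( x - \beta \rdiv x^n - \lambda \) for every \( \beta \in \Delta \). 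Pick a \( K \)-basis \( \alpha_1, \dots, \alpha_n \) of \( L \) and put \( \beta_i = \alpha_i^{-1} u \theta(\alpha_i) \in \Delta \), which is the \( \theta \)-conjugate of \( u \) by \( \alpha_i \) (using commutativity of \( L \)). The centralizer of \( u \) in \( \units L \) is \( \{ c : \theta(c) = c \} = \units K \), again by commutativity, so the Lam--Leroy description of least common left multiples of conjugates gives \( \deg \lclm{x - \beta_1, \dots, x - \beta_n} = \dim_K \operatorname{span}_K\{\alpha_1, \dots, \alpha_n\} = n \). Hence \( \lclm{x - \beta_1, \dots, x - \beta_n} \) is a monic right divisor of degree \( n \) of the monic degree-\( n \) polynomial \( x^n - \lambda \), so the two coincide and \ref{prop:decomposition-lambda} holds. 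Finally, any monic factor \( g \) of \( x^n - \lambda \) is a monic right divisor of it (the Lemma \ref{lemma:ghhg} argument), and a monic right divisor of a W-polynomial is a W-polynomial by \cite{LLO07}, giving \ref{prop:decomposition-factor}. Alternatively, \ref{prop:decomposition-u} makes \( \mathcal S \) the split cyclic algebra \( (L/K, \theta, \lambda) \cong M_n(K) \), whose unique simple module has \( L \)-dimension \( 1 \) and is therefore of the form \( S/S(x - a) \); decomposing the semisimple \( \mathcal S \)-module \( S/Sg \) then exhibits \( g \) as a least common left multiple of linear skew polynomials.

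For the final assertion, let \( g \) be a monic factor of \( x^n - \lambda \) of degree \( k \) with a decomposition \( g = \lclm{x - \beta_1, \dots, x - \beta_k} \); this precisely says that \( \{\beta_1, \dots, \beta_k\} \) is P-independent, and \( g \rdiv x^n - \lambda \) as above. From \( x - \beta_i \rdiv g \rdiv x^n - \lambda \) and \eqref{eq:rightev} we get \( \norm{L/K}(\beta_i) = \lambda \); since also \( \norm{L/K}(u) = \lambda \), Hilbert's Theorem 90 for \( L/K \) yields \( \alpha_i \in \units L \) with \( \beta_i u^{-1} = \theta(\alpha_i) \alpha_i^{-1} \), i.e. \( \beta_i = \alpha_i^{-1} u \theta(\alpha_i) \) using commutativity, and in particular \( \norm{L/K}(\beta_i) = \lambda \). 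Since each \( \beta_i \) is then the \( \theta \)-conjugate of \( u \) by \( \alpha_i \), the Lam--Leroy identity already used gives \( k = \deg \lclm{x - \beta_1, \dots, x - \beta_k} = \dim_K \operatorname{span}_K\{\alpha_1, \dots, \alpha_k\} \), so \( \alpha_1, \dots, \alpha_k \) are \( K \)-linearly independent.

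I expect the main obstacle to be the correct invocation of the structure theory of Wedderburn polynomials over a division ring, concentrated in the identity \( \deg \lclm{x - u^{c_1}, \dots, x - u^{c_m}} = \dim_C \sum_i C c_i \), where \( u^c = \theta(c) u c^{-1} \) and \( C \) is the centralizer of \( u \) (here \( \units K \), by commutativity of \( L \)); this one fact drives both the P-independence step and the \( K \)-linear independence of the \( \alpha_i \). The remaining care-points are the repeated passage from ``factor'' to ``right divisor'' via Lemma \ref{lemma:ghhg} and the centrality of \( x^n - \lambda \), and the classical but nontrivial fact that a monic right divisor of a W-polynomial is again a W-polynomial.
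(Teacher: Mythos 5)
Your proof is correct and follows essentially the same route as the paper: factors are turned into right divisors via Lemma \ref{lemma:ghhg} and the centrality of \( x^n - \lambda \), the implication from the existence of \( u \) uses a \( K \)-basis of \( L \) together with the Lam--Leroy/Delenclos--Leroy conjugacy machinery (\cite[Theorem 5.3]{DL07}), and the final assertion is obtained from Hilbert 90 plus the same degree-equals-dimension identity. The only cosmetic difference is that you deduce ``every monic factor is a W-polynomial'' by reducing to right divisors and citing the closure of W-polynomials under right division, where the paper invokes \cite[Theorem 5.1]{LL04} directly.
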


\begin{proof}
	It is immediate that \ref{prop:decomposition-factor} implies \ref{prop:decomposition-lambda}. By \cite[Theorem 5.1]{LL04}, if \(x^n-\lambda\) is a W-polynomial then every monic factor of \(x^n-\lambda\) is a W-polynomial. This means that \ref{prop:decomposition-lambda} implies \ref{prop:decomposition-factor}.

	By choosing \( g = x^n - \lambda \), \ref{prop:decomposition-lambda} implies \ref{prop:decomposition-anyfactor}. Lemma \ref{lemma:ghhg} implies that any factor of \( x^n - \lambda \) is a right divisor of \( x^n - \lambda \). As a result, any \( \beta_i \) in the decomposition of any factor \( g \) of \( x^n - \lambda \) is a right root of \( x^n - \lambda \), hence \( \norm{L/K}(\beta_i) = \lambda \). We may take any right root \( \beta_i \) of \( x^n - \lambda \) or any factor thereof as \( u \). Thus, \ref{prop:decomposition-anyfactor} implies \ref{prop:decomposition-u}.

	In order to prove that \ref{prop:decomposition-u} implies \ref{prop:decomposition-lambda}, and as a result the four statements are equivalent, take any \( \alpha_1, \dots, \alpha_n \in L \) conforming a \( K \)-basis of \( L \). By \cite[Theorem 5.3, (i) iff (iii)]{DL07}, if we define \( \beta_i = \alpha_i^{-1} u \theta(\alpha_i) \), then the set \( \{ \beta_1, \dots, \beta_n \} \) is P-independent and has \( n \) right roots of \( x^n - \lambda \), therefore \( \lclm{x - \beta_1, \dots, x - \beta_n} = x^n - \lambda \) as both are monic of degree \( n \) and the former right divides the latter. Hence, \ref{prop:decomposition-u} implies \ref{prop:decomposition-lambda}.

	Given a factor \( g \) of degree \( k \) and the P-independent set from its decomposition \( \beta_1, \dots, \beta_k \), by virtue of Hilbert's 90 Theorem (see \cite[Chapter VI, Theorem 6.1]{Lang02}), as \(\norm{L/K}(\beta_i / u) = \norm{L/K}(\beta_i) \norm{L/K}(u)^{-1} = 1\) for \(1 \leq i \leq k\), there exist some \(\alpha_1, \dots, \alpha_k \in L\) such that \(\beta_i = \alpha_i^{-1} u \theta(\alpha_i)\). Their \( K \)-linear independence is a result of \cite[Theorem 5.3]{DL07} and the P-independence of \(\beta_1, \dots, \beta_k\).
\end{proof}

\begin{remark}\label{remark:existenceoflambda}
	For any extension of finite fields $L/K$, the norm function $\norm{L/K} : L \to K$ is surjective as shown in \cite[Theorem 2.28]{Lidl97}, i.e., for any $\lambda \in K$, there is some $u \in L$ such that $\norm{L/K}(u) = \lambda$. This is not the case for a general field $K$. If no such element \( u \in L \) exists, then it is immediate that there are no such decompositions as the ones given in Proposition \ref{prop:decomposition}.

	In \cite{GKLN22}, skew \( \lambda \)-constacyclic codes where \( \lambda \) is not a norm for a field extension \( F/F^\sigma \) where \( F \) is of the form \( \FF_q(t) \) are constructed.
The ideas in \cite{GKLN22}, especially the one of embedding \( S \) into a skew polynomial ring over the splitting field of \( x^n - \lambda \), could be relevant for extending the sections in this work that require \( \lambda \) to be a norm into these contexts.
\end{remark}

Any norm map sends the $1$ of a field into itself. Thus, in the skew cyclic case, where $\lambda = 1$, we may take $u = 1$ and decompositions as in Proposition \ref{prop:decomposition} always exist. This is exploited in \cite{GLN16}, where a skew cyclic setting is considered, which, by carefully choosing such decompositions, paved the way for constructing skew cyclic codes with known lower bounds for their Hamming, as well as rank, distance. For example, see \cite{GLN17s}, \cite{GLN17pgz}, \cite{GLNN18} or \cite{ALN21}. Note that these works take advantage of the existence of right roots of the generator for the code.

The existence of some \( u \in L \) of norm \( \lambda \) also allows the following definition, which is taken from \cite[Lemma 3.2]{DL07}.

\newcommand{\E}[2]{E({#1},{#2})}
\begin{defn}
	The \( K \)-vector space \( \E g u \subseteq L \), for any monic \( g \in S \) and any \( u \in L \) such that \( g \rdiv x^n - \lambda \) and \( \norm{L/K}(u) = \lambda \), is defined as
	\[
		\E g u = \{ 0 \} \cup \{ \alpha \in \units L : x - \alpha^{-1} u \theta(\alpha) \rdiv g \}.
	\]
\end{defn}

The set \( \E g u \) is a vector space as it is shown in \cite[Lemma 3.2]{DL07} to be the kernel of the $K$-linear map $g(T_u)$, where $T_u : L \to L$ is defined as $a \mapsto \theta(a) u$ and, if $g = \sum_{i=0}^k a_i x^i$, then $g(T_u) = \sum_{i=0}^k a_i T_u^i$, where $T_u^0$ is the identity map in $L$.

\begin{lemma}\label{lemma:kgdim}
	The \( K \)-dimension of $\E g u$ is equal to $\deg(g)$.
\end{lemma}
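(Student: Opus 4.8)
The plan is to establish the two inequalities $\dim_K \E g u \ge \deg g$ and $\dim_K \E g u \le \deg g$ separately; the first is a direct consequence of Proposition~\ref{prop:decomposition}, the second an induction on $\deg g$ using the description $\E g u = \ker g(T_u)$.

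\emph{Lower bound.} Since $\norm{L/K}(u) = \lambda$, statement \ref{prop:decomposition-u} of Proposition~\ref{prop:decomposition} holds, hence so does \ref{prop:decomposition-lambda}: $x^n - \lambda$ is a W-polynomial, and therefore so is its monic factor $g$. Write $g = \lclm{x-\beta_1,\dots,x-\beta_k}$ with $k = \deg g$. By the last part of Proposition~\ref{prop:decomposition} there are $\alpha_1,\dots,\alpha_k \in \units L$, linearly independent over $K$, with $\beta_i = \alpha_i^{-1} u \theta(\alpha_i)$. Each $x - \beta_i$ right divides $g = \lclm{x-\beta_1,\dots,x-\beta_k}$, so by the definition of $\E g u$ each $\alpha_i$ lies in $\E g u$; as they are $K$-linearly independent, $\dim_K \E g u \ge k$.

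\emph{Upper bound.} I would induct on $k = \deg g$, the case $k = 0$ being trivial since $\E 1 u = \{0\}$. For $k \ge 1$, each $x - \beta_i$ above is a linear right divisor of $g$, so we may write $g = g_1(x - \beta_1)$ with $g_1 \in S$ monic of degree $k - 1$. One first checks $g_1$ is again admissible in the definition of $E(-,u)$, i.e. $g_1 \rdiv x^n - \lambda$: from $g \rdiv x^n - \lambda$ we have $x^n - \lambda = hg$ for some $h \in S$, and since $x^n - \lambda$ is central, Lemma~\ref{lemma:ghhg} gives $x^n - \lambda = gh = g_1(x-\beta_1)h \in Sg_1$, whence $g_1 \rdiv x^n - \lambda$; moreover $g_1$, being a monic factor of the W-polynomial $x^n-\lambda$, is itself a W-polynomial. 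Thus the induction hypothesis applies and $\dim_K \E{g_1}u = k - 1$.

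Now $g(T_u) = g_1(T_u)\circ\bigl(T_u - \beta_1\cdot\mathrm{id}_L\bigr)$, the assignment $f \mapsto f(T_u)$ on $S$ being multiplicative (a routine computation, since $T_u^i(bw) = \theta^i(b)\,T_u^i(w)$). Writing $\phi = T_u - \beta_1\cdot\mathrm{id}_L = (x-\beta_1)(T_u)$, whose kernel is $\E{x-\beta_1}u$, one gets $\E g u = \phi^{-1}\bigl(\E{g_1}u\bigr)$, and rank--nullity applied to the restriction of $\phi$ to $\phi^{-1}(\E{g_1}u)$ yields
\[
\dim_K \E g u = \dim_K\bigl(\E{g_1}u \cap \operatorname{im}\phi\bigr) + \dim_K\ker\phi \le \dim_K \E{g_1}u + \dim_K \E{x-\beta_1}u .
\]
It remains to bound $\dim_K \E{x-\beta_1}u \le 1$: as $\beta_1$ is a right root of $g$ and hence of $x^n - \lambda$, it has $\norm{L/K}(\beta_1) = \lambda$, and if $\alpha,\alpha' \in \E{x-\beta_1}u$ are nonzero then $u\,\theta(\alpha)\,\alpha^{-1} = \beta_1 = u\,\theta(\alpha')\,\alpha'^{-1}$ forces $\theta(\alpha/\alpha') = \alpha/\alpha'$, so $\alpha/\alpha' \in L^\theta = K$. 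Hence $\dim_K \E g u \le (k-1)+1 = k$, and combining with the lower bound, $\dim_K \E g u = k = \deg g$.

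\emph{Main obstacle.} Everything reduces to bookkeeping once linear right factors are peeled off; the one point needing care is that the cofactor $g_1$ stays admissible for $E(-,u)$, which is exactly where centrality of $x^n-\lambda$ (via Lemma~\ref{lemma:ghhg}) enters, together with the fact that a linear factor contributes only one dimension. An alternative route avoids the induction altogether: $f \mapsto f(T_u)$ induces a $K$-algebra isomorphism $\mathcal S \xrightarrow{\sim} \operatorname{End}_K(L)$ (by a dimension count, $\dim_K \mathcal S = n^2 = \dim_K\operatorname{End}_K(L)$, together with the normal basis theorem identifying $\operatorname{End}_K(L)$ with the crossed product $L \rtimes \Gal(L/K)$), under which $\mathcal S g$, of $K$-dimension $(n-\deg g)[L:K]$, maps onto a left ideal of $\operatorname{End}_K(L)$ of rank $n-\deg g$, forcing $\dim_K \ker g(T_u) = \deg g$; but this uses more machinery than the self-contained induction above.
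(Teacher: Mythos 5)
Your proof is correct, and it splits naturally into a half that coincides with the paper and a half that does not. The lower bound is exactly the paper's argument: by Proposition~\ref{prop:decomposition}, $g=\lclm{x-\beta_1,\dots,x-\beta_k}$ with $\beta_i=\alpha_i^{-1}u\theta(\alpha_i)$ for $K$-linearly independent $\alpha_1,\dots,\alpha_k$, which therefore lie in $E(g,u)$. For the upper bound the paper simply cites \cite[Corollary 5.4]{DL07}, whereas you prove it from scratch by induction on $\deg g$: you peel off a linear right factor $g=g_1(x-\beta_1)$, use Lemma~\ref{lemma:ghhg} (centrality of $x^n-\lambda$) to keep the cofactor $g_1$ a monic right divisor of $x^n-\lambda$, invoke the multiplicativity of $f\mapsto f(T_u)$ to write $g(T_u)=g_1(T_u)\circ(T_u-\beta_1\,\mathrm{id})$, and finish with rank--nullity together with the observation that $\ker(T_u-\beta_1\,\mathrm{id})$ is at most one-dimensional because the quotient of two of its nonzero elements is fixed by $\theta$ and hence lies in $L^\theta=K$; all of these steps check out, including the identity $E(g,u)=\phi^{-1}(E(g_1,u))$ and the equality $\dim E(g,u)=\dim\bigl(E(g_1,u)\cap\operatorname{im}\phi\bigr)+\dim\ker\phi$. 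What this buys is a self-contained proof that avoids leaning on the Wedderburn-polynomial machinery of Delenclos--Leroy, at the cost of some bookkeeping; the paper's one-line citation is shorter but opaque to a reader without \cite{DL07} at hand. Your alternative sketch via $\mathcal S\cong\operatorname{End}_K(L)$ is also sound in principle (it amounts to the cyclic algebra $(L/K,\theta,\lambda)$ splitting precisely because $\lambda$ is a norm), but, as you note, it imports considerably more theory than the inductive argument needs.
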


\begin{proof}
	Let \( d = \deg(g) \). By Proposition \ref{prop:decomposition}, \( g = \lclm{x - \alpha_1^{-1} u \theta(\alpha_1), \dots, x - \alpha_d^{-1} u \theta(\alpha_d)} \) for some \( K \)-linearly independent \( \{\alpha_1,\dots,\alpha_d \} \), which are elements in \( \E g u \). Thus, the $K$-dimension of $\E g u$ is at least $d$. It is also at most $d$ by \cite[Corollary 5.4]{DL07}.
\end{proof}

\begin{lemma}\label{lemma:kggcd}
	For any two monic $g, h \in S$ such that $g, h \rdiv x^n - \lambda$, $\E {\gcrd{g,h}} u = \E g u \cap \E h u$ and $\E {\lclm{g, h}} u = \E g u + \E h u$.
\end{lemma}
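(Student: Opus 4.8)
The plan is to prove the two equalities separately. The one for \( \gcrd{g,h} \) will come directly from the definition of \( \E{\cdot}{u} \) via right roots together with the dictionary between right-divisibility and inclusion of principal left ideals of \( S \); the one for \( \lclm{g,h} \) will need, on top of one easy inclusion, a dimension count using Lemma \ref{lemma:kgdim}.

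For the first equality I would start by recording that \( \gcrd{g,h} \) is monic and, since \( S\gcrd{g,h} = Sg + Sh \supseteq Sg \), satisfies \( \gcrd{g,h} \rdiv g \rdiv x^n - \lambda \); in particular \( \E{\gcrd{g,h}}{u} \) is defined and, by the discussion following its definition, is a \( K \)-subspace of \( L \). Now for any \( \beta \in L \) one has \( x - \beta \rdiv \gcrd{g,h} \) if and only if \( Sg + Sh = S\gcrd{g,h} \subseteq S(x-\beta) \), i.e. if and only if \( x - \beta \rdiv g \) and \( x - \beta \rdiv h \). Applying this with \( \beta = \alpha^{-1}u\theta(\alpha) \) for \( \alpha \in \units L \), and noting that \( 0 \) lies in each of \( \E{\gcrd{g,h}}{u} \), \( \E g u \) and \( \E h u \) by convention, gives \( \E{\gcrd{g,h}}{u} = \E g u \cap \E h u \) as sets, hence as \( K \)-vector spaces.

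For the second equality I would first note that \( x^n - \lambda \in Sg \cap Sh = S\lclm{g,h} \), so \( \lclm{g,h} \) is monic with \( \lclm{g,h} \rdiv x^n - \lambda \) and \( \E{\lclm{g,h}}{u} \) is a \( K \)-subspace of \( L \). Since \( S\lclm{g,h} \subseteq Sg \) and \( S\lclm{g,h} \subseteq Sh \) we have \( g \rdiv \lclm{g,h} \) and \( h \rdiv \lclm{g,h} \), so by transitivity of right-divisibility every right root of \( g \) or of \( h \) is a right root of \( \lclm{g,h} \); hence \( \E g u, \E h u \subseteq \E{\lclm{g,h}}{u} \), and therefore \( \E g u + \E h u \subseteq \E{\lclm{g,h}}{u} \). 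For the reverse inclusion I would invoke Lemma \ref{lemma:kgdim}, which is available for each of \( g \), \( h \), \( \gcrd{g,h} \) and \( \lclm{g,h} \) because all four right divide \( x^n - \lambda \): combining it with the Grassmann dimension formula and the first part of the statement,
\[
\dim_K\bigl( \E g u + \E h u \bigr) = \deg g + \deg h - \deg\gcrd{g,h},
\]
which equals \( \deg\lclm{g,h} = \dim_K \E{\lclm{g,h}}{u} \) by \eqref{degequations}. The inclusion together with the equality of dimensions gives \( \E{\lclm{g,h}}{u} = \E g u + \E h u \).

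The only delicate point is this last reverse inclusion: it cannot be argued element by element, since an element of \( \E{\lclm{g,h}}{u} \) need not lie in \( \E g u \) or in \( \E h u \) individually, so the dimension count is genuinely needed; this is also why it is worth making explicit that \( \lclm{g,h} \) and \( \gcrd{g,h} \) still right divide \( x^n - \lambda \), so that Lemma \ref{lemma:kgdim} legitimately applies to them. Everything else is routine manipulation of principal left ideals in the domain \( S \). (Alternatively, one could run the whole argument through the identification \( \E g u = \ker g(T_u) \) and the multiplicativity of \( g \mapsto g(T_u) \), but the ideal-theoretic version above is more self-contained.)
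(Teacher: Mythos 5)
Your proof is correct and follows essentially the same route as the paper: the first equality from the characterization of right roots of the greatest common right divisor, the easy inclusion \( \E g u + \E h u \subseteq \E{\lclm{g,h}}{u} \) from right divisibility, and then the Grassmann dimension formula combined with Lemma \ref{lemma:kgdim}, the first equality, and \eqref{degequations} to force equality. Your explicit verification that \( \gcrd{g,h} \) and \( \lclm{g,h} \) are monic right divisors of \( x^n - \lambda \), so that the definitions and Lemma \ref{lemma:kgdim} legitimately apply, is a welcome detail the paper leaves implicit.
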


\begin{proof}
	Any nonzero element in \( L \) is a right root of both $g$ and $h$ if and only if it is a right root of $\gcrd{g,h}$. Therefore, $\E {\gcrd{g,h}} u = \E g u \cap \E h u$.

	Any right root of either $g$ or $h$ is a right root of $\lclm{g, h}$. Thus, $\E g u \cup \E h u \subseteq \E {\lclm{g, h}} u$, so $\E g u + \E h u$ is a subspace of $\E {\lclm{g, h}} u$.
	By Lemma \ref{lemma:kgdim} and \eqref{degequations},
	\begin{align*}
		\dim(\E g u + \E h u)
		& = \dim(\E g u) + \dim(\E h u) - \dim(\E g u \cap \E h u) \\
		& = \deg(g) + \deg(h) - \deg(\gcrd{g,h}) \\
		& = \deg(\lclm{g, h}) \\
		& = \dim(\E {\lclm{g, h}} u).
	\end{align*}
	Hence, \(\E g u + \E h u = \E {\lclm{g,h}} u\).
\end{proof}

\begin{corollary}\label{cor:kgh}
	For any monic $g, h \in S$ such that $g, h \rdiv x^n - \lambda$, $\E g u = \E h u$ if and only if $g = h$.
\end{corollary}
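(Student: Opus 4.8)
The plan is to derive the statement directly from Lemmas \ref{lemma:kgdim} and \ref{lemma:kggcd}; no new idea is needed, and the proof should be short. Throughout, $g$ and $h$ are implicitly monic right divisors of $x^n - \lambda$, since this is the setting in which $\E g u$ and $\E h u$ are defined. The forward implication is trivial: $g = h$ clearly forces $\E g u = \E h u$.

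For the converse, I would assume $\E g u = \E h u$ and first check that $\E{\gcrd{g,h}}u$ makes sense: since $\gcrd{g,h}$ right divides $g$ and $g$ right divides $x^n - \lambda$, transitivity of right divisibility gives $\gcrd{g,h} \rdiv x^n - \lambda$, so Lemma \ref{lemma:kggcd} applies and yields $\E{\gcrd{g,h}}u = \E g u \cap \E h u = \E g u$ by the hypothesis.

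Finally, I would convert this equality of $K$-vector spaces into an equality of skew polynomials. Applying Lemma \ref{lemma:kgdim} to both sides gives $\deg \gcrd{g,h} = \dim \E{\gcrd{g,h}}u = \dim \E g u = \deg g$. Since $\gcrd{g,h}$ is a monic right divisor of $g$ with $\deg \gcrd{g,h} = \deg g$, and $S$ is a domain so degrees add under multiplication by \eqref{degequations}, comparing leading coefficients forces $\gcrd{g,h} = g$; the symmetric argument gives $\gcrd{g,h} = h$, whence $g = h$. There is essentially no obstacle here; the only point requiring a moment's care is ensuring $\E{\gcrd{g,h}}u$ is legitimately defined before invoking Lemma \ref{lemma:kggcd}, which is why the right-divisibility observation is made first.
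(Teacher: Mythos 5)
Your proof is correct and follows the same route as the paper, which simply cites Lemma \ref{lemma:kggcd} (implicitly together with Lemma \ref{lemma:kgdim}) and leaves the details — $\E{\gcrd{g,h}}u = \E g u \cap \E h u$, equality of degrees, and monicity forcing $\gcrd{g,h} = g = h$ — to the reader. Your added care about $\gcrd{g,h}$ being a legitimate monic right divisor of $x^n-\lambda$ is a reasonable, if routine, completion of that sketch.
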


\begin{proof}
	By Lemma \ref{lemma:kggcd}, $\E g u = \E h u$ if and only if $\E g u = \E {\gcrd{g,h}} u$. This is true if $g = h$ and, by Lemma \ref{lemma:kgdim}, it is false if $\deg(\gcrd{g,h}) < \deg(g)$, which is equivalent to $g \ne h$ as both $g$ and $h$ are monic.
\end{proof}

\begin{prop}\label{prop:correspondence}
	Assume $0 \le k \le n$ and the existence of some $u \in L$ such that $\norm{L/K}(u) = \lambda$. For any $K$-vector subspace $V \subseteq L$ of dimension $k$, there is exactly one monic right divisor $g \in S$ of $x^n - \lambda$ such that $\E g u = V$, and it is such that $\deg(g) = k$. Therefore, there is a one-to-one correspondence between the set of monic right divisors of $x^n - \lambda$ in $S$ of degree $k$, the set of skew $\lambda$-constacyclic codes in $\mathcal S$ of $L$-dimension $n-k$, and the set of $K$-vector subspaces in $L$ of $K$-dimension $k$.
\end{prop}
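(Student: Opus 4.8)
The plan is to split the assertion into an existence and a uniqueness statement for the monic right divisor \( g \), to read off \( \deg g = k \) from Lemma \ref{lemma:kgdim}, and finally to paste the resulting assignment \( V \mapsto g \) onto the bijection already recorded between monic right divisors of \( x^n - \lambda \) in \( S \) and skew \( \lambda \)-constacyclic codes in \( \mathcal S \) (sending \( g \) to \( \mathcal S g \), whose \( L \)-dimension is \( n - \deg g \)); composing the two yields the stated three-way correspondence.

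Uniqueness is immediate: if \( g, h \in S \) are monic right divisors of \( x^n - \lambda \) with \( \E g u = V = \E h u \), then \( g = h \) by Corollary \ref{cor:kgh}, and \( \deg g = \dim_K \E g u = \dim_K V = k \) by Lemma \ref{lemma:kgdim}. So the real content is existence. Here I would fix a \( K \)-basis \( \alpha_1, \dots, \alpha_k \) of \( V \) and set \( \beta_i = \alpha_i^{-1} u \theta(\alpha_i) \in \units L \). Since \( \norm{L/K} = \norm[\theta]{n} \) is multiplicative and unchanged under precomposition with \( \theta \) (because \( \theta^n = \operatorname{id} \)), we get \( \norm{L/K}(\beta_i) = \norm{L/K}(u) = \lambda \); by the right-evaluation formula \eqref{eq:rightev}, applied in \( S \) to \( x^n - \lambda \), this says precisely that \( \beta_i \) is a right root of \( x^n - \lambda \), i.e. \( x - \beta_i \rdiv x^n - \lambda \). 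Put \( g = \lclm{x - \beta_1, \dots, x - \beta_k} \).

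Two small checks then finish existence. First, every partial least common left multiple of the \( x - \beta_i \) right divides \( x^n - \lambda \): the central element \( x^n - \lambda \) lies in each \( S(x - \beta_i) \), hence in their intersection, which is the left ideal generated by that partial lcm. This licenses iterating Lemma \ref{lemma:kggcd}, which gives
\[ \E g u = \E{x - \beta_1}u + \dots + \E{x - \beta_k}u. \]
Second, \( \E{x - \beta_i}u = K\alpha_i \): by Lemma \ref{lemma:kgdim} this space is one-dimensional over \( K \), and it contains the nonzero vector \( \alpha_i \), since \( x - \alpha_i^{-1} u \theta(\alpha_i) = x - \beta_i \) trivially right divides \( x - \beta_i \). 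Hence \( \E g u = K\alpha_1 + \dots + K\alpha_k = V \), and then \( \deg g = \dim_K V = k \) exactly as in the uniqueness step.

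I do not anticipate a genuine obstacle; the argument is bookkeeping built on Lemmas \ref{lemma:kgdim}--\ref{lemma:kggcd} and Corollary \ref{cor:kgh}. The one point requiring care is verifying the hypotheses of Lemma \ref{lemma:kggcd} at each stage of the iterated \( \lclm{\cdot} \) — which is why the centrality of \( x^n - \lambda \) in \( S \) is invoked — together with the clean identification \( \E{x - \beta_i}u = K\alpha_i \) obtained from a dimension count rather than a direct appeal to Hilbert~90.
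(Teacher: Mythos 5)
Your proof is correct, and it reaches the same construction as the paper (fix a $K$-basis $\alpha_1,\dots,\alpha_k$ of $V$, set $\beta_i=\alpha_i^{-1}u\theta(\alpha_i)$, take $g=\lclm{x-\beta_1,\dots,x-\beta_k}$, and get uniqueness from Corollary \ref{cor:kgh}), but the existence/degree step is organized differently. The paper invokes \cite[Theorem 5.3]{DL07} to pass from $K$-linear independence of the $\alpha_i$ to P-independence of the $\beta_i$, which gives $\deg g=k$ at once; it then only needs the containment $V\subseteq E(g,u)$ plus the dimension count of Lemma \ref{lemma:kgdim} to conclude $E(g,u)=V$. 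You instead stay entirely inside the paper's own lemmas: the norm computation plus \eqref{eq:rightev} shows each $x-\beta_i$ right divides $x^n-\lambda$, the intersection argument shows every partial $\lclm{\cdot}$ is again a monic right divisor (so Lemma \ref{lemma:kggcd} may be iterated), the degree-one case $E(x-\beta_i,u)=K\alpha_i$ follows from Lemma \ref{lemma:kgdim}, and additivity gives $E(g,u)=V$ exactly, with $\deg g=k$ read off afterwards from Lemma \ref{lemma:kgdim}. What your route buys is that no P-independence of the specific set $\{\beta_1,\dots,\beta_k\}$ (i.e.\ no direct appeal to \cite[Theorem 5.3]{DL07}) is needed in this proof — though it still enters indirectly through the proofs of Lemma \ref{lemma:kgdim} and Proposition \ref{prop:decomposition} — at the cost of a slightly longer induction; the paper's version is shorter because the external theorem delivers the degree immediately. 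One cosmetic remark: the membership $x^n-\lambda\in S(x-\beta_i)$ is exactly the statement that $x-\beta_i$ right divides $x^n-\lambda$, so centrality of $x^n-\lambda$ is not actually needed for that step (it is harmless, just redundant). The final assembly of the three-way correspondence, via the recorded bijection between monic right divisors and left ideals of $\mathcal S$ and the formula $\dim_L \mathcal S g = n-\deg g$, matches the paper.
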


Note that not all skew $\lambda$-constacyclic codes in $\mathcal S$ are guaranteed to be in $\mathcal R$, unless $\mathcal S = \mathcal R$ (that is, $s = 1$), just as not all monic right divisors of $x^n - \lambda$ in $S$ are guaranteed to be in $R$.

\begin{proof}
	Take a $K$-basis $\{\alpha_1,\dots,\alpha_k \}$ of $V$. By \cite[Theorem 5.3]{DL07}, $\{ \beta_1,\dots,\beta_k \}$, where $\beta_i = \alpha_i^{-1}u\theta(\alpha_i)$ for each $i = 1,\dots,k$, is a P-independent set, therefore $g = \lclm{x - \beta_1, \dots, x - \beta_k}$ is a monic skew polynomial of degree $k$ right dividing $x^n - \lambda$. The set $\{\alpha_1,\dots,\alpha_k \}$ is a subset in $\E g u$ which spans $V$, so $V \subseteq \E g u$, and they are equal as they have the same $K$-dimension due to Lemma \ref{lemma:kgdim}. The uniqueness of $g$ follows from Corollary \ref{cor:kgh}.
\end{proof}

We conclude this section with its main result, which will play an essential role in the remaining sections.

\begin{theorem}\label{thm:LCP}
	Let $\mathcal C, \mathcal D \subset \mathcal R$ be two skew $\lambda$-constacyclic codes generated by $g$ and $h$ respectively. The following are equivalent:
	\begin{enumerate}
		\item \label{LCP1} $(\mathcal C, \mathcal D)$ is a linear complementary pair of codes (that is, $\mathcal C \oplus \mathcal D = \mathcal R$).
		\item \label{LCP2} $\mathcal C + \mathcal D = \mathcal R$ and $\dim(\mathcal C) + \dim(\mathcal D) = n$.
		\item \label{LCP3} $\mathcal C \cap \mathcal D = \{0\}$ and $\dim(\mathcal C) + \dim(\mathcal D) = n$.

		\item \label{LCP4} $\gcrd{g,h} = 1$ and $\lclm{g, h} = x^n - \lambda$.
		\item \label{LCP5} $\gcrd{g,h} = 1$ and $\deg(g) + \deg(h) = n$.
		\item \label{LCP6} $\lclm{g, h} = x^n - \lambda$ and $\deg(g) + \deg(h) = n$.
	\end{enumerate}

	If \( \sigma \) has some extension \( \theta : L \to L \) as in Definition \ref{degreesextension} and there is some $u \in L$ such that $\norm{L/K}(u) = \lambda$, then the following statements are also equivalent to the previous ones:
	\begin{enumerate}\setcounter{enumi}{6}
		\item \label{LCP7} $\E g u \oplus \E h u = L$.
		\item \label{LCP8} $\E g u \cap \E h u = \{0\}$ and $\deg(g) + \deg(h) = n$.
		\item \label{LCP9} $\E g u + \E h u = L$ and $\deg(g) + \deg(h) = n$.
	\end{enumerate}

	The conditions $\deg(g) + \deg(h) = n$ and $\dim(\mathcal C) + \dim(\mathcal D) = n$ are equivalent and might replace each other.
\end{theorem}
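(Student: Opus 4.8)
My plan is to prove \ref{LCP1}--\ref{LCP6} first, with no extra hypothesis, and then append \ref{LCP7}--\ref{LCP9} using the \( E(\,\cdot\,,u) \) machinery. The workhorse throughout is the elementary fact that for subspaces \( U, W \) of an \( n \)-dimensional vector space \( V \) one has \( U \oplus W = V \) if and only if \( U + W = V \) and \( \dim U + \dim W = n \), if and only if \( U \cap W = \{0\} \) and \( \dim U + \dim W = n \); this is just Grassmann's identity \( \dim(U+W) + \dim(U\cap W) = \dim U + \dim W \). I would apply it twice: first with \( V = \mathcal R \), \( U = \mathcal C \), \( W = \mathcal D \), giving \ref{LCP1} \( \Leftrightarrow \) \ref{LCP2} \( \Leftrightarrow \) \ref{LCP3}; later with \( V = L \), \( U = \E{g}{u} \), \( W = \E{h}{u} \). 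Since \( \dim \mathcal C = n - \deg g \) and \( \dim \mathcal D = n - \deg h \), the conditions \( \dim \mathcal C + \dim \mathcal D = n \) and \( \deg g + \deg h = n \) are literally identical, which disposes of the last sentence of the statement and lets me swap them freely.

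Next I would translate the lattice operations on the codes into operations on their generators. Let \( \pi \colon R \to \mathcal R \) be the canonical projection; its kernel \( R(x^n-\lambda) \) is contained in both \( Rg \) and \( Rh \) because \( g, h \rdiv x^n-\lambda \), so \( \pi^{-1}(\mathcal C) = Rg \) and \( \pi^{-1}(\mathcal D) = Rh \). Hence \( \mathcal C + \mathcal D = \pi(Rg + Rh) = \pi(R\gcrd{g,h}) \) and \( \mathcal C \cap \mathcal D = \pi(Rg \cap Rh) = \pi(R\lclm{g,h}) \) by \eqref{lclmgcrd}. Because \( \gcrd{g,h} \rdiv g \rdiv x^n-\lambda \) and \( x^n-\lambda \in Rg \cap Rh = R\lclm{g,h} \), both \( \gcrd{g,h} \) and \( \lclm{g,h} \) are monic right divisors of \( x^n-\lambda \), so \( \mathcal C + \mathcal D = \mathcal R \) is equivalent to \( \gcrd{g,h} = 1 \) and \( \mathcal C \cap \mathcal D = \{0\} \) is equivalent to \( \lclm{g,h} = x^n-\lambda \); together with the previous paragraph this gives \ref{LCP2} \( \Leftrightarrow \) \ref{LCP5} and \ref{LCP3} \( \Leftrightarrow \) \ref{LCP6}. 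The mutual equivalence of \ref{LCP4}, \ref{LCP5}, \ref{LCP6} then follows from the degree identity \eqref{degequations}, \( \deg\gcrd{g,h} + \deg\lclm{g,h} = \deg g + \deg h \): for instance, if \( \gcrd{g,h} = 1 \) then \( \deg\lclm{g,h} = \deg g + \deg h \), and since \( \lclm{g,h} \rdiv x^n-\lambda \) this degree equals \( n \) exactly when \( \lclm{g,h} = x^n-\lambda \).

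For the second block, assume \( \theta \colon L \to L \) and \( u \in L \) with \( \norm{L/K}(u) = \lambda \) exist, so \( [L:K] = s\mu = n \), and note that \( g, h \in R \subseteq S \) are monic and right-divide \( x^n-\lambda \) in \( S \), so \( \E{g}{u} \) and \( \E{h}{u} \) are defined. I would first check that \( \gcrd{g,h} \) and \( \lclm{g,h} \) are unchanged whether computed in \( R \) or in \( S \) (both are domains satisfying \eqref{degequations}, the \( R \)-generator of \( Rg+Rh \) lies in \( Sg+Sh \) and conversely divides its generator, and the degree identity then forces agreement of the least common left multiples too), so the polynomials of \ref{LCP4}--\ref{LCP6} may be read inside \( S \). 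Then Lemma \ref{lemma:kggcd} gives \( \E{g}{u} \cap \E{h}{u} = \E{\gcrd{g,h}}{u} \) and \( \E{g}{u} + \E{h}{u} = \E{\lclm{g,h}}{u} \), and Lemma \ref{lemma:kgdim} gives their \( K \)-dimensions as \( \deg\gcrd{g,h} \) and \( \deg\lclm{g,h} \). Hence \( \E{g}{u} \cap \E{h}{u} = \{0\} \) iff \( \gcrd{g,h} = 1 \), and \( \E{g}{u} + \E{h}{u} = L \) iff \( \deg\lclm{g,h} = n \) iff \( \lclm{g,h} = x^n-\lambda \); this yields \ref{LCP8} \( \Leftrightarrow \) \ref{LCP5}, \ref{LCP9} \( \Leftrightarrow \) \ref{LCP6}, and \ref{LCP7} \( \Leftrightarrow \) \ref{LCP8} \( \Leftrightarrow \) \ref{LCP9} by the linear-algebra fact applied in \( L \).

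The step I expect to be delicate is the middle one: pinning down the correspondence between \( \mathcal C + \mathcal D \), \( \mathcal C \cap \mathcal D \) and \( \gcrd{g,h} \), \( \lclm{g,h} \). It hinges on \( \ker\pi \) lying inside both \( Rg \) and \( Rh \), and on \( \gcrd{g,h} \), \( \lclm{g,h} \) being monic right divisors of \( x^n-\lambda \), so that monicity makes ``\( =1 \)'' and ``\( = x^n-\lambda \)'' the only possibilities; a secondary point to verify is the compatibility of \( \gcrd \) and \( \lclm \) computed in \( R \) versus in \( S \). Everything else is Grassmann's identity and the \( E(\,\cdot\,,u) \) dictionary already established in Lemmas \ref{lemma:kgdim} and \ref{lemma:kggcd}.
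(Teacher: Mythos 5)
Your argument is correct and follows essentially the same route as the paper's (much terser) proof: Grassmann's identity for \ref{LCP1}--\ref{LCP3}, the correspondence \eqref{lclmgcrd} between ideal sums/intersections and \(\gcrd\)/\(\lclm\) for the link to \ref{LCP4}--\ref{LCP6}, the degree identity \eqref{degequations}, and Lemmas \ref{lemma:kgdim} and \ref{lemma:kggcd} for \ref{LCP7}--\ref{LCP9}. The only addition beyond the paper's sketch is your explicit check that \(\gcrd\) and \(\lclm\) agree whether computed in \(R\) or in \(S\), a detail the paper leaves implicit but which you verify correctly.
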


\begin{proof}
The equivalences between \ref{LCP1}, \ref{LCP2} and \ref{LCP3} are immediate.
The equivalences between \ref{LCP4}, \ref{LCP5} and \ref{LCP6} follow from Equation \eqref{degequations}.
The equivalence between \ref{LCP1} and \ref{LCP4} follows from Equation \eqref{lclmgcrd}.
Lemmas \ref{lemma:kgdim} and \ref{lemma:kggcd} yield the remaining equivalences.
\end{proof}

If \( s = 1 \) (equivalently, \( n = \mu \), \( L = F \), \( S = R \)) and \( \norm{F/K}(u) = \lambda \) for some \( u \in F \), then Theorem \ref{thm:LCP} gives a simple way to construct a linear complementary pair of skew constacyclic codes from a single skew \( \lambda \)-constacyclic code \( \mathcal R g \), as it amounts to finding a basis for a supplementary space for \( \E g u \), which can be done by completing a \( K \)-basis for \( \E g u \) into a \( K \)-basis for \( F \). For \( s > 1 \), even if \( \norm{L/K}(u) = \lambda \) for some \( u \in L \), this approach is not guaranteed to yield a generator in \( R \). In any case, such an approach would disregard the value for the security parameter, which we are interested in, as the distance for the constructed supplement can be expected to follow the distribution of distances of random skew \( \lambda \)-constacyclic codes of the required dimension.

\section{Duals of skew constacyclic codes}

The security parameter for a linear complementary pair of codes, as described in Definition \ref{defn:secparam}, involves the dual of the second code in the pair. The notion of duality for skew constacyclic codes comes from the structure of \( F \)-vector space given by \eqref{eq:vsisom}.
When considering skew constacyclic codes, we will take advantage of the fact that the dual of a skew \( \lambda \)-constacyclic code is a skew \( \lambda^{-1} \)-constacyclic code, as shown by the next result, which summarizes some of the findings in \cite[Section 3]{GLN19t}.
We define the ring
	\begin{equation}\label{eq:hatR}
		\widehat{\mathcal R} = \frac R{R(x^n - \lambda^{-1})},
	\end{equation}
whose left ideals are skew \( \lambda^{-1} \)-constacyclic codes of length \( n \), as well as the anti-isomorphism of rings \( \Theta : \mathcal R \to \widehat {\mathcal R} \) defined over monomials as
	\[
		\sum_{i=0}^{n-1} a_i x^i \mapsto \sum_{i=0}^{n-1} \sigma^{-i}(a_i) x^{-i},
	\]
	where $x^{-1}$ is the reciprocal of the unit $x$ in $\widehat{\mathcal R}$, that is, $x^{-1} = \lambda x^{n-1}$.

\begin{defn}\label{defn:monicrec}
	For any \( h = \sum_{i=0}^{k} a_i x^i \in R \) where \( k < n \) and \( a_k, a_0 \) are nonzero, we define \( h^\Theta \) as the monic skew polynomial \( \sigma^k(a_0)^{-1}x^k \Theta(h) \).
\end{defn}

Note that, if $h = \sum_{i=0}^{k} a_i x^i$ where \( k < n \), we may calculate $x^k \Theta(h)$ as \begin{equation} \label{eq:dualgenerator}
\begin{aligned}
	x^k \Theta\left(\sum_{i=0}^{k} a_i x^i\right) & = \sum_{i=0}^k x^k \sigma^{-i}(a_i) x^{-i} \\
	& = \sum_{i=0}^k \sigma^{k-i}(a_i) x^{k-i} \\
	& = \sum_{i=0}^k \sigma^i(a_{k-i}) x^i.
\end{aligned} \end{equation}

As a result, \( h^\Theta \) is a monic skew polynomial of the same degree as \( h \), and \( h^\Theta \) can be defined over skew polynomials in \( R \) of degree less than \( n \). Note that the constant coefficient of a skew polynomial right (or left) dividing \( x^n - \lambda \) cannot be zero.

\begin{prop}\label{prop:dual}
	If $\mathcal C = \mathcal Rg$ for some monic
	$g \in R$ such that $hg = x^n - \lambda$ for some $h \in R$, then $\mathcal C^\perp = \widehat {\mathcal R} \Theta(h) = \widehat {\mathcal R} x^k \Theta(h) = \widehat {\mathcal R} h^\Theta$, where
	\( k = \deg(h) = n - \deg(g)\).
\end{prop}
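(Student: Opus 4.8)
The plan is to compute the dual code $\mathcal{C}^\perp$ directly from the definition of the inner product on $\mathcal{R} \cong F^n$, and to show that it coincides with the left ideal of $\widehat{\mathcal R}$ generated by $\Theta(h)$. The three displayed generators $\Theta(h)$, $x^k\Theta(h)$ and $h^\Theta$ generate the same left ideal because $x$ is a unit in $\widehat{\mathcal R}$ and $\sigma^k(a_0)^{-1}$ is a nonzero scalar, so it suffices to identify $\mathcal{C}^\perp$ with $\widehat{\mathcal R}\,\Theta(h)$; the equalities $\widehat{\mathcal R}\,\Theta(h) = \widehat{\mathcal R}\,x^k\Theta(h) = \widehat{\mathcal R}\,h^\Theta$ are then immediate, and Definition \ref{defn:monicrec} together with \eqref{eq:dualgenerator} shows $h^\Theta$ is monic of degree $k = \deg h$, whence $h^\Theta$ right divides $x^n - \lambda^{-1}$ (it is the minimal-degree monic element of the code, of the expected dimension $n - k = \deg g$).

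The key computational step is to relate the inner product on $F^n$ to the ring structure. I would use the fact that $\Theta$ is a ring anti-isomorphism $\mathcal R \to \widehat{\mathcal R}$ together with a pairing identity of the shape: for $a \in \mathcal R$ and $b \in \widehat{\mathcal R}$, the vector $a$ (viewed in $F^n$ via \eqref{eq:vsisom}) is orthogonal to the vector underlying $\Theta(b)$ if and only if a certain coefficient — say the coefficient of $x^0$, or of $x^{n-1}$ — of the product $a\, b$ in an appropriate module vanishes. Concretely, writing $a = \sum a_i x^i$ and $b = \sum b_j x^j$, the standard inner product $\sum_i a_i b_i'$ (where $b_i'$ are the coordinates of $\Theta^{-1}$ applied suitably) appears as a single coefficient of $a\Theta^{-1}(b)$ or of $b\,\Theta(a)$ modulo $x^n - \lambda$. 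This is the heart of the matter and is exactly the content distilled from \cite[Section 3]{GLN19t}; I would either cite it or reprove the one-line coefficient extraction.

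Granting that, the argument runs: $c \in \mathcal C = \mathcal R g$ means $c = fg$ for some $f \in R$ modulo $x^n - \lambda$. I claim $c \perp \widehat{\mathcal R}\,\Theta(h)$. Using the pairing identity, orthogonality of $c$ to $e\,\Theta(h)$ for all $e \in \widehat{\mathcal R}$ reduces to a coefficient of a product built from $fg$ and $h$; since $hg = x^n - \lambda \equiv 0$ in $\mathcal R$ (equivalently $gh = x^n - \lambda$ in $R$ by the centrality argument used for Lemma \ref{lemma:ghhg}), and $\Theta$ is multiplicative-reversing, the relevant product collapses to $\Theta(hg) = \Theta(x^n - \lambda) = 0$ in $\widehat{\mathcal R}$, giving the vanishing coefficient. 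Hence $\mathcal C^\perp \supseteq \widehat{\mathcal R}\,\Theta(h)$. For the reverse inclusion I compare dimensions: $\dim_F \mathcal C^\perp = n - \dim_F \mathcal C = n - (n - \deg g) = \deg g = k$, while $\widehat{\mathcal R}\,\Theta(h)$ is the left ideal generated by a monic polynomial of degree $n - k$ right dividing $x^n - \lambda^{-1}$ (here I use that $\Theta$ sends $x^n - \lambda$ to a unit multiple of $x^n - \lambda^{-1}$, so $\Theta(h)$ indeed right divides $x^n - \lambda^{-1}$ up to a unit, which after monicizing is $h^\Theta$), so it has $F$-dimension $n - (n - k) = k$. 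Equal dimensions force equality.

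The main obstacle is the bookkeeping in the pairing identity: getting the exponents, the $\sigma$-twists, and the choice of which coefficient (constant term versus top term, and whether one multiplies $fg$ by $\Theta(h)$ on the left or right) exactly right, so that $hg = x^n - \lambda$ is what makes the pairing vanish. Everything else — the three generators coinciding, $h^\Theta$ being monic of the right degree, the dimension count — is routine once that identity is pinned down. If reproving the identity from scratch proves lengthy, I would instead invoke \cite[Section 3]{GLN19t} for the statement $\mathcal C^\perp = \widehat{\mathcal R}\,\Theta(h)$ and spend the proof only on the elementary rewriting $\Theta(h) \mapsto x^k\Theta(h) \mapsto h^\Theta$ via \eqref{eq:dualgenerator} and Definition \ref{defn:monicrec}.
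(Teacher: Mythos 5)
The paper's entire proof of Proposition \ref{prop:dual} is the citation to \cite[Section 3]{GLN19t} (Theorem 29 and Example 33), so your fallback option --- invoke that reference for \( \mathcal C^\perp = \widehat{\mathcal R}\,\Theta(h) \) and devote the proof to the rewriting \( \Theta(h) \mapsto x^k\Theta(h) \mapsto h^\Theta \) via \eqref{eq:dualgenerator}, Definition \ref{defn:monicrec} and the fact that \( x \) and nonzero scalars are units of \( \widehat{\mathcal R} \) --- coincides with what the paper actually does. Your primary, self-contained route (a pairing identity expressing the Euclidean product as one coefficient of a suitable product, the inclusion \( \widehat{\mathcal R}\,\Theta(h) \subseteq \mathcal C^\perp \) coming from \( \Theta(gh) = \Theta(0) = 0 \) together with \( gh = hg = x^n-\lambda \), and then a dimension count) is a faithful reconstruction of what \cite{GLN19t} proves and would make the paper self-contained; but as written the key identity is only described ``in shape'', which you yourself flag as the heart of the matter, so until the exponents, the \( \sigma \)-twists and the choice of coefficient are pinned down, the collapse to \( \Theta(gh) \) is an outline rather than a proof.

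Two smaller points to repair if you pursue the direct route. First, the dimension count swaps \( k \) and \( n-k \): since \( k = \deg h = n - \deg g \), one has \( \dim_F\mathcal C = k \) and \( \dim_F\mathcal C^\perp = n-k = \deg g \), while \( h^\Theta \) is monic of degree \( k \), so \( \widehat{\mathcal R}\,h^\Theta \) also has dimension \( n-k \); the count still closes, just not with the labels you wrote (and this contradicts your own first paragraph, which had the degrees right). Second, that \( h^\Theta \) right divides \( x^n-\lambda^{-1} \) should not be asserted from ``minimality'' before the dimension of the ideal is known; it follows, for instance, from \( \widehat{\mathcal R}\,\Theta(h) = \Theta(h\mathcal R) \) (since \( \Theta \) is an anti-isomorphism) and the fact that \( h \) is a left divisor of \( x^n-\lambda \), so this ideal has \( F \)-dimension \( n-k \), its monic generator has degree \( k \), and the monic degree-\( k \) element \( h^\Theta \) of the ideal must be that generator. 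With these fixes the argument goes through; none of the issues is fatal.
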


\begin{proof}
	See \cite[Section 3]{GLN19t}, especially Theorem 29 and Example 33.
\end{proof}

\begin{remark}\label{remark:lambdapm1}
	If $\lambda^2 = 1$, then $\widehat{\mathcal R} = \mathcal R$. Hence, the dual of a skew cyclic code is a skew cyclic code, and the dual of a skew negacyclic code is a skew negacyclic code. Furthermore, some skew cyclic and skew negacyclic codes are supplementary to their own dual, resulting in linear complementary dual (LCD) codes. LCD skew constacyclic codes over finite fields are studied in \cite{BBB21}.
\end{remark}

If there is an extension \( \theta : L \to L \) of \( \sigma \) of degree \( s \) as described in Definition \ref{degreesextension}, it results in an extension of \( \Theta \) into the anti-isomorphism of rings
\begin{equation}\label{eq:ThetaS}
	\Theta : \mathcal S \to \widehat{\mathcal S} = \frac{S}{S(x^n - \lambda^{-1})},
	\qquad
	\sum_{i=0}^{n-1} a_i x^i \mapsto \sum_{i=0}^{n-1} \theta^{-i}(a_i) x^{-i}.
\end{equation}
If the statements in Proposition \ref{prop:decomposition} are true, then we can get some additional information on the dual of a skew constacyclic code. The key to do so is the following lemma, which is a generalization of \cite[Lemma 35]{GLN19t}.

\begin{lemma}\label{lemma:35ish}
	Let \( \gamma \in L \) be such that \( x - \gamma \) left divides \( x^n - \lambda \). Then \( \gamma' = \theta(\gamma)^{-1} \) is such that \( x - \gamma' \) right divides \( x^n - \lambda^{-1} \) and \( \widehat{\mathcal S}(x - \gamma') = \widehat{\mathcal S}\Theta(x - \gamma) \).
	Moreover, if \( h = \lcrm{x - \gamma_i ~|~ 1 \le i \le k} \) where \( \gamma_1, \dots, \gamma_k \) are left roots of \( x^n - \lambda \), then \( \widehat{\mathcal S}\Theta(h) = \widehat{\mathcal S} \hat h \) where \[ \hat h = \lclm{x - \theta(\gamma_i^{-1}) ~|~ 1 \le i \le k}. \]
\end{lemma}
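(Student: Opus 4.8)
The plan is to prove the two assertions of Lemma~\ref{lemma:35ish} separately, starting with the single-factor statement and then bootstrapping to the least common multiple statement via Lemma~\ref{lemma:kggcd} and the anti-isomorphism property of $\Theta$.

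First I would verify that $\gamma' = \theta(\gamma)^{-1}$ makes sense and does the job. Since $x-\gamma$ left divides $x^n - \lambda$, i.e.\ $x^n - \lambda = (x-\gamma)q(x)$ for some $q \in S$, $\gamma$ is a left root of $x^n - \lambda$; by the left evaluation formula \eqref{eq:leftev} this gives an explicit polynomial identity on the truncated norms, from which (as in the proof of Proposition~\ref{prop:decomposition}, using that $x^n - \lambda$ is central) one gets $\norm{L/K}$-type information forcing $\gamma \ne 0$, so $\gamma'$ is a well-defined unit. The cleanest route is then a direct computation with $\Theta$: apply the anti-isomorphism $\Theta : \mathcal S \to \widehat{\mathcal S}$ of \eqref{eq:ThetaS} to the factorization $x^n - \lambda = (x-\gamma)q$. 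Because $\Theta$ is an anti-isomorphism and sends the central element $x^n - \lambda$ to $x^n - \lambda^{-1}$ (up to the unit bookkeeping coming from $x^{-1} = \lambda x^{n-1}$), we get $x^n - \lambda^{-1} = \Theta(q)\,\Theta(x-\gamma)$ in $\widehat{\mathcal S}$; then I compute $\Theta(x-\gamma) = 1 - \theta^{-1}(\gamma)x^{-1}$ and clear $x^{-1}$ by multiplying on the appropriate side by a suitable power of $x$ and a scalar, exactly as in Definition~\ref{defn:monicrec} and the display \eqref{eq:dualgenerator}, to see that $\widehat{\mathcal S}\Theta(x-\gamma) = \widehat{\mathcal S}(x - \gamma')$ with $\gamma' = \theta(\gamma)^{-1}$ monic of degree one. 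Since this principal left ideal contains $x^n - \lambda^{-1}$ (being a factor of it via Lemma~\ref{lemma:ghhg} applied in $\widehat{\mathcal S}$), $x - \gamma'$ right divides $x^n - \lambda^{-1}$. This is essentially the content of \cite[Lemma 35]{GLN19t} with $\sigma,F$ replaced by $\theta,L$, so I would keep it terse.

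For the second assertion, write $h = \lcrm{x-\gamma_i \mid 1 \le i \le k}$, so $Sh = Sx^{n-\lambda}$-side we instead have $hS$... more precisely $h$ generates the right ideal $(x-\gamma_1)S + \cdots + (x-\gamma_k)S$ — the right-handed analogue of \eqref{lclmgcrd}. Applying $\Theta$, which is an anti-isomorphism, turns a sum of right ideals into a sum of left ideals: $\Theta(h)$ generates $\widehat{\mathcal S}\Theta(x-\gamma_1) + \cdots + \widehat{\mathcal S}\Theta(x-\gamma_k)$. By the first part each summand equals $\widehat{\mathcal S}(x - \theta(\gamma_i^{-1}))$, so $\widehat{\mathcal S}\Theta(h) = \sum_i \widehat{\mathcal S}(x - \theta(\gamma_i^{-1})) = \widehat{\mathcal S}\lclm{x - \theta(\gamma_i^{-1}) \mid 1 \le i \le k} = \widehat{\mathcal S}\hat h$, using \eqref{lclmgcrd} (the $\gcrd$/sum identity) in $\widehat{\mathcal S}$. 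One should note that all the $\gamma_i^{-1}$ are defined since, being left roots of $x^n - \lambda$, they are nonzero, and the $\theta(\gamma_i^{-1})$ are right roots of $x^n - \lambda^{-1}$ by the first part, so their $\lclm$ genuinely right divides $x^n - \lambda^{-1}$ and $\hat h$ is the intended monic generator.

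I expect the main obstacle to be purely bookkeeping rather than conceptual: carefully tracking the unit $x$ (and the scalar $\sigma^k(a_0)^{-1}$ of Definition~\ref{defn:monicrec}) through $\Theta$ so that the $x^{-1}$'s produced by $\Theta$ are correctly reabsorbed and the resulting generator comes out \emph{monic and linear} with constant term exactly $-\theta(\gamma)^{-1}$ — getting $\theta$ versus $\theta^{-1}$, and $\gamma^{-1}$ versus $\theta(\gamma)^{-1}$, in the right places. The subtlety is that $\Theta$ mixes an order reversal ($x^i \mapsto x^{-i}$) with the twist $\theta^{-i}$, so the naive guess for the image of a linear factor is off by a $\theta$; writing out $\Theta(x-\gamma) = 1 - \theta^{-1}(\gamma)x^{-1} = -\theta^{-1}(\gamma)x^{-1}(x - \theta(\gamma)^{-1}\cdot\text{(unit)})$ and checking the scalar makes this explicit. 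A secondary point to be careful about is that "left divides $x^n-\lambda$" and "factor of $x^n-\lambda$" coincide here by Lemma~\ref{lemma:ghhg}, which is what lets me pass freely between one-sided divisibilities on both sides of $\Theta$.
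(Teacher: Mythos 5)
Your overall strategy is the one the paper uses: identify \( \widehat{\mathcal S}\Theta(x-\gamma) \) with \( \widehat{\mathcal S}(x-\gamma') \) by a unit manipulation, then transport the least common right multiple through the anti-isomorphism \( \Theta \) and invoke the linear case factor by factor. But as written the argument has two concrete flaws. First, the explicit image of a linear factor is miscomputed: by \eqref{eq:ThetaS}, \( \Theta(x-\gamma) = x^{-1}-\gamma \), not \( 1-\theta^{-1}(\gamma)x^{-1} \); your expression equals \( x^{-1}(x-\gamma) \), and clearing the \( x^{-1} \) from \emph{it} would give \( \widehat{\mathcal S}\Theta(x-\gamma) = \widehat{\mathcal S}(x-\gamma) \), i.e.\ the wrong \( \gamma' \). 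The correct bookkeeping is the one-line identity \( -\theta(\gamma)^{-1}x\,\Theta(x-\gamma) = -\theta(\gamma)^{-1}x(x^{-1}-\gamma) = x-\theta(\gamma)^{-1} \); since \( -\theta(\gamma)^{-1}x \) is a unit of \( \widehat{\mathcal S} \), this yields \( \widehat{\mathcal S}(x-\gamma') = \widehat{\mathcal S}\Theta(x-\gamma) \) directly (your unit argument can then replace the paper's appeal to maximality, which is fine). Also, the divisibility \( x-\gamma' \rdiv x^n-\lambda^{-1} \) cannot be justified by saying the ideal ``contains \( x^n-\lambda^{-1} \)'': that element is \( 0 \) in \( \widehat{\mathcal S} \), so the statement is vacuous, and ``applying \( \Theta \) to the factorization'' is not literally meaningful since \( \Theta \) is only defined on the quotient. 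The paper gets it from norms: \( \gamma \) is also a right root by Lemma \ref{lemma:ghhg}, so \( \norm{L/K}(\gamma) = \lambda \) and hence \( \norm{L/K}(\theta(\gamma)^{-1}) = \lambda^{-1} \). (Alternatively, properness of \( \widehat{\mathcal S}(x-\gamma') \) plus \( \deg(x-\gamma')=1 \) would do, but that needs to be said.)

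Second, the lattice identities in your lcrm step are swapped. The least common right multiple satisfies \( h\mathcal S = \bigcap_i (x-\gamma_i)\mathcal S \); the \emph{sum} of those right ideals is generated by the greatest common left divisor (typically \( 1 \) here), so the claim that \( h \) generates \( (x-\gamma_1)\mathcal S + \cdots + (x-\gamma_k)\mathcal S \) is false. Dually, a sum of left ideals \( \sum_i \widehat{\mathcal S}g_i \) is generated by \( \gcrd{g_1,\dots,g_k} \), not by the lclm; the lclm generates the intersection. You have made both substitutions in the wrong direction, and the two mistakes cancel to land on the correct final formula, but each intermediate identity as stated would fail. The repaired chain is exactly the paper's: \( \widehat{\mathcal S}\Theta(h) = \Theta(h\mathcal S) = \Theta\bigl(\bigcap_i (x-\gamma_i)\mathcal S\bigr) = \bigcap_i \widehat{\mathcal S}\Theta(x-\gamma_i) = \bigcap_i \widehat{\mathcal S}(x-\theta(\gamma_i^{-1})) = \widehat{\mathcal S}\hat h \), using that an anti-isomorphism carries intersections of right ideals to intersections of left ideals. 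With these two repairs your proposal coincides with the paper's proof.
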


\begin{proof}
If \(\gamma\) is a left root of \(x^n-\lambda\), then it is also a right root by Lemma \ref{lemma:ghhg}, so \(\lambda = \norm{L/K}(\gamma)\).
The multiplicativity of both \(\theta\) and \(\norm{L/K}\) and the fact that \(\theta(\lambda) = \lambda\) imply that \(\lambda^{-1} = \norm{L/K}(\theta(\gamma)^{-1})\), or equivalently, \(\theta(\gamma)^{-1} = \gamma'\) is a right root of \(x^n - \lambda^{-1}\). The right ideal \((x - \gamma) \mathcal{S}\) is maximal, so its image under \(\Theta\), \(\widehat{\mathcal{S}} \Theta(x - \gamma)\), is a maximal left ideal. Since
\[
-\theta(\gamma)^{-1} x \Theta(x - \gamma) = -\theta(\gamma)^{-1} x (x^{-1} - \gamma) = x - \gamma',
\]
we conclude that \(\widehat{\mathcal{S}}(x - \gamma') \subseteq \widehat{\mathcal{S}} \Theta(x - \gamma)\). The maximality of \(\widehat{\mathcal{S}}(x - \gamma')\) implies the equality.
Finally, assume that \( h = \lcrm{x - \gamma_i ~|~ 1 \le i \le k} \). Then,
\[
\begin{split}
\widehat{\mathcal{S}} \Theta(h) &= \Theta \left( h \mathcal{S} \right) \\
&= \Theta \left( \bigcap_{1 \le i \le k} (x - \gamma_i) \mathcal{S}) \right) \\
&= \bigcap_{1 \le i \le k} \Theta( (x - \gamma_i) \mathcal{S}) \\
&= \bigcap_{1 \le i \le k} \widehat{\mathcal S}\Theta(x - \gamma_i) \\
&= \bigcap_{1 \le i \le k} \widehat{\mathcal S}(x - \theta(\gamma_i^{-1})) \\
&= \widehat{\mathcal{S}} \lclm{\{x - \theta(\gamma_i^{-1}) ~|~ 1 \le i \le k\}}
\end{split}
\]
as asserted.
\end{proof}

\begin{prop}\label{prop:dual2}
	Let \( h, g \in S \) such that \( h = \lcrm{x - \gamma_i ~|~ 1 \le i \le k} \) and \( hg = x^n - \lambda \).
	Then, \( (\mathcal S g)^\perp = \widehat{\mathcal S} \hat h \) where \( \widehat{\mathcal S} \) is as in \eqref{eq:ThetaS} and
	\( \hat h = \lclm{x - \theta(\gamma_i^{-1}) ~|~ 1 \le i \le k}. \)
	In addition, if \( g \in R \), then \( \hat h \in R \) and \( (\mathcal R g)^\perp = \widehat{\mathcal R} \hat h \) where \( \widehat{\mathcal R} \) is as in \eqref{eq:hatR}.
\end{prop}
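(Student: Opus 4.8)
The plan is to deduce the first assertion by transporting Proposition~\ref{prop:dual} to the ring $\mathcal S$ and then invoking Lemma~\ref{lemma:35ish}, and to obtain the second by a descent argument from $S$ back to $R$. Before that I would record two elementary observations: $h$ is monic, being a least common right multiple, and comparing leading coefficients in $hg = x^n - \lambda$ then forces $g$ to be monic as well; and since $h = \lcrm{x - \gamma_i ~|~ 1 \le i \le k}$ lies in each $(x-\gamma_i)S$, every $x - \gamma_i$ is a left divisor of $h$, which in turn is a left divisor of $x^n - \lambda$ because $hg = x^n - \lambda$, so each $\gamma_i$ is a left root of $x^n - \lambda$ --- precisely the hypothesis under which Lemma~\ref{lemma:35ish} applies. (In passing, $g \rdiv x^n - \lambda$ too, as $x^n - \lambda = hg \in Sg$.)

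For the first assertion I would apply Proposition~\ref{prop:dual} with $S = L[x;\theta]$, $\mathcal S$, $\widehat{\mathcal S}$, $\theta$, $L$ playing the roles of $R$, $\mathcal R$, $\widehat{\mathcal R}$, $\sigma$, $F$. This is legitimate because $\mathcal S$ is built from $L, \theta, n, \lambda$ exactly as $\mathcal R$ is built from $F, \sigma, n, \lambda$ --- with $|\theta| = n$, $L^\theta = K$ and $x^n - \lambda$ central in $S$ --- so the constructions leading to Proposition~\ref{prop:dual}, in particular the anti-isomorphism $\Theta$ of~\eqref{eq:ThetaS} and Definition~\ref{defn:monicrec}, carry over verbatim. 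This gives $(\mathcal S g)^\perp = \widehat{\mathcal S}\Theta(h)$, while Lemma~\ref{lemma:35ish} gives $\widehat{\mathcal S}\Theta(h) = \widehat{\mathcal S}\hat h$ with $\hat h = \lclm{x - \theta(\gamma_i^{-1}) ~|~ 1 \le i \le k}$; combining the two identities yields $(\mathcal S g)^\perp = \widehat{\mathcal S}\hat h$.

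For the second assertion, assume $g \in R$. Since $g \rdiv x^n - \lambda$ in $S$ with $g$ and $x^n - \lambda$ in $R$, uniqueness of right division --- which holds in $S$ and restricts to the subring $R$ --- shows that the left quotient $h$ also lies in $R$; hence $h^\Theta \in R$ is defined as in Definition~\ref{defn:monicrec}, and Proposition~\ref{prop:dual} over $R$ gives $(\mathcal R g)^\perp = \widehat{\mathcal R}\Theta(h) = \widehat{\mathcal R}h^\Theta$, where $h^\Theta$ is the monic right divisor of $x^n - \lambda^{-1}$ of degree $\deg h = n - \deg g$. It then remains to identify $\hat h$ with $h^\Theta$. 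Both are monic right divisors of $x^n - \lambda^{-1}$ in $S$: for $\hat h$ because it is the least common left multiple of the linear polynomials $x - \theta(\gamma_i^{-1})$, each right dividing $x^n - \lambda^{-1}$ by the first part of Lemma~\ref{lemma:35ish}; for $h^\Theta$ by Proposition~\ref{prop:dual}. Moreover they generate the same left ideal, since $\widehat{\mathcal S}\hat h = \widehat{\mathcal S}\Theta(h) = \widehat{\mathcal S}x^{\deg h}\Theta(h) = \widehat{\mathcal S}h^\Theta$, the last two equalities using that $x^{\deg h}$ and the nonzero scalar relating $x^{\deg h}\Theta(h)$ to $h^\Theta$ are units of $\widehat{\mathcal S}$. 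Uniqueness of the monic generator of a principal left ideal in the domain $S$ then forces $\hat h = h^\Theta \in R$, whence $(\mathcal R g)^\perp = \widehat{\mathcal R}h^\Theta = \widehat{\mathcal R}\hat h$.

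I expect the only genuinely delicate points to be checking that Proposition~\ref{prop:dual} and the machinery around $\Theta$ transfer cleanly from $(R, \mathcal R)$ to $(S, \mathcal S)$ --- essentially the observation that the construction is formally the same --- and the descent step that keeps $h$, and hence $\hat h$, inside $R$; the remainder is routine bookkeeping with monic generators and degrees.
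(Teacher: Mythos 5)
Your proof is correct and follows essentially the same route as the paper: Proposition \ref{prop:dual} applied in the \( \mathcal S \)-setting combined with Lemma \ref{lemma:35ish} for the first claim, and a descent to \( R \) via \( h \in R \) for the second. The only (immaterial) difference is in how \( \hat h \) is recognized as an element of \( R \): the paper computes the common monic generator as \( \gcrd{\Theta(h), x^n - \lambda^{-1}} \), which lies in \( R \) because the Euclidean algorithm stays in \( R \), whereas you identify \( \hat h \) with \( h^\Theta \in R \) by uniqueness of monic generators — both are routine and equivalent descents.
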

\begin{proof}
	By Proposition \ref{prop:dual}, \( (\mathcal S g)^\perp = \widehat{\mathcal S} \Theta(h) \), which by Lemma \ref{lemma:35ish} equals \(\widehat{\mathcal S} \hat h \).

	Now assume that \( g \in R \). As \( hg = x^n - \lambda \in R \), this implies that \( h \in R \) and \( \Theta(h) \in R \). Proposition \ref{prop:dual} also proves that \( (\mathcal R g)^\perp = \widehat{\mathcal R} \Theta(h) \). The monic, minimum degree generators as left ideals of \( \widehat{\mathcal R} \Theta(h) \) and \( \widehat{\mathcal S} \Theta(h) \) can both be computed as \( \gcrd{\Theta(h), x^n - \lambda} \), which is an element in \( R \) and is known to be \( \hat h \) in the second case. Hence, \( (\mathcal R g)^\perp = \widehat{\mathcal R} \hat h \).
\end{proof}

\begin{remark}
	By \cite[Theorem 4.5]{LamLeroy2000}, a skew polynomial is a W-polynomial if and only if it is decomposable as a least common \emph{right} multiple of linear skew polynomials. As a result, whether a decomposition of any left divisor \( h \) of \( x^n - \lambda \) exists, as required in this result, is also given by the equivalent statements in Proposition \ref{prop:decomposition}.
\end{remark}

\section{Constructing LCPs with BCH-like codes}\label{LCP-BCH}

In this section we provide a method for constructing skew constacyclic BCH-like codes with designed distance, as well as linear complementary pairs of such codes with designed security parameter, as per Definition \ref{defn:secparam}. We keep the same framework from previous sections.

\begin{lemma}\label{rightrootsR}
Let \(g \in R\) and let \(\gamma \in L\). If \(\gamma\) is a right root of \(g\), then \(\theta^\mu(\gamma)\) is also a right root of \(g\).
\end{lemma}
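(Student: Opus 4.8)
The plan is to exploit the explicit right-evaluation formula \eqref{eq:rightev}, which says that $\gamma$ is a right root of $g = \sum_i g_i x^i \in R$ if and only if $\sum_i g_i \norm[\sigma]{i}(\gamma) = 0$. Since the coefficients $g_i$ lie in $F$, they are fixed by $\sigma^\mu$, so if I can show that $\norm[\sigma]{i}(\theta^\mu(\gamma)) = \sigma^{?}\!\left(\norm[\sigma]{i}(\gamma)\right)$ for a suitable power of $\sigma$ — or more precisely that applying $\theta^\mu$ permutes the truncated norms in a way compatible with the vanishing of the sum — then the same linear combination vanishes for $\theta^\mu(\gamma)$. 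The key computational identity I would first establish is that $\theta^\mu$ commutes with the truncated norm $\norm[\sigma]{i}$ in the sense that $\norm[\sigma]{i}(\theta^\mu(\gamma)) = \theta^\mu\!\left(\norm[\sigma]{i}(\gamma)\right)$; this follows because $\theta^\mu$ is a field automorphism of $L$ that commutes with $\theta$ (powers of the same automorphism commute), and $\theta$ restricted to $F$ is $\sigma$, while $\norm[\sigma]{i}(\gamma) = \gamma\theta(\gamma)\cdots\theta^{i-1}(\gamma)$ viewed inside $L$.

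First I would record that, since $\gamma$ is a right root of $g \in R \subseteq S$ and $g$ has coefficients in $F \subseteq L$, formula \eqref{eq:rightev} (applied in $S$, using $\theta$ in place of $\sigma$, which agrees on the relevant computation because the $g_i \in F$ and $\theta_{|F} = \sigma$) gives $\sum_i g_i \norm[\sigma]{i}(\gamma) = 0$ in $L$. Next I would apply the automorphism $\theta^\mu : L \to L$ to this equation: $\theta^\mu$ is $L$-additive, so $\sum_i \theta^\mu(g_i)\,\theta^\mu\!\left(\norm[\sigma]{i}(\gamma)\right) = 0$. Then I use the two facts just mentioned: $\theta^\mu(g_i) = \sigma^\mu(g_i) = g_i$ because $g_i \in F = L^{\theta^\mu}$ (this is exactly the equality $F = L^{\theta^\mu}$ proved in the paragraph after Definition \ref{degreesextension}), and $\theta^\mu\!\left(\norm[\sigma]{i}(\gamma)\right) = \norm[\sigma]{i}(\theta^\mu(\gamma))$ by the commutation of $\theta^\mu$ with $\theta$. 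Substituting gives $\sum_i g_i \norm[\sigma]{i}(\theta^\mu(\gamma)) = 0$, which by \eqref{eq:rightev} again means precisely that $\theta^\mu(\gamma)$ is a right root of $g$.

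I expect the only real point requiring care — the "main obstacle," though it is minor — is justifying the commutation $\theta^\mu\!\left(\norm[\sigma]{i}(\gamma)\right) = \norm[\sigma]{i}(\theta^\mu(\gamma))$ and, relatedly, making sure the evaluation formula \eqref{eq:rightev} is being applied in the right ring: strictly speaking \eqref{eq:rightev} is stated for $g \in R$ with $\sigma$, but here the candidate root $\gamma$ lies in $L$, not in $F$, so I need the version of the formula valid in $S = L[x;\theta]$. Since $R \subseteq S$ canonically and $\theta_{|F} = \sigma$, the polynomial $g$ has the same coefficients in either ring and the truncated norm $\norm[\sigma]{i}$ evaluated at $\gamma \in L$ coincides with $\norm[\theta]{i}(\gamma) = \gamma\theta(\gamma)\cdots\theta^{i-1}(\gamma)$ because $\theta$ and $\sigma$ agree on $F$ — but $\gamma$ itself need not be in $F$, so I should simply write $\norm[\theta]{i}$ throughout when working in $S$. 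The commutation identity then reduces to the elementary fact $\theta^\mu(\theta^j(\gamma)) = \theta^j(\theta^\mu(\gamma))$, valid since powers of $\theta$ commute, applied termwise to the product defining the truncated norm and combined with the multiplicativity of $\theta^\mu$. Once these bookkeeping points are in place the proof is a two-line computation.
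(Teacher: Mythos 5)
Your proof is correct and is essentially the paper's own argument: apply $\theta^\mu$ to the right-evaluation identity $\sum_i g_i \norm[\theta]{i}(\gamma)=0$ in $S$, use $F=L^{\theta^\mu}$ to fix the coefficients, and use that $\theta^\mu$ commutes with the truncated norm $\norm[\theta]{i}$ to conclude $\sum_i g_i \norm[\theta]{i}(\theta^\mu(\gamma))=0$. Your extra care about working with $\norm[\theta]{i}$ in $S$ rather than $\norm[\sigma]{i}$ is exactly the right bookkeeping and matches what the paper does implicitly.
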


\begin{proof}
If \(g = \sum_i g_i x^i\), \(\gamma\) is a right root of \(g\) if and only if \(\sum_i g_i \norm[\theta]{i}(\gamma) = 0\) by \eqref{eq:rightev}. Since \(g_i \in F = L^{\theta^{\mu}}\),
\[
0 = \theta^\mu\left( \sum_i g_i \norm[\theta]{i}(\gamma) \right) = \sum_i \theta^{\mu}(g_i) \norm[\theta]{i}(\theta^{\mu}(\gamma)) = \sum_i g_i \norm[\theta]{i}(\theta^{\mu}(\gamma)).
\]
Therefore, \(\theta^\mu(\gamma)\) is also a right root of \(g\).
\end{proof}

As a consequence of Lemma \ref{rightrootsR}, if \(\gamma \in L\) is a right root of \(g\) it follows that
\[
\lclm{x - \theta^{i\mu}(\gamma) ~|~ 0 \leq i \leq s-1} \rdiv g.
\]
These elements are the building blocks of a decomposition of \(g\) in \(R\) as the following lemma shows.

\begin{lemma}\label{lclminR}
For all \(\gamma \in L\), \(\lclm{x - \theta^{i\mu}(\gamma) ~|~ 0 \leq i \leq s-1} \in R\).
\end{lemma}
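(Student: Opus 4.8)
The plan is to show that the monic skew polynomial $p = \lclm{x - \theta^{i\mu}(\gamma) ~|~ 0 \leq i \leq s-1}$ is fixed by the natural extension of the Frobenius-type automorphism $\theta^\mu$ to the skew polynomial ring $S = L[x;\theta]$, and then to invoke the identification of $R$ with the fixed subring. First I would recall that $\theta^\mu \in \Gal(L/F)$ has order $s$, so it induces a ring automorphism of $S = L[x;\theta]$ acting coefficientwise and fixing $x$ (this is well-defined since $\theta^\mu$ commutes with $\theta$, hence with the commutation rule $xa = \theta(a)x$); call this automorphism $\tau$. The fixed subring $S^\tau$ consists precisely of the skew polynomials all of whose coefficients lie in $L^{\theta^\mu} = F$, that is, $S^\tau = F[x;\theta] = F[x;\sigma] = R$. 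So it suffices to prove $\tau(p) = p$.

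The key step is to observe that $\tau$ permutes the set of linear factors $\{x - \theta^{i\mu}(\gamma) : 0 \le i \le s-1\}$: indeed $\tau(x - \theta^{i\mu}(\gamma)) = x - \theta^\mu(\theta^{i\mu}(\gamma)) = x - \theta^{(i+1)\mu}(\gamma)$, and since $\theta$ has order $n = s\mu$, the exponent $(i+1)\mu$ is read modulo $n$, i.e. modulo $s\mu$, so $i+1$ is taken modulo $s$; thus $\tau$ cyclically permutes these $s$ linear polynomials. Now I would use the characterization of the least common left multiple as the monic generator of the intersection of left ideals, $Sp = \bigcap_{i} S(x - \theta^{i\mu}(\gamma))$. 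Applying the ring automorphism $\tau$ to both sides, $\tau(Sp) = S\tau(p)$ (since $\tau(S) = S$), while $\tau\left(\bigcap_i S(x-\theta^{i\mu}(\gamma))\right) = \bigcap_i S\tau(x-\theta^{i\mu}(\gamma)) = \bigcap_i S(x-\theta^{(i+1)\mu}(\gamma)) = \bigcap_i S(x-\theta^{i\mu}(\gamma)) = Sp$ because $\tau$ merely reindexes the intersection. Hence $S\tau(p) = Sp$, and since both $\tau(p)$ and $p$ are monic (as $\tau$ fixes $x$ and thus preserves the leading coefficient $1$), they generate the same left ideal and must be equal: $\tau(p) = p$, so $p \in S^\tau = R$.

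I do not expect any serious obstacle here; the only points requiring a little care are the well-definedness of $\tau$ on $S$ (needing $\theta^\mu$ to commute with $\theta$, which holds automatically) and the identification $L^{\theta^\mu} = F$, which is exactly the equality $F = L^{\theta^\mu}$ established in the paragraph following Definition \ref{degreesextension}. One could alternatively argue more computationally via the coefficient formula for $\lclm{\cdot}$, but the Galois-descent argument above is cleaner and avoids grinding through the Euclidean algorithm. A further alternative, if one prefers to avoid introducing $\tau$ explicitly, is to note that $p$ is the minimal-degree monic left annihilator-type generator and that $\theta^\mu$ applied to its defining root set is a bijection, then conclude by uniqueness of monic generators; this is essentially the same proof phrased without naming the automorphism.
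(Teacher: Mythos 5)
Your proof is correct and follows essentially the same route as the paper: extend the field automorphism coefficientwise to a ring automorphism of $S$ (the paper extends $\theta$ and takes its $\mu$-th power, you define the extension of $\theta^\mu$ directly), observe that $f \in R$ exactly when it is fixed, and conclude because the automorphism cyclically permutes the linear factors and hence fixes their least common left multiple. Your explicit justification via the intersection $Sp = \bigcap_i S(x-\theta^{i\mu}(\gamma))$ and uniqueness of monic generators is a slightly more detailed version of the step the paper states directly.
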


\begin{proof}
The automorphism \(\theta : L \to L\) can be extended to a bijection
\begin{equation}\label{eq:thetaL}
\begin{split}
\theta : L[x;\theta] &\to L[x;\theta] \\
\textstyle \sum_i a_i x^i &\mapsto \textstyle \sum_i \theta(a_i) x^i.
\end{split}
\end{equation}
From the equality \(F = L^{\theta^\mu}\), it follows that for each \(f \in S = L[x;\theta]\), \(f \in R = F[x;\sigma]\) if and only if \(\theta^\mu(f) = f\). Since for all \(a \in L\)
\[
\theta(xa) = \theta(\theta(a) x) = \theta^2(a) x = x \theta(a) = \theta(x) \theta(a),
\]
\cite[Proposition 2.4]{Goodearl/Warfield:2004} implies that \eqref{eq:thetaL} is a ring automorphism. This means that \(\lclm{x - \theta^{i\mu}(\gamma) ~|~ 0 \leq i \leq s-1} \in R\) as
\begin{multline*}
\theta^{\mu} \left( \lclm{x - \theta^{i\mu}(\gamma) ~|~ 0 \leq i \leq s-1} \right) = \lclm{x - \theta^{(i+1)\mu}(\gamma) ~|~ 0 \leq i \leq s-1} \\
= \lclm{x - \theta^{i\mu}(\gamma) ~|~ 1 \leq i \leq s} = \lclm{x - \theta^{i\mu}(\gamma) ~|~ 0 \leq i \leq s-1},
\end{multline*}
since the order of \(\theta\) is \(n = s \mu\).
\end{proof}

The elements \(\{\gamma, \theta^{\mu}(\gamma), \dots, \theta^{(s-1)\mu}(\gamma)\}\) are not necessarily P-independent, so the degree of \(\lclm{x - \theta^{i\mu}(\gamma) ~|~ 0 \leq i \leq s-1}\) could be strictly less than \(s\).

We will apply Theorem \ref{thm:LCP} starting with a decomposition of \(x^n - \lambda \in S\). This requires, by Proposition \ref{prop:decomposition}, some \(u \in L\) such that \(\lambda = \norm[\theta]{n}(u)\). Let \(\alpha \in L\) be such that \(\left\{ \theta^i(\alpha) \norm[\theta]{i}(u) ~|~ 0 \leq i \leq n-1 \right\}\) is a \(K\)-basis for \(L\). Such \(\alpha\) is called a cyclic vector since it is a generator of \(L\) as a \(K[x]\)-module, where the action is given by \(x a = \theta(a) u\). A cyclic vector can be easily obtained by a random search, see \cite[Appendix]{GNS21} for further details. Following \cite{LL88}, we denote by $a^b$, for each \(a,b \in L\), \(b \neq 0\), the $\theta$-conjugate of $a$ by $b$:
\begin{equation}\label{eq:conjugate}
	a^b = \theta(b) a b^{-1}.
\end{equation}
Within this section, the conjugate will always be with respect to $\theta$.
It follows from \cite[Theorem 5.3]{DL07} and \cite[Lemma 12]{GNS21} that
\begin{equation}\label{x^n-lambda_decomposition}
x^n - \lambda = \lclm{x - u^{\theta^i(\alpha) \norm[\theta]{i}(u)} ~|~ 0 \leq i \leq n-1},
\end{equation}
i.e., the elements \(\left\{ u^{\theta^i(\alpha) \norm[\theta]{i}(u)} ~|~ 0 \leq i \leq n-1 \right\}\) are P-independent. The next formula, which holds for any nonzero \( u, \alpha \in L \) and any \( i \in \NN \), eases the proofs of the forthcoming results.
\begin{equation}\label{theta^i(u^alpha)}
\begin{split}
u^{\theta^i(\alpha) \norm[\theta]{i}(u)} &= \theta \left( \theta^i(\alpha) \norm[\theta]{i}(u) \right) u \left( \theta^i(\alpha) \norm[\theta]{i}(u) \right)^{-1} \\
&= \theta^{i+1}(\alpha) \norm[\theta]{i}(\theta(u)) u \left( \theta^i(\alpha) \norm[\theta]{i}(u) \right)^{-1} \\
&= \theta^{i+1}(\alpha) \norm[\theta]{i+1}(u) \norm[\theta]{i}(u)^{-1} \theta^{i}(\alpha)^{-1} \\
&= \theta^{i+1}(\alpha) \theta^i(u) \theta^{i}(\alpha)^{-1} \\ &= \theta^i(u^\alpha).
\end{split}
\end{equation}

\begin{prop}\label{propBCHlike}
Let \(2 \leq \delta \leq \mu\), \(0 \leq r \leq n-1\) and
\[
g = \lclm{x - \theta^{i+j\mu}(u^\alpha) ~|~ r \leq i \leq r + \delta - 2, 0 \leq j \leq s-1}.
\]
Then \(g \in R\). Moreover, the skew \(\lambda\)-constacyclic code \(\mathcal{C} = \mathcal{R}g\) has \( F \)-dimension \(n - s\delta + s\) and bounded minimum Hamming distance \(\distance(\mathcal{C}) \geq \delta\).
\end{prop}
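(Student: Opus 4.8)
The plan is to prove the three claims — $g\in R$, $\dim_F\mathcal C=n-s\delta+s$, and $\distance(\mathcal C)\ge\delta$ — in that order, the last being the heart of the matter. \textbf{Step 1: $g\in R$.} Rather than decompose $g$ into factors lying in $R$, I would verify directly the criterion recorded in the proof of Lemma~\ref{lclminR}: an element $f\in S$ lies in $R$ iff $\theta^\mu(f)=f$, where $\theta$ acts on $S$ coefficientwise as in \eqref{eq:thetaL}. Since that map is a ring automorphism carrying $S(x-\beta)$ to $S(x-\theta(\beta))$, it sends a least common left multiple to the l.c.l.m.\ of the transformed generators, so
\[
\theta^\mu(g)=\lclm{x-\theta^{i+(j+1)\mu}(u^\alpha) ~|~ r\le i\le r+\delta-2,\ 0\le j\le s-1}.
\]
Because $\theta$ has order $n=s\mu$, the family $\{\theta^{i+(j+1)\mu}(u^\alpha)\}$ over this index set equals $\{\theta^{i+j\mu}(u^\alpha)\}$, whence $\theta^\mu(g)=g$ and $g\in R$. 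Each factor $x-\theta^{i+j\mu}(u^\alpha)$ right divides $x^n-\lambda$ by \eqref{x^n-lambda_decomposition} and \eqref{theta^i(u^alpha)} (its right roots are $n$-periodic), so $g\rdiv x^n-\lambda$ and $\mathcal C=\mathcal R g$ is a bona fide skew $\lambda$-constacyclic code.

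\textbf{Step 2: dimension.} Since $\dim_F\mathcal R g=n-\deg g$, it suffices to show $\deg g=s(\delta-1)$. By \eqref{x^n-lambda_decomposition} and \eqref{theta^i(u^alpha)} the set $\{\theta^k(u^\alpha) ~|~ 0\le k\le n-1\}$ is P-independent, and P-independence passes to subsets; so it is enough that the exponents $i+j\mu$ (for $r\le i\le r+\delta-2$, $0\le j\le s-1$) be pairwise distinct modulo $n$, for then $g$ is an l.c.l.m.\ of $(\delta-1)s$ linear polynomials with P-independent roots, hence of degree exactly $(\delta-1)s=s\delta-s$. Writing $i=r+a$ with $0\le a\le\delta-2\le\mu-2$, an equality $a+j\mu\equiv a'+j'\mu\pmod{s\mu}$ makes $a-a'$ a multiple of $\mu$ with $|a-a'|<\mu$, so $a=a'$ and then $j=j'$. (This is where $\delta\le\mu$ is used.)

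\textbf{Step 3: the distance bound.} This is a skew-constacyclic BCH bound (cf.\ \cite{BGU07,GLNN18,ALN21}) and is the main obstacle. Assume $0\ne c=\sum_{k=0}^{n-1}c_k x^k\in\mathcal C$ has Hamming weight $m\le\delta-1$, with $c_k\ne0$ precisely for $k$ in a set $\{k_1<\dots<k_m\}\subseteq\{0,\dots,n-1\}$; I aim for a contradiction. As $g\rdiv c$, every linear factor of $g$ right divides $c$; taking $j=0$, the elements $\theta^i(u^\alpha)$ are right roots of $c$ for $r\le i\le r+\delta-2$. Put $b_\kappa=\theta^\kappa(\alpha)\,\norm[\theta]{\kappa}(u)$ for $\kappa\ge0$, so $\theta^i(u^\alpha)=u^{b_i}$ by \eqref{theta^i(u^alpha)}. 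Using the right-root criterion \eqref{eq:rightev} in $S$, the conjugation identity $\norm[\theta]{k}(u^{b})=\theta^k(b)\,\norm[\theta]{k}(u)\,b^{-1}$, and the multiplicativity relation $\norm[\theta]{a+b}(u)=\norm[\theta]{a}(u)\,\theta^a(\norm[\theta]{b}(u))$ — which together give $\theta^k(b_i)\,\norm[\theta]{k}(u)=b_{k+i}$ — the condition ``$\theta^i(u^\alpha)$ is a right root of $c$'' collapses to
\[
\sum_{t=1}^{m}c_{k_t}\,b_{k_t+i}=0,\qquad r\le i\le r+\delta-2 .
\]
Keeping the $m$ equations $i=r,\dots,r+m-1$ and using $b_{k_t+i}=\norm[\theta]{i}(u)\,\theta^i(b_{k_t})$ to divide the $i$-th equation by the nonzero scalar $\norm[\theta]{i}(u)$, then applying $\theta^{-r}$, one obtains a homogeneous linear system with the nonzero solution $(\theta^{-r}(c_{k_t}))_t$ and coefficient matrix $\bigl(\theta^{p}(b_{k_t})\bigr)_{0\le p\le m-1,\,1\le t\le m}$. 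The crux is that this matrix is invertible: it is the Moore (skew-Vandermonde) matrix of $b_{k_1},\dots,b_{k_m}$, and these are $K$-linearly independent because they lie in $\{b_0,\dots,b_{n-1}\}$, which is a $K$-basis of $L$ by the choice of the cyclic vector $\alpha$. Nonsingularity of the Moore matrix of a $K$-linearly independent family of size $m\le[L:K]=n$ is classical — over finite fields because a nonzero $\theta$-linearized polynomial of $\theta$-degree $<m$ annihilates a $K$-subspace of dimension $<m$, and in general by an induction on $m$ exploiting that $\theta$ permutes $K$-bases. Invertibility forces all $c_{k_t}=0$, a contradiction; hence every nonzero codeword has weight $\ge\delta$.

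The real work sits in Step 3: turning the right-root conditions, via the truncated-norm conjugation and multiplicativity identities, into the striped system $\sum_t c_{k_t}b_{k_t+i}=0$ and then into a Moore matrix over $L$, after which the nonsingularity statement is standard. Steps 1 and 2 are short once one uses the ``$\theta^\mu$-fixed'' characterization of $R$ and the P-independence of $\{\theta^k(u^\alpha)\}_{0\le k\le n-1}$ coming from \eqref{x^n-lambda_decomposition}.
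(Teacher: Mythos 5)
Your proof is correct, and Steps 1 and 2 are essentially the paper's argument: the paper obtains \( g\in R \) by writing \( g=\lclm{g_i} \) with \( g_i=\lclm{x-\theta^{j\mu}(\theta^i(u^\alpha))\mid 0\le j\le s-1} \) and invoking Lemma \ref{lclminR}, whereas you apply the same ``\(\theta^\mu\)-fixed'' mechanism of that lemma directly to \( g \); the dimension count via P-independence of \( \{\theta^k(u^\alpha)\}_{0\le k\le n-1} \) and distinctness of the exponents modulo \( n \) is the same (you spell out the use of \( \delta\le\mu \), which the paper leaves implicit). Step 3, however, is a genuinely different route. The paper sandwiches \( \mathcal C\subseteq\mathcal S g\subseteq\mathcal S\overline g \) with \( \overline g=\lclm{x-\theta^i(u^\alpha)\mid r\le i\le r+\delta-2} \), reduces to \( r=0 \), recognizes \( \mathcal S\overline g \) as the skew Reed--Solomon code \( C_{(\varphi_u,\alpha,\delta)} \) of \cite{GNS21} through its right-evaluation parity-check matrix, and imports \( \distance(\mathcal S\overline g)=\delta \) from \cite[Theorem 13]{GNS21}. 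You instead prove the bound from scratch: the right-root conditions at \( \theta^i(u^\alpha) \) for \( r\le i\le r+\delta-2 \) become, after the telescoping identities for truncated norms, the system \( \sum_t c_{k_t}b_{k_t+i}=0 \), which you turn into a Moore (skew-Vandermonde) system in the \( K \)-linearly independent elements \( b_{k_1},\dots,b_{k_m} \) of the cyclic basis; this is essentially a self-contained re-proof of the relevant half of \cite[Theorem 13]{GNS21}, and it is exactly the same computation the paper performs to exhibit the parity-check matrix, carried one step further. What each buys: your argument avoids the external citation and works directly in \( \mathcal R \) without introducing the overcodes; the paper's is shorter and yields the exact distance \( \delta \) of the ambient skew RS code rather than only the lower bound. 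The one soft spot is your justification of the nonsingularity of the Moore matrix of \( K \)-independent elements for a general cyclic extension \( L/K \) (``induction exploiting that \( \theta \) permutes \( K \)-bases'' is only a sketch); this is cleanly repaired with the tools already in the paper, since a nontrivial left kernel would give a nonzero operator \( \sum_{p<m}a_p\theta^p \) vanishing on an \( m \)-dimensional \( K \)-subspace, contradicting the bound \( \dim_K\ker g(T_u)\le\deg g \) (with \( u=1 \)) from \cite[Corollary 5.4]{DL07} used in Lemma \ref{lemma:kgdim}.
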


\begin{proof}
Let \(g_i = \lclm{x - \theta^{j\mu}(\theta^i(u^\alpha)) ~|~ 0 \leq j \leq s-1}\) for each \(r \leq i \leq r + \delta - 2\).
By Lemma \ref{lclminR}, \(g_i \in R\), so \( g \in R \) since \(g = \lclm{g_i ~|~ r \leq i \leq r + \delta - 2}\).
By \cite[Theorem 5.3]{DL07} and \eqref{theta^i(u^alpha)}, \(\left\{ \theta^i(u^\alpha) ~|~ 0 \leq i \leq n-1 \right\}\) is a P-independent set, so \(\deg(g) = s(\delta-1)\), which implies that \(\dim_F(\mathcal{C}) = n - s \delta + s\).

Let
\[
\overline{g} = \lclm{x - \theta^{i}(u^\alpha) ~|~ r \leq i \leq r + \delta - 2}, \mathcal{D} = \mathcal{S}g, \overline{\mathcal{D}} = \mathcal{S} \overline{g}.
\]
Then \(\mathcal{C} \subseteq \mathcal{D}\) as a subfield subcode, and \(\mathcal{D} \subseteq \overline{\mathcal{D}}\), since \(\overline{g} \rdiv g\). It follows that
\begin{equation}\label{eqBCHlike3}
\distance(\mathcal{C}) \geq \distance(\overline{\mathcal{D}}).
\end{equation}
Observe that \(\left\{ \theta^i(\alpha) \norm[\theta]{i}(u) ~|~ 0 \leq i \leq n-1 \right\}\) is a \(K\)-basis for \(L\) if and only if the set \(\left\{ \theta^r(\theta^i(\alpha) \norm[\theta]{i}(u)) ~|~ 0 \leq i \leq n-1 \right\}\) is also a \(K\)-basis. Hence, up to replacing \(\alpha\) and \(u\) by \(\theta^r(\alpha)\) and \(\theta^r(u)\) we do not lose generality if we assume \(r = 0\). Following the notation in \cite{GNS21}, let \(\varphi_u : L \to L\) be the additive map defined by \(\varphi_u(a) = \theta(a) u\). Straightforward computations using \eqref{theta^i(u^alpha)} show that
\[
\varphi_u^j(\alpha) = \theta^j(\alpha) \norm[\theta]{j}(u),
\]
\[\theta^i \left( u^{\varphi_u^j(\alpha)} \right) = \varphi_u^{i+j}(\alpha)^{-1} \varphi_u^{i+j+1}(\alpha)
\]
and
\[
\norm[\theta]{i}\left( u^{\varphi_u^j(\alpha)} \right) = \varphi_u^j(\alpha)^{-1} \varphi_u^{i+j}(\alpha).
\]
The description of \(\overline{g}\) allows the construction of a parity check matrix for \(\overline{\mathcal{D}}\) using right evaluations. Such a matrix is
\[
\Big[ \norm[\theta]{i} \left( u^{\varphi_u^j(\alpha)} \right) \Big]_{\substack{0 \leq i \leq n-1 \\ 0 \leq j \leq \delta-2}} = \Big[ \varphi_u^j(\alpha)^{-1} \varphi_u^{i+j}(\alpha) \Big]_{\substack{0 \leq i \leq n-1 \\ 0 \leq j \leq \delta-2}}.
\]
Since
\[
\Big[ \varphi_u^j(\alpha)^{-1} \varphi_u^{i+j}(\alpha) \Big]_{\substack{0 \leq i \leq n-1 \\ 0 \leq j \leq \delta-2}} = \Big[ \varphi_u^{i+j}(\alpha) \Big]_{\substack{0 \leq i \leq n-1 \\ 0 \leq j \leq \delta-2}} \begin{bmatrix} \varphi_u^0(\alpha)^{-1} & & \\ & \ddots & \\ & & \varphi_u^{\delta-2}(\alpha)^{-1} \end{bmatrix},
\]
the matrix
\[
\Big[ \varphi_u^j(\alpha)^{-1} \varphi_u^{i+j}(\alpha) \Big]_{\substack{0 \leq i \leq n-1 \\ 0 \leq j \leq \delta-2}} = \Big[ \varphi_u^{i+j}(\alpha) \Big]_{\substack{0 \leq i \leq n-1 \\ 0 \leq j \leq \delta-2}}
\]
is also a parity check matrix for \(\overline{\mathcal{D}}\). Therefore \(\overline{\mathcal{D}} = C_{(\varphi_u,\alpha,\delta)}\) in the notation of \cite[Definition 2]{GNS21}. By \cite[Theorem 13]{GNS21}, \(\distance(\overline{\mathcal{D}}) = \delta\) and the proposition follows from \eqref{eqBCHlike3}.
\end{proof}

\begin{defn}\label{defnBCHlike}
Let \(2 \leq \delta \leq \mu\), \(0 \leq r \leq n-1\),
\[
g = \lclm{x - \theta^{i+j\mu}(u^\alpha) ~|~ r \leq i \leq r + \delta - 2, 0 \leq j \leq s-1},
\]
and \(\mathcal{C} = \mathcal{R}g\). The \([n, n - s\delta + s]\)-code \(\mathcal{C}\) is called a skew BCH \(\lambda\)-constacyclic code of designed distance \(\delta\).
\end{defn}

\begin{remark}\label{remark:skewRS}
	Proposition \ref{propBCHlike} and Definition \ref{defnBCHlike} can also be applied to a context where \( \theta \) has order \( n \), and therefore \( s = 1, \mu = n \). In that case, the resulting code is an \( [n, n - \delta + 1] \) skew \( \lambda \)-constacyclic code of designed Hamming distance \( \delta \) -- an MDS code, as the value \( \delta \) for the lower bound given by Proposition \ref{propBCHlike} is equal to the one resulting from the Singleton upper bound. These MDS codes generalize skew Reed--Solomon codes as described in \cite{GLN17pgz}, which match the case where \( u = 1 \) and therefore \( \lambda = 1 \).
\end{remark}

We shall now extend \cite[Theorem 37]{GLN19t} for skew BCH constacyclic codes by showing that the dual of a skew BCH \( \lambda^{-1} \)-constacyclic code (see Proposition \ref{prop:dual}) whose parameters, including a BCH lower bound on its Hamming distance, can be known. The following result, which generalizes \cite[Lemma 36]{GLN19t}, allows us to do so.

\begin{lemma}\label{lemma:36ish}
	Let \( \gamma \in L \) be such that
	\begin{equation}\label{eq:gamma36ish}
		(x - \gamma)\lclm{x - \theta(u^\alpha), \dots, x - \theta^{n-1}(u^\alpha)} = x^n - \lambda
	\end{equation}
	and let \( T, T^c \subseteq \{ 0, \dots, n-1 \} \) be such that \( \{ 0, \dots, n-1 \} \) is the disjoint union of \( T \) and \( T^c \). Then,
	\begin{equation}\label{eq:TcT}
		\lcrm{x - \theta^i(\gamma) ~|~ i \in T^c} \lclm{x - \theta^i(u^\alpha) ~|~ i \in T} = x^n - \lambda.
	\end{equation}
\end{lemma}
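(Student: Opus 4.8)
The plan is to exploit the fact, established in Proposition \ref{prop:decomposition} and its companions, that when $x^n - \lambda$ is a W-polynomial the lattice of its monic factors corresponds (via the sets $E(g,u)$, or directly via sets of linear factors) to subspaces, and that least common left and right multiples of linear factors behave like joins and meets. Concretely, equation \eqref{eq:gamma36ish} exhibits $x^n - \lambda$ as $\lclm{x - \theta^i(u^\alpha) \mid 0 \le i \le n-1}$ after absorbing the leading linear factor: indeed, applying Lemma \ref{lemma:ghhg} to \eqref{eq:gamma36ish} shows that $x - \gamma$ is also a right root of $x^n - \lambda$, and since $\{ \theta^i(u^\alpha) \mid 0 \le i \le n-1 \}$ is P-independent (by \eqref{x^n-lambda_decomposition} and \eqref{theta^i(u^alpha)}, the set $\{u^{\theta^i(\alpha)\norm[\theta]{i}(u)}\} = \{\theta^i(u^\alpha)\}$ is P-independent), we have $\lclm{x - \theta^i(u^\alpha) \mid 0 \le i \le n-1} = x^n - \lambda$. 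So $\gamma$ is forced to be the extra right root closing up the P-independent set $\{\theta^i(u^\alpha) \mid 1 \le i \le n-1\}$ to a full set of $n$ right roots; but more importantly, the whole family $\{\gamma\} \cup \{\theta^i(u^\alpha) \mid 0 \le i \le n-1\}$ consists of right roots of $x^n - \lambda$, each a root of multiplicity one in the W-polynomial sense.

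First I would set $h_1 = \lcrm{x - \theta^i(\gamma) \mid i \in T^c}$ and $h_2 = \lclm{x - \theta^i(u^\alpha) \mid i \in T}$ and prove that $\deg h_1 + \deg h_2 = n$. The degree of $h_2$ is $|T|$ because the $\theta^i(u^\alpha)$ are P-independent. For $h_1$, the point is that $\{\theta^i(\gamma) \mid i \in T^c\}$ is also P-independent: one shows $x^n - \lambda = \lclm{x - \theta^i(\gamma) \mid 0 \le i \le n-1}$ by the same argument as for $u^\alpha$ — $\gamma$ is a right root, hence by Lemma \ref{rightrootsR} so is $\theta^{j\mu}(\gamma)$, and one checks that the full $\theta$-orbit $\{\theta^i(\gamma)\}$ gives $n$ right roots whose lclm is monic of degree $\le n$ dividing $x^n - \lambda$, forcing equality and P-independence; then any sub-orbit indexed by $T^c$ is P-independent, so $\deg h_1 = |T^c|$, giving $\deg h_1 + \deg h_2 = |T^c| + |T| = n$. (Here I use that P-independence passes to subsets, which follows from \eqref{degequations} and the definition.)

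Next I would show $h_1 h_2 = x^n - \lambda$, or equivalently by Lemma \ref{lemma:ghhg} that $h_2 h_1 = x^n - \lambda$. Since the degrees add up to $n$ and $h_1 h_2$ is monic, it suffices to prove $h_1 h_2$ right divides $x^n - \lambda$, i.e.\ that every right root of $h_1 h_2$ is a right root of $x^n - \lambda$ — but in fact, working in the W-polynomial lattice, it is cleaner to argue via left divisibility: $h_2$ is a monic factor of $x^n - \lambda$ (being an lclm of some of its linear right-divisors), so write $x^n - \lambda = p\, h_2$ for the cofactor $p$, necessarily monic of degree $|T^c|$. The remaining task is to identify $p$ with $h_1$. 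From \eqref{eq:gamma36ish} and Lemma \ref{lemma:35ish}-type manipulations (or directly: the left roots of $p$ are $\theta(\gamma_i^{-1})$-type expressions built from the right roots $\theta^i(u^\alpha)$, $i \in T$, that were \emph{removed}), one computes the linear left factors of $p$; comparing with $h_1$ and using Corollary \ref{cor:kgh} (two monic factors of $x^n - \lambda$ with the same associated data are equal) closes the identification.

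The main obstacle I expect is the bookkeeping in this last identification step: matching the complementary factor $p$ of $h_2$ with the specific right-multiple form $\lcrm{x - \theta^i(\gamma) \mid i \in T^c}$ requires carefully tracking how the single extra root $\gamma$ in \eqref{eq:gamma36ish} propagates through the orbit structure under $\theta$, and reconciling the "lclm of removed factors is a right divisor, so its complementary left divisor is an lcrm" duality. In \cite{GLN19t}, \cite[Lemma 36]{GLN19t} handled the $s=1$ analogue; here the extra layer is that the orbit of $\gamma$ under $\theta$ interacts with the $\mu$-periodicity from Lemma \ref{rightrootsR}, but since \eqref{eq:gamma36ish} already presents a \emph{full} length-$n$ decomposition (not just a length-$s$ building block), this periodicity is automatically respected and the argument reduces to the same combinatorial splitting as in the cyclic case, now phrased entirely inside $S$ rather than $R$.
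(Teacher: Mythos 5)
Your overall skeleton---write $x^n-\lambda = p\,P_T$ with $P_T=\lclm{x-\theta^i(u^\alpha) ~|~ i\in T}$, count degrees, and identify the cofactor $p$ with $\lcrm{x-\theta^i(\gamma) ~|~ i\in T^c}$---is the same as the paper's, but the two steps carrying the real content are not established. First, your degree count for $h_1=\lcrm{x-\theta^i(\gamma) ~|~ i\in T^c}$ rests on the claim that the orbit $\{\theta^i(\gamma)\}_{0\le i\le n-1}$ is P-independent because it consists of $n$ right roots of $x^n-\lambda$ whose lclm ``has degree $\le n$ and divides $x^n-\lambda$, forcing equality.'' That inference is invalid: a monic right divisor of degree $\le n$ need not equal $x^n-\lambda$, and having $n$ distinct right roots does not force the lclm to have degree $n$ --- that is exactly the definition of P-independence, which for the orbit of $\gamma$ is nowhere given. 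In the paper this fact (phrased for $\gamma'=\theta(\gamma)^{-1}$) is deduced in Proposition \ref{prop:gammap} \emph{from} Lemma \ref{lemma:36ish} applied with $T=\emptyset$, so assuming it in the proof of the lemma is circular; equivalently, $\gamma=u^{\beta}$ for some $\beta$ by Hilbert 90, but nothing in \eqref{eq:gamma36ish} tells you a priori that this $\beta$ is a cyclic vector. You also pass silently from (lclm-based) P-independence of a subset to $\deg\lcrm{x-\theta^i(\gamma) ~|~ i\in T^c}=|T^c|$; the left-handed analogue is a separate statement needing its own justification.

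Second, the identification of the cofactor $p$ with $h_1$ is exactly where the work lies, and you leave it as ``bookkeeping,'' invoking Lemma \ref{lemma:35ish} and Corollary \ref{cor:kgh}, which do not apply as stated (Corollary \ref{cor:kgh} compares monic right divisors of $x^n-\lambda$ via $E(\cdot,u)$, not left factors or least common right multiples of conjugates of $\gamma$). The paper supplies precisely the missing arguments without ever using P-independence of the $\theta^i(\gamma)$: applying the automorphism $\theta$ of $S$ to $(x-\gamma)N_0=x^n-\lambda$ gives $(x-\theta^k(\gamma))N_k=x^n-\lambda$ with $N_k=\lclm{x-\theta^j(u^\alpha) ~|~ j\ne k}$; since $P_T \rdiv N_k$ for $k\in T^c$, cancelling $P_T$ shows each $x-\theta^k(\gamma)$ left divides the monic cofactor $h$ of $P_T$, hence $\lcrm{x-\theta^i(\gamma) ~|~ i\in T^c} \ldiv h$; and the equality $\deg\lcrm{x-\theta^i(\gamma) ~|~ i\in T^c}=|T^c|$ is then proved by induction on $|T^c|$, the key input being the P-independence of the $\theta^i(u^\alpha)$ (namely that $x-\theta^k(u^\alpha)$ does not right divide $N_k$), not of the $\theta^i(\gamma)$. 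Any rescue of your route would have to reprove exactly these points, so as written the proposal has a genuine gap.
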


Note that a unique \( \gamma \) satisfying \eqref{eq:gamma36ish} exists, as \( \lclm{x - \theta(u^\alpha), \dots, x - \theta^{n-1}(u^\alpha)} \) has degree \( n - 1 \) and right divides \( \lclm{x - \theta^i(u^\alpha) ~|~ 0 \le i \le n-1} \), which equals \( x^n - \lambda \) as a result of \eqref{x^n-lambda_decomposition} and \eqref{theta^i(u^alpha)}.

\begin{proof}
	We define \( N_i = \lclm{x - \theta^j(u^\alpha) ~|~ 0 \le j \le n-1, j \ne i} \) for every \( 0 \le i \le n - 1 \).
	Equation \eqref{eq:gamma36ish} can be written as \( (x - \gamma)N_0 = x^n - \lambda \) which, by applying \( \theta^i \) (as defined in \eqref{eq:thetaL}), results in the identities \( (x - \theta^i(\gamma)) N_i = x^n - \lambda \) for all \( 0 \le i \le n - 1 \).

	We now define, for any set \( A \subseteq \{ 0, \dots, n-1 \} \), the skew polynomials \( P_A = \lclm{x - \theta^i(u^\alpha) ~|~ i \in A} \) and \( Q_A = \lcrm{x - \theta^i(\gamma) ~|~ i \in A} \in S \). This allows us to write \eqref{eq:TcT} as \( Q_{T^c} P_T = x^n - \lambda \).

	\( P_T \) equals \( x^n - \lambda \) if and only if \( T^c \) is empty, in which case \eqref{eq:TcT} holds. Otherwise, there exists some monic, nonconstant \( h \in S \) such that \( hP_T = x^n - \lambda \) as, for any \( k \in T^c \), \( P_T \rdiv N_k \rdiv x^n - \lambda \). Thus, for each \( k \in T^c \) there exists some \( p_k \in S \) such that \( N_k = p_k P_T \), hence \( (x - \theta^k(\gamma)) p_k P_T = x^n - \lambda = h P_T \), so  \( (x - \theta^k(\gamma)) p_k = h \). As a result, \( (x - \theta^k(\gamma)) \ldiv h \), therefore \( Q_{T^c} \ldiv h \). As, in addition, both \( h \) and \( Q_{T^c} \) are monic and the degree of \( h \) is \( n - \deg(P_T) = n - |T| = |T^c| \) where \( |A| \) denotes the cardinality of the set \( A \), if we show that \( \deg(Q_{T^c}) = |T^c| \) we will conclude that \( h = Q_{T^c} \) and therefore \eqref{eq:TcT} holds. We will do so by induction on the number of elements in \( T^c \).

	If \( T^c \) is empty, then \( Q_{T^c} = 1 \) and its degree is \( 0 \).
	Assume that \( T^c = \bar T^c \cup \{ k \} \) and \( \bar T^c \) is such that \( \deg(Q_{\bar T^c}) = |\bar T^c| \), which means that \( Q_{\bar T^c} P_{\bar T} = x^n - \lambda \) for a matching \( \bar T \).
	Either \( \deg(Q_{T^c}) = 1 + \deg(Q_{\bar T^c}) \) or \( \deg(Q_{T^c}) = \deg(Q_{\bar T^c}) \). If the latter is true, then \( Q_{\bar T^c} = Q_{T^c} \) as they are both monic of the same degree and \( Q_{\bar T^c} \ldiv Q_{T^c} \). This means that \( Q_{\bar T^c} = (x - \theta^{k}(\gamma)) p \) for some \( p \in S \), so \( (x - \theta^{k}(\gamma)) p P_{\bar T} = Q_{\bar T^c} P_{\bar T} = x^n - \lambda  = (x - \theta^{k}(\gamma)) N_k \).
	Hence, \( p P_{\bar T} = N_k \), so \( x - \theta^i(u^\alpha) \rdiv N_k \) for all \( i \in \bar T \) while \( x - \theta^k(u^\alpha) \) does not right divide \( N_k \). This means that \( k \notin \bar T \) and therefore \( k \) was already an element of \( \bar T^c \). Thus, the degree of \( Q_{T^c} \) is \( 0 \) when \( T^c \) is empty and is always increased by \( 1 \) when a new element in \( \{ 0, \dots, n-1 \} \) is taken from \( T \) to \( T^c \), so \( \deg(Q_{T^c}) = |T^c| \) and \eqref{eq:TcT} is true for any cardinality of \( T^c \) from \( 0 \) to \( n \).
\end{proof}

In the next results, \(\mathcal{S}\), \(\widehat{\mathcal{S}}\) and \(\Theta\) are the ones in \eqref{eq:ThetaS}.

\begin{prop}\label{prop:gammap}
	Let \( \gamma \in L \) be such that \eqref{eq:gamma36ish} holds.
	Then, there exists some \( \bar \alpha \in L \) such that \(\left\{ \theta^i(\bar \alpha) \norm[\theta]{i}(u^{-1}) ~|~ 0 \leq i \leq n-1 \right\}\) is a \(K\)-basis for \(L\) and \( \gamma' = \theta(\gamma)^{-1} \) is equal to \( (u^{-1})^{\bar \alpha} \). In particular, skew BCH \( \lambda^{-1} \)-constacyclic codes, as described in Definition \ref{defnBCHlike}, can be constructed from \( \gamma' \).
\end{prop}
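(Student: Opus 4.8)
The plan is to exhibit $\bar\alpha$ by Hilbert's~90 Theorem and then verify that it is a cyclic vector by transporting a factorization of $x^n-\lambda$ through the anti\-isomorphism $\Theta$ of \eqref{eq:ThetaS}. First I would set $\gamma' = \theta(\gamma)^{-1}$ and collect the basic facts. From \eqref{eq:gamma36ish}, $x-\gamma$ left divides $x^n-\lambda$, so Lemma~\ref{lemma:35ish} gives that $x-\gamma'$ right divides $x^n-\lambda^{-1}$; in particular $\norm{L/K}(\gamma')=\lambda^{-1}$. Since $\norm{L/K}(\gamma)\in K$, each $\theta^i(\gamma)$ is again a right root of $x^n-\lambda$, hence a left root by Lemma~\ref{lemma:ghhg}, so Lemma~\ref{lemma:35ish} applies to every $x-\theta^i(\gamma)$ and yields $\widehat{\mathcal S}\,\Theta(x-\theta^i(\gamma)) = \widehat{\mathcal S}(x-\theta^i(\gamma'))$, where $\theta^i(\gamma') = \theta(\theta^i(\gamma))^{-1}$.

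The core step is to show that $\lclm{x-\theta^i(\gamma') ~|~ 0 \le i \le n-1} = x^n-\lambda^{-1}$, equivalently that $\{\theta^i(\gamma') ~|~ 0\le i\le n-1\}$ is P\-independent. By Lemma~\ref{lemma:36ish} with $T=\emptyset$ we have $\lcrm{x-\theta^i(\gamma) ~|~ 0\le i\le n-1} = x^n-\lambda$, so $\bigcap_{i=0}^{n-1}(x-\theta^i(\gamma))\mathcal S = 0$ in $\mathcal S$ (each $x-\theta^i(\gamma)$ left divides the central $x^n-\lambda$, so the right ideals involved all contain the kernel of $S \to \mathcal S$). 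Applying $\Theta$, which is a bijection and hence commutes with intersections, together with the previous paragraph,
\[ 0 = \Theta\!\left( \bigcap_{i=0}^{n-1} (x - \theta^i(\gamma))\,\mathcal S \right) = \bigcap_{i=0}^{n-1} \widehat{\mathcal S}(x - \theta^i(\gamma')) = \widehat{\mathcal S}\,\lclm{x - \theta^i(\gamma') ~|~ 0 \le i \le n-1}. \]
Since every $x-\theta^i(\gamma')$ right divides $x^n-\lambda^{-1}$, so does their least common left multiple; being a monic right divisor of $x^n-\lambda^{-1}$ that also lies in $S(x^n-\lambda^{-1})$, it must equal $x^n-\lambda^{-1}$, which is the claimed equality.

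Finally I would build $\bar\alpha$. Because $\norm{L/K}(\gamma' u) = \lambda^{-1}\lambda = 1$, Hilbert's~90 Theorem (used as in Proposition~\ref{prop:decomposition}) gives $\bar\alpha \in \units L$ with $\gamma' u = \theta(\bar\alpha)\bar\alpha^{-1}$, i.e.\ $\gamma' = \theta(\bar\alpha)u^{-1}\bar\alpha^{-1} = (u^{-1})^{\bar\alpha}$, using that $L$ is commutative. Then \eqref{theta^i(u^alpha)}, applied with $u^{-1}$ and $\bar\alpha$ in place of $u$ and $\alpha$, gives $\theta^i(\gamma') = (u^{-1})^{\theta^i(\bar\alpha)\norm[\theta]{i}(u^{-1})}$ for every $i$. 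Combining the P\-independence established above with \cite[Theorem 5.3]{DL07} (P\-independence of the $\theta^i(\gamma')$ is equivalent to $K$\-linear independence of the $\theta^i(\bar\alpha)\norm[\theta]{i}(u^{-1})$), the set $\{\theta^i(\bar\alpha)\norm[\theta]{i}(u^{-1}) ~|~ 0\le i\le n-1\}$, having $n=[L:K]$ elements, is a $K$\-basis of $L$; that is, $\bar\alpha$ is a cyclic vector for the action $a\mapsto\theta(a)u^{-1}$. Since $\norm[\theta]{n}(u^{-1}) = \lambda^{-1}$ and $\gamma' = (u^{-1})^{\bar\alpha}$, Definition~\ref{defnBCHlike} applies verbatim with $\lambda, u, \alpha$ replaced by $\lambda^{-1}, u^{-1}, \bar\alpha$, which is exactly the asserted construction of skew BCH $\lambda^{-1}$\-constacyclic codes from $\gamma'$.

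The hard part will be the second step, namely that $x^n-\lambda^{-1}$ itself decomposes along the $\theta$\-orbit of $\gamma'$; this is what forces the use of Lemma~\ref{lemma:36ish} (to identify $\lcrm{x-\theta^i(\gamma) ~|~ 0\le i\le n-1}$ with $x^n-\lambda$) together with the $\Theta$\-duality of Lemma~\ref{lemma:35ish}. Everything else — producing $\bar\alpha$ from Hilbert~90 and translating P\-independence into $K$\-linear independence — is routine bookkeeping already carried out repeatedly in the paper.
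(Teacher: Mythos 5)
Your proposal is correct and follows essentially the same route as the paper's proof: Lemma \ref{lemma:36ish} with \( T = \emptyset \) to get \( \lcrm{x - \theta^i(\gamma) ~|~ 0 \le i \le n-1} = x^n - \lambda \), transport through \( \Theta \) to obtain \( \lclm{x - \theta^i(\gamma') ~|~ 0 \le i \le n-1} = x^n - \lambda^{-1} \), Hilbert 90 applied to \( u\gamma' \) to produce \( \bar\alpha \), and \cite[Theorem 5.3]{DL07} together with \eqref{theta^i(u^alpha)} to turn P-independence into the \( K \)-basis claim. The only cosmetic difference is that your ``core step'' re-derives the intersection argument that is already the ``Moreover'' clause of Lemma \ref{lemma:35ish}, which the paper simply cites.
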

\begin{proof}
	By applying Lemma \ref{lemma:36ish} with an empty \( T \) and \( T^c = \{ 0, \dots, n-1 \} \), \( \gamma \) is such that \( h = \lcrm{x - \theta^i(\gamma) ~|~ 0 \le i \le n-1} \) equals \( x^n - \lambda \). Thus,
	\( h \mathcal S \), and therefore \( \Theta\left( h \mathcal S\right) = \widehat{\mathcal S} \Theta(h) \), only contain the zero element in \( \mathcal S \) and \( \widehat{\mathcal S} \) respectively. By Lemma \ref{lemma:35ish}, \( \widehat{\mathcal S} \Theta(h) = \widehat{\mathcal S} \hat h \) where \( \hat h = \lclm{x - \theta^i(\gamma') ~|~ 0 \le i \le n-1} \). The code \( \widehat{\mathcal S} \hat h \) being the zero code means that \( \hat h = x^n - \lambda^{-1} \), which in turn implies that \( 0 = \hat h(\gamma') = \norm[\theta]{n}(\gamma') - \lambda^{-1} \), so \( \norm[\theta]{n}(\gamma') = \lambda^{-1} \).

	As \( \norm[\theta]{n}(u \gamma') = \norm[\theta]{n}(u)\norm[\theta]{n}(\gamma') = \lambda \lambda^{-1} = 1 \), by Hilbert's 90 Theorem (see \cite[Chapter VI, Theorem 6.1]{Lang02}) there exists some nonzero \( \bar \alpha \in L \) such that \( 1^{\bar \alpha} = u \gamma' \) and therefore \( \gamma' = (u^{-1})^{\bar \alpha} \). Since \( \deg(\hat h) = n \), the set \( \{ \theta^i((u^{-1})^{\bar\alpha}) ~|~ 0 \le i \le n-1 \} \), which applying \eqref{theta^i(u^alpha)} can be written as \( \{ (u^{-1})^{\theta^i(\bar\alpha) \norm[\theta]{i}(u^{-1})} ~|~ 0 \le i \le n-1 \} \), is P-independent. Hence, \cite[Theorem 5.3]{DL07} applies and \( \{ \theta^i(\bar\alpha) \norm[\theta]{i}(u^{-1}) ~|~ 0 \le i \le n-1 \} \) is a \( K \)-basis of \( L \).
\end{proof}

We can now prove the following generalization of \cite[Theorem 37]{GLN19t}, which adds additional detail to Proposition \ref{prop:dual} and Proposition \ref{prop:dual2}.

\begin{prop}
	Let \( \gamma, T, T^c \) be as in Lemma \ref{lemma:36ish} and let \( g = \lclm{x - \theta^i(u^\alpha) ~|~ i \in T} \).
	Then, \( (\mathcal S g)^\perp = \widehat{\mathcal S} \hat h \) where \( \widehat{\mathcal S} \) is as in \eqref{eq:ThetaS} and
	\[ \hat h = \lclm{x - \theta^i(\gamma') ~|~ i \in T^c}, \]
	where \( \gamma' = \theta(\gamma)^{-1} \). In addition, if \( g \in R \), then \( \hat h \in R \) and \( (\mathcal R g)^\perp = \widehat{\mathcal R} \hat h \) where \( \widehat{\mathcal R} \) is as in \eqref{eq:hatR}.
\end{prop}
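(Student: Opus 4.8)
The plan is to reduce the proposition to Proposition~\ref{prop:dual2} by exhibiting an explicit left divisor of $x^n - \lambda$ cofactor to $g$. Set $h = \lcrm{x - \theta^i(\gamma) ~|~ i \in T^c}$. Applying Lemma~\ref{lemma:36ish} to the partition $\{0,\dots,n-1\} = T \sqcup T^c$ gives $hg = x^n - \lambda$, and since each $x - \theta^i(\gamma)$ with $i \in T^c$ left divides $h$, which in turn left divides $x^n - \lambda$, every $\theta^i(\gamma)$, $i \in T^c$, is a left root of $x^n - \lambda$. Thus the pair $(h,g)$ meets the hypotheses of Proposition~\ref{prop:dual2}.

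First I would invoke Proposition~\ref{prop:dual2} with this $h$, obtaining $(\mathcal S g)^\perp = \widehat{\mathcal S}\hat h_0$ where $\hat h_0 = \lclm{x - \theta\bigl(\theta^i(\gamma)^{-1}\bigr) ~|~ i \in T^c}$. It then remains to recognize $\hat h_0$ as the asserted $\hat h$. This is pure index bookkeeping: since $\theta$ and each $\theta^i$ are field automorphisms, they commute with inversion, so $\theta\bigl(\theta^i(\gamma)^{-1}\bigr) = \theta^{i+1}(\gamma)^{-1} = \theta^i\bigl(\theta(\gamma)^{-1}\bigr) = \theta^i(\gamma')$. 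Hence $\hat h_0 = \lclm{x - \theta^i(\gamma') ~|~ i \in T^c} = \hat h$, which proves the first assertion.

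For the statement over $R$, I would repeat the argument used at the end of the proof of Proposition~\ref{prop:dual2}. If $g \in R$, then, as $g$ is a monic right divisor of $x^n - \lambda$ in $R$ and $S$ is a domain, the unique left cofactor $h$ with $hg = x^n - \lambda$ already lies in $R$; consequently so does $\Theta(h)$. By Proposition~\ref{prop:dual}, $(\mathcal R g)^\perp = \widehat{\mathcal R}\Theta(h)$, and the monic generator of minimal degree of this left ideal is $\gcrd{\Theta(h), x^n - \lambda^{-1}}$, an element of $R$ obtained by the extended left Euclidean algorithm; but this same element is also the monic generator of $\widehat{\mathcal S}\Theta(h) = \widehat{\mathcal S}\hat h$, which we have just identified as $\hat h$. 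Therefore $\hat h \in R$ and $(\mathcal R g)^\perp = \widehat{\mathcal R}\hat h$.

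Since every nontrivial ingredient is a direct application of Lemma~\ref{lemma:36ish} and of Propositions~\ref{prop:dual} and~\ref{prop:dual2}, I do not expect a substantive obstacle; the only points demanding attention are the index arithmetic identifying $\theta\bigl(\theta^i(\gamma)^{-1}\bigr)$ with $\theta^i(\gamma')$ and checking that the monic-generator descent argument of Proposition~\ref{prop:dual2} carries over unchanged when the left cofactor is this particular least common right multiple of linear factors.
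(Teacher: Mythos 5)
Your proposal is correct and follows exactly the paper's route: Lemma \ref{lemma:36ish} applied to the partition gives $hg = x^n - \lambda$ with $h = \lcrm{x - \theta^i(\gamma) ~|~ i \in T^c}$, and then Proposition \ref{prop:dual2} yields both conclusions, with your identity $\theta\bigl(\theta^i(\gamma)^{-1}\bigr) = \theta^i(\gamma')$ being the same (implicit) index bookkeeping. The only difference is that you re-derive the ``$g \in R$'' part by hand, which is unnecessary since Proposition \ref{prop:dual2} already states it.
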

\begin{proof}
	Lemma \ref{lemma:36ish} has shown that \( h g = x^n - \lambda \) where \( h = \lcrm{x - \theta^i(\gamma) ~|~ i \in T^c} \). Hence, this follows from Proposition \ref{prop:dual2}.
\end{proof}

\begin{prop}\label{prop:37ish}
	Consider a skew BCH \( \lambda \)-constacyclic code \( \mathcal C = \mathcal R g \) where \( g \) is as shown in Definition \ref{defnBCHlike}. If \( \gamma \in L \) is such that
	\[ (x - \gamma)\lclm{x - \theta(u^\alpha), \dots, x - \theta^{n-1}(u^\alpha)} = x^n - \lambda, \]
	then \( \mathcal C^\perp \) is the skew BCH \( \lambda^{-1} \)-constacyclic code \( \widehat{\mathcal R} \hat h \) where \( \widehat{\mathcal R} \) is as in \eqref{eq:hatR} and
	\[ \hat h = \lclm{x - \theta^{i+j\mu}(\gamma') ~|~ r + \delta - 1 \le i \le r + \mu - 1, 0 \leq j \leq s-1}, \]
	where \( \gamma' = \theta(\gamma)^{-1} \). In particular, it is an \( [ n, s\delta - s] \) code with bounded minimum Hamming distance \(\distance(\mathcal{C}^\perp) \geq \mu - \delta + 2 \).
\end{prop}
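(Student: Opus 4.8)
The plan is to reduce the statement to machinery already in place. By Definition~\ref{defnBCHlike} together with \eqref{theta^i(u^alpha)}, the generator $g$ of a skew BCH $\lambda$-constacyclic code is a least common left multiple $\lclm{x - \theta^i(u^\alpha) \mid i \in T}$ of linear factors indexed by an explicit set $T \subseteq \{0,\dots,n-1\}$, and the proposition immediately preceding this one (which rests on Lemma~\ref{lemma:36ish} and Proposition~\ref{prop:dual2}) already computes the dual of such a code in terms of the complement $T^c$. So the work is essentially to identify $T$ and $T^c$ explicitly and then to recognize the resulting $\hat h$ as being, once more, of skew BCH shape --- this time for $\lambda^{-1}$.

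First I would check that the exponents $i + j\mu$ with $r \le i \le r + \delta - 2$ and $0 \le j \le s-1$ are pairwise distinct modulo $n = s\mu$: $|i - i'| \le \delta - 2 < \mu$ forces $i = i'$, and then $j = j'$. Writing $T$ for the set of their residues, this gives $|T| = s(\delta-1)$ and $g = \lclm{x - \theta^i(u^\alpha) \mid i \in T}$, in agreement with $\deg g = s(\delta-1)$ from Proposition~\ref{propBCHlike}. Next, for each fixed $j$ the $\mu$ residues of $i + j\mu$ with $r \le i \le r + \mu - 1$ are distinct, and over $j = 0,\dots,s-1$ they exhaust $\{0,\dots,n-1\}$; within each such block of $\mu$ consecutive exponents, $T$ selects exactly the $\delta-1$ values with $r \le i \le r + \delta - 2$, so
\[
	T^c = \{\, (i + j\mu) \bmod n \mid r + \delta - 1 \le i \le r + \mu - 1,\ 0 \le j \le s-1 \,\}.
\]
Putting $\delta' = \mu - \delta + 2$ and $r' \equiv r + \delta - 1 \pmod n$, the constraint $2 \le \delta \le \mu$ yields $2 \le \delta' \le \mu$, and (since reducing $r'$ modulo $n$ changes neither $T^c$ nor the polynomials it indexes, as $\theta$ has order $n$) $T^c$ is precisely the index set $\{(i + j\mu)\bmod n \mid r' \le i \le r' + \delta' - 2,\ 0 \le j \le s-1\}$ of a skew BCH code of designed distance $\delta'$.

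With this in hand I would invoke Proposition~\ref{propBCHlike} for $g \in R$ and the preceding proposition to conclude that $\hat h \in R$ and $\mathcal C^\perp = \widehat{\mathcal R}\hat h$ with $\hat h = \lclm{x - \theta^i(\gamma') \mid i \in T^c}$, where $\gamma' = \theta(\gamma)^{-1}$. As the hypothesis on $\gamma$ is exactly \eqref{eq:gamma36ish}, Proposition~\ref{prop:gammap} provides $\bar\alpha \in L$ for which $\{\theta^i(\bar\alpha)\norm[\theta]{i}(u^{-1}) \mid 0 \le i \le n-1\}$ is a $K$-basis of $L$ and $\gamma' = (u^{-1})^{\bar\alpha}$. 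Substituting this into \eqref{theta^i(u^alpha)} (read with $u,\alpha$ replaced by $u^{-1},\bar\alpha$) rewrites every $\theta^i(\gamma')$ in the exact form required by Definition~\ref{defnBCHlike}; since $\norm{L/K}(u^{-1}) = \lambda^{-1}$, this displays $\hat h$ as the generator of the skew BCH $\lambda^{-1}$-constacyclic code with data $(u^{-1}, \bar\alpha, \delta', r')$, which is the first assertion. The remaining parameters then come for free from Proposition~\ref{propBCHlike} applied to that code: its $F$-dimension is $n - s\delta' + s = n - s(\mu - \delta + 2) + s = s\delta - s$ (consistent with $\dim \mathcal C^\perp = n - \dim \mathcal C$), and $\distance(\mathcal C^\perp) \ge \delta' = \mu - \delta + 2$.

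I do not expect a real obstacle here. The only step that demands any care is the second paragraph, namely verifying that $T^c$ has \emph{exactly} the gapped form prescribed by Definition~\ref{defnBCHlike} and that the shifted designed distance $\delta' = \mu - \delta + 2$ stays inside the admissible range $[2,\mu]$; both become transparent once one keeps track, residue class by residue class modulo $\mu$, of a single block of $\mu$ consecutive exponents split as $(\delta-1) + (\mu - \delta + 1)$ between $T$ and $T^c$. Everything else is a direct appeal to Propositions~\ref{propBCHlike}, \ref{prop:dual2} and \ref{prop:gammap} and to Lemma~\ref{lemma:36ish}.
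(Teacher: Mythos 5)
Your proposal is correct and follows essentially the same route as the paper: identify the index sets $T$ and $T^c$, apply the preceding proposition (built on Lemma~\ref{lemma:36ish} and Proposition~\ref{prop:dual2}) to get $\hat h \in R$ and $\mathcal C^\perp = \widehat{\mathcal R}\hat h$, and then apply Proposition~\ref{propBCHlike} to $\hat h$, which is legitimate thanks to Proposition~\ref{prop:gammap}. Your explicit verification that $T$ and $T^c$ partition $\{0,\dots,n-1\}$ and that $T^c$ has the BCH shape with parameters $\delta' = \mu-\delta+2$ and $r' \equiv r+\delta-1$ merely spells out details the paper leaves implicit.
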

\begin{proof}
	Let \( T = \{ i + j \mu \mod n ~|~ r \le i \le \delta - 2, 0 \le j \le s-1 \} \) and \( T^c = \{ i + j \mu \mod n ~|~ r + \delta - 1 \le i \le r + \mu - 1, 0 \le j \le s-1 \} \). Then \( g = \lclm{x - \theta^i(u^\alpha) ~|~ i \in T} \) and \( \hat h = \lclm{x - \theta^i(\gamma') ~|~ i \in T^c} \). The previous result shows that \(  \hat h \in R \) and \( \mathcal C^\perp = \widehat{\mathcal R}\hat h \), which has the asserted properties by virtue of Proposition \ref{propBCHlike}, which can be applied to \( \hat h \) as a result of Proposition \ref{prop:gammap}.
\end{proof}

By choosing the set \( T \) to suit our definition of skew BCH constacyclic codes, we get the following result.

\begin{theorem}\label{thm:BCHLCP}
Given a skew BCH \(\lambda\)-constacyclic \([n,n-s\delta+s]\)-code \(\mathcal{C} = \mathcal{R}g\) of designed distance \(\delta\) where
\[
g = \lclm{x - \theta^{i+j\mu}(u^\alpha) ~|~ r \leq i \leq r + \delta - 2, 0 \leq j \leq s-1}
\]
for any \(0 \leq r \leq n-1\), the code \(\mathcal{D} = \mathcal{R} h\) where
\[
h = \lclm{x - \theta^{i+j\mu}(u^\alpha) ~|~ r + \delta-1 \leq i \leq r + \mu - 1, 0 \leq j \leq s-1}
\]
satisfies that \((\mathcal{C},\mathcal{D})\) is a linear complementary pair of skew BCH \(\lambda\)-constacyclic codes with security parameter \(\delta\).
\end{theorem}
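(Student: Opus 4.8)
The plan is to recognize both members of the pair as skew BCH $\lambda$-constacyclic codes in the sense of Definition \ref{defnBCHlike}, extract their parameters from Proposition \ref{propBCHlike}, and then check the hypotheses of Theorem \ref{thm:LCP}. Writing $r_2 = r + \delta - 1$ and $\delta_2 = \mu - \delta + 2$, the polynomial $h$ is $\lclm{x - \theta^{i+j\mu}(u^\alpha) ~|~ r_2 \le i \le r_2 + \delta_2 - 2,\ 0 \le j \le s-1}$, and the hypothesis $2 \le \delta \le \mu$ forces $2 \le \delta_2 \le \mu$, so $\mathcal{D} = \mathcal{R}h$ is a skew BCH $\lambda$-constacyclic code of designed distance $\delta_2$. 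Applying Proposition \ref{propBCHlike} to $(r,\delta)$ and to $(r_2, \delta_2)$ gives $\deg g = s(\delta - 1)$ and $\deg h = s(\delta_2 - 1) = s(\mu - \delta + 1)$, hence $\deg g + \deg h = s\mu = n$, equivalently $\dim \mathcal{C} + \dim \mathcal{D} = (n - s\delta + s) + (s\delta - s) = n$.

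The next step is to verify $\lclm{g,h} = x^n - \lambda$. Set $T = \{\, i + j\mu \bmod n ~|~ r \le i \le r + \delta - 2,\ 0 \le j \le s-1 \,\}$ and $T^c = \{\, i + j\mu \bmod n ~|~ r + \delta - 1 \le i \le r + \mu - 1,\ 0 \le j \le s-1 \,\}$, so that $g = \lclm{x - \theta^i(u^\alpha) ~|~ i \in T}$ and $h = \lclm{x - \theta^i(u^\alpha) ~|~ i \in T^c}$. The $i$-ranges appearing in $T$ and in $T^c$ together form the interval $\{r, \dots, r + \mu - 1\}$, a complete residue system modulo $\mu$; together with $j$ running through $\{0, \dots, s-1\}$, the assignment $(i,j) \mapsto i + j\mu \bmod n$ is a bijection onto $\{0, \dots, n-1\}$, so $T$ and $T^c$ partition $\{0,\dots,n-1\}$. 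Working inside $S$ and using the left-ideal description of $\lclm$ together with \eqref{x^n-lambda_decomposition} and \eqref{theta^i(u^alpha)},
\[
S\lclm{g,h} = Sg \cap Sh = \Big(\bigcap_{i \in T} S(x - \theta^i(u^\alpha))\Big) \cap \Big(\bigcap_{i \in T^c} S(x - \theta^i(u^\alpha))\Big) = \bigcap_{0 \le i \le n-1} S(x - \theta^i(u^\alpha)) = S(x^n - \lambda).
\]
Since $g, h \in R$ and both right divide $x^n - \lambda$, it follows that $\mathcal{R}g \cap \mathcal{R}h \subseteq \mathcal{S}g \cap \mathcal{S}h = \{0\}$; combined with $\dim\mathcal{C} + \dim\mathcal{D} = n$, condition \ref{LCP3} of Theorem \ref{thm:LCP} holds, so $(\mathcal{C}, \mathcal{D})$ is a linear complementary pair. (Equivalently, one checks $\lclm{g,h} = x^n - \lambda$ directly in $R$, using that $x^n-\lambda$ is a common left multiple of $g$ and $h$, and invokes condition \ref{LCP6}.)

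It remains to bound the security parameter, which by Definition \ref{defn:secparam} is $\min(\distance(\mathcal{C}), \distance(\mathcal{D}^\perp))$. Proposition \ref{propBCHlike} gives $\distance(\mathcal{C}) \ge \delta$. For the second term, I would apply Proposition \ref{prop:37ish} to the skew BCH $\lambda$-constacyclic code $\mathcal{D} = \mathcal{R}h$ of base $r_2 = r + \delta - 1$ and designed distance $\delta_2 = \mu - \delta + 2$: this identifies $\mathcal{D}^\perp$ with a skew BCH $\lambda^{-1}$-constacyclic code and yields $\distance(\mathcal{D}^\perp) \ge \mu - \delta_2 + 2 = \mu - (\mu - \delta + 2) + 2 = \delta$. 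Hence the security parameter of $(\mathcal{C}, \mathcal{D})$ is at least $\delta$, the designed value.

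The only delicate part of the argument is the index bookkeeping: verifying that the two windows defining $g$ and $h$ give rise to complementary subsets $T$ and $T^c$ of $\{0,\dots,n-1\}$, and that the window defining $\mathcal{D}$ (and hence $\mathcal{D}^\perp$) fits the templates of Definition \ref{defnBCHlike} and Proposition \ref{prop:37ish}. Once this is in place, the statement is an immediate assembly of Proposition \ref{propBCHlike}, Proposition \ref{prop:37ish} and Theorem \ref{thm:LCP}, with no further computation needed.
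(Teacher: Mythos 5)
Your proposal is correct and follows essentially the same route as the paper, whose proof is a one-line assembly of exactly the ingredients you use: Theorem \ref{thm:LCP}, the decomposition \eqref{x^n-lambda_decomposition}, Proposition \ref{propBCHlike} for the dimensions and the bound $\distance(\mathcal C)\ge\delta$, and Proposition \ref{prop:37ish} applied to $\mathcal D$ (with base $r+\delta-1$ and designed distance $\mu-\delta+2$) for $\distance(\mathcal D^\perp)\ge\delta$. Your write-up simply makes explicit the index bookkeeping ($T\sqcup T^c=\{0,\dots,n-1\}$) and the degree count that the paper leaves implicit.
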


\begin{proof}
It is a direct consequence of Theorem \ref{thm:LCP}, \eqref{x^n-lambda_decomposition}, Proposition \ref{propBCHlike} and Proposition \ref{prop:37ish}.
\end{proof}

\section{Isomorphisms}\label{section:isomorphisms}

In this section we study a subgroup of the group of Hamming distance-preserving inner automorphisms in the ring $\mathcal S$, defined in \eqref{introS}, that can be restricted to automorphisms in $\mathcal R$, as well as Hamming distance-preserving isomorphisms between skew constacyclic codes with a different constant \( \lambda \).
In order to do so, we introduce some notation. Given \(u \in \units{\mathcal{S}}\), we write as \(\phi_u : \mathcal{S} \to \mathcal{S}\) the inner automorphism \(\phi_u(g) = u g u^{-1}\), so the group of inner automorphisms of \(\mathcal{S}\) is
\(
\Inn(\mathcal{S}) = \left\{ \phi_u ~:~ u \in \units{\mathcal{S}} \right\}.
\)

For each \(\beta \in \units{L}\), let \(\overline{\varphi_\beta} : S \to S\) be defined on monomials as \(\overline{\varphi_\beta}(ax^i) = \norm[\theta]{i}(\beta)^{-1} a x^i\) and extended linearly to \(S\). Since \(\overline{\varphi_\beta}\) is clearly bijective, its inverse being \( \overline{\varphi_{\beta^{-1}}} \), and
\[
\overline{\varphi_\beta}(x) \overline{\varphi_\beta}(a) = \beta^{-1} x a = \theta(a) \beta^{-1} x = \overline{\varphi_\beta}(\theta(a)) \overline{\varphi_\beta}(x),
\]
it follows from \cite[Lemma 1.11]{Goodearl/Warfield:2004} that \(\overline{\varphi_\beta}\) is a ring automorphism. As
\begin{equation}\label{eq:baroncentral}
\begin{split}
\overline{\varphi_\beta}(x^n - \lambda) &= \norm[\theta]{n}(\beta^{-1})x^n - \lambda \\
&= \norm{L/K}(\beta^{-1})x^n - \lambda \\
&= \norm{L/K}(\beta^{-1})(x^n - \norm{L/K}(\beta)\lambda),
\end{split}
\end{equation}
the automorphism $\overline{\varphi_\beta}$ sends the two-sided ideal $S(x^n - \lambda)$ into the two-sided ideal $S(x^n - \norm{L/K}(\beta)\lambda)$, and therefore can be projected into the ring isomorphism
\[
	\varphi_\beta : \mathcal S \to \mathcal S' =  \frac{S}{S(x^n - \norm{L/K}(\beta)\lambda)}
\]
such that, for any monomial \( a x^i \in \mathcal S \),
\begin{equation}\label{eq:phbmonomials}
	\varphi_\beta(a x^i) = \norm[\theta]{i}(\beta^{-1}) a x^i.
\end{equation}

For each \(\alpha \in \units{L}\), let \(\phi_\alpha \in \Inn(\mathcal{S})\), i.e.,
\[
\phi_\alpha: \mathcal{S} \to \mathcal{S}, \ \phi_\alpha(g) = \alpha g \alpha^{-1}.
\]

\begin{prop}\label{varphibetainner}
For any \( \beta \in \units L \) such that \(\norm{L/K}(\beta) = 1\), there is some \(\alpha \in \units L\) such that \(\varphi_\beta = \phi_\alpha\). Conversely, for each \(\alpha \in \units{L}\), \(\phi_\alpha = \varphi_{\alpha^{-1}\theta(\alpha)}\).
\end{prop}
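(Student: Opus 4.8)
\emph{Plan of proof.} The plan is to reduce the claimed equality of automorphisms to a comparison of their effect on the single element $x$. First I would observe that both families of maps fix $L$ pointwise: for $a \in L$, \eqref{eq:phbmonomials} gives $\varphi_\beta(a) = \norm[\theta]{0}(\beta^{-1})a = a$, and $\phi_\alpha(a) = \alpha a \alpha^{-1} = a$ by commutativity of $L$. Since $\mathcal S$ is generated as a ring by $x$ together with the subring $L$ of constants, any two ring homomorphisms out of $\mathcal S$ that agree on $L$ coincide as soon as they agree on the image of $x$. (Equivalently, one could compare $\varphi_\beta$ and $\phi_\alpha$ monomial by monomial on $a x^i$, where a telescoping product $\prod_{j=0}^{i-1}\theta^j(\alpha^{-1}\theta(\alpha)) = \alpha^{-1}\theta^i(\alpha)$ does the job; but the "generated by $L$ and $x$" route is cleaner for a sketch.)

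Next I would compute the two images of $x$. By \eqref{eq:phbmonomials}, $\varphi_\beta(x) = \norm[\theta]{1}(\beta^{-1})x = \beta^{-1}x$, while $\phi_\alpha(x) = \alpha x \alpha^{-1} = \alpha\,\theta(\alpha^{-1})\,x = \alpha\,\theta(\alpha)^{-1}x$, using the commutation rule $x a = \theta(a)x$ and commutativity of $L$. Therefore, whenever both maps are automorphisms of $\mathcal S$, we have $\varphi_\beta = \phi_\alpha$ if and only if $\beta^{-1} = \alpha\,\theta(\alpha)^{-1}$, i.e. $\beta = \alpha^{-1}\theta(\alpha)$.

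For the converse statement, given $\alpha \in \units L$ I would set $\beta = \alpha^{-1}\theta(\alpha)$ and first check $\norm{L/K}(\beta) = 1$: reindexing the product $\norm{L/K} = \norm[\theta]{n}$ and using $\theta^n = \mathrm{id}$ shows $\norm{L/K}(\theta(\alpha)) = \norm{L/K}(\alpha)$, so by multiplicativity $\norm{L/K}(\beta) = \norm{L/K}(\alpha)^{-1}\norm{L/K}(\alpha) = 1$. By \eqref{eq:baroncentral} this makes $\mathcal S' = \mathcal S$, so $\varphi_\beta$ is a genuine automorphism of $\mathcal S$, and the computation of the previous paragraph gives $\varphi_\beta = \phi_\alpha$, as asserted.

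For the forward statement, recall that $\theta \in \Gal(L)$ has order $n$ and $L^\theta = K$, so $L/K$ is a Galois extension with cyclic Galois group $\langle\theta\rangle$ of order $n$. Given $\beta \in \units L$ with $\norm{L/K}(\beta) = 1$, the multiplicative form of Hilbert's Theorem 90 (e.g. \cite[Chapter VI, Theorem 6.1]{Lang02}) yields $\alpha \in \units L$ with $\beta = \alpha^{-1}\theta(\alpha)$; then $\varphi_\beta$ and $\phi_\alpha$ are both automorphisms of $\mathcal S$ agreeing on $L$ and on $x$, hence equal. The only substantive input beyond bookkeeping is Hilbert's Theorem 90, and I do not expect any real obstacle: the heart of the argument is the elementary observation that an inner automorphism by a scalar in $L$ and the twist $\varphi_\beta$ are each determined by their value on $x$.
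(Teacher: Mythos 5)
Your proof is correct and takes essentially the same route as the paper: Hilbert's Theorem 90 supplies \( \alpha \in \units L \) with \( \beta = \alpha^{-1}\theta(\alpha) \), the condition \( \norm{L/K}(\beta)=1 \) makes \( \varphi_\beta \) an automorphism of \( \mathcal S \), and the equality \( \varphi_\beta = \phi_\alpha \) is then a direct computation using \( xa = \theta(a)x \). The only cosmetic difference is that you verify agreement on the generators \( L \) and \( x \) (and separately check \( \norm{L/K}(\alpha^{-1}\theta(\alpha)) = 1 \) for the converse, which the paper leaves implicit), whereas the paper checks every monomial at once via the telescoping identity \( \norm[\theta]{i}(\beta^{-1}) = \alpha\,\theta^i(\alpha^{-1}) \) that you also mention.
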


\begin{proof}
Since \(\norm{L/K}(\beta) = 1\), it follows that \(\mathcal{S}' = \mathcal{S}\). Hence, \(\varphi_\beta\) is an automorphism. In addition, there exists some \( \alpha \in \units L \) such that \(\beta = \alpha^{-1}\theta(\alpha)\) by Hilbert's Theorem 90 (see \cite[Chapter VI, Theorem 6.1]{Lang02}). As \(\norm[\theta]{i}(\beta^{-1}) = \alpha \theta^i(\alpha^{-1})\), we have
\begin{equation}\label{eq:phbmonomialsauto}
\varphi_\beta(a x^i) = \norm[\theta]{i}(\beta^{-1}) a x^i = \alpha \theta^i(\alpha^{-1}) a x^i = \alpha a x^i \alpha^{-1},
\end{equation}
which implies \(\varphi_\beta = \phi_\alpha\). The converse also follows straightforwardly from \eqref{eq:phbmonomialsauto}.
\end{proof}

\begin{prop}
The map
\[
\begin{split}
\{ \beta \in \units{L} : \norm{L/K}(\beta) = 1 \} &\to \Inn(\mathcal{S}) \\
\beta &\mapsto \varphi_\beta
\end{split}
\]
is an injective morphism of groups.
\end{prop}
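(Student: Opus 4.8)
The plan is to verify the three properties of a group homomorphism separately, reusing the automorphism property of $\overline{\varphi_\beta}$ already established in the excerpt and the explicit action \eqref{eq:phbmonomials} on monomials.

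First, for well-definedness of the codomain: if $\norm{L/K}(\beta) = 1$, then \eqref{eq:baroncentral} shows $\overline{\varphi_\beta}(x^n - \lambda) = x^n - \lambda$ up to the unit scalar $\norm{L/K}(\beta^{-1}) = 1$, so $\mathcal{S}' = \mathcal{S}$ and $\varphi_\beta$ is indeed an endomorphism of $\mathcal{S}$; it is bijective because $\overline{\varphi_\beta}$ is (with inverse $\overline{\varphi_{\beta^{-1}}}$), hence $\varphi_\beta \in \Inn(\mathcal{S})$ by Proposition \ref{varphibetainner}. Next, multiplicativity: using \eqref{eq:phbmonomials} on a monomial $a x^i$, I would compute $\varphi_{\beta_1}(\varphi_{\beta_2}(a x^i)) = \norm[\theta]{i}(\beta_1^{-1})\norm[\theta]{i}(\beta_2^{-1}) a x^i$ and compare with $\varphi_{\beta_1 \beta_2}(a x^i) = \norm[\theta]{i}((\beta_1\beta_2)^{-1}) a x^i$; since $L$ is commutative and $\norm[\theta]{i}$ is multiplicative, $\norm[\theta]{i}(\beta_1^{-1}\beta_2^{-1}) = \norm[\theta]{i}(\beta_1^{-1})\norm[\theta]{i}(\beta_2^{-1})$, giving $\varphi_{\beta_1\beta_2} = \varphi_{\beta_1}\circ\varphi_{\beta_2}$ on monomials, and by linearity on all of $\mathcal{S}$. (Note $\beta_1\beta_2$ again has norm $1$, so both sides are genuine automorphisms of $\mathcal{S}$.) This also shows the map lands in the subgroup $\Inn(\mathcal{S})$ compatibly with composition, and that $\varphi_1 = \mathrm{id}$.

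Finally, injectivity: suppose $\varphi_\beta = \mathrm{id}_{\mathcal{S}}$. Applying this to the monomial $x$ (i.e. $i = 1$, $a = 1$) gives $\norm[\theta]{1}(\beta^{-1}) x = x$, so $\beta^{-1} = 1$ and $\beta = 1$. Hence the kernel is trivial.

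I do not expect a serious obstacle here; the only points requiring minor care are confirming that the image genuinely lies in $\Inn(\mathcal{S})$ rather than merely in the set of ring isomorphisms with varying codomains — which is exactly the content of Proposition \ref{varphibetainner} combined with the norm-$1$ hypothesis — and making sure the multiplicativity computation uses the correct ordering of composition. Everything else is a routine monomial-level check propagated by $F$-linearity (indeed $L$-linearity on the left).
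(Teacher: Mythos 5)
Your proof is correct and follows essentially the same route as the paper: multiplicativity is checked on monomials via \eqref{eq:phbmonomials} using the multiplicativity of the truncated norm, membership in $\Inn(\mathcal S)$ comes from Proposition \ref{varphibetainner} under the norm-$1$ hypothesis, and injectivity is read off by evaluating at $x$ (trivial kernel, equivalent to the paper's comparison of $\varphi_{\beta_1}(x)$ and $\varphi_{\beta_2}(x)$).
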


\begin{proof}
Since the norm is multiplicative, it follows that \(N = \{ \beta \in \units{L} : \norm{L/K}(\beta) = 1 \}\) is a subgroup of \(\units{L}\). Let \(\beta_1, \beta_2 \in N\). It is immediate from \eqref{eq:phbmonomials} that \(\varphi_{\beta_1 \beta_2} = \varphi_{\beta_1} \circ \varphi_{\beta_2}\). Moreover, if \(\varphi_{\beta_1} = \varphi_{\beta_2}\) then
\[
\beta_1^{-1} x = \varphi_{\beta_1}(x) = \varphi_{\beta_2}(x) = \beta_2^{-1} x,
\]
hence \(\beta_1 = \beta_2\).
\end{proof}

Observe that $(\lambda^{-1}x^{n-1})x = \lambda^{-1}x^n = 1$, hence \(x \in \units{\mathcal{S}}\) and $x^{-1} = \lambda^{-1}x^{n-1}$.
Therefore, we may also consider the inner automorphism $\phi_x : \mathcal S \to \mathcal S$ as $\phi(g) = xgx^{-1}$ for any $g \in \mathcal S$. On monomials,
\begin{equation}
\label{eq:phxmonomials}
\phi_x(ax^i) = xax^ix^{-1} = \theta(a)x^i.
\end{equation}

The automorphism \( \phi_x \) can similarly be seen as the projection into \( \mathcal S \) of the map \mbox{\( \overline{\phi_x} : S \to S \)} defined over monomials as \( \overline{\phi_x}(a x^i) = \theta(a)x^i \), as \( \overline{\phi_x} \) fixes any skew polynomial in \( Z(S) \), including \( x^n - \lambda \). An analogous reasoning as the one shown above for \( \overline{\varphi_\beta} \) proves that it is an automorphism.

\begin{prop}\label{prop:isometries}
The inner automorphism \(\phi_x \in \Inn(\mathcal S) \) and the isomorphisms \( \varphi_\beta : \mathcal S \to \mathcal S' \) where \( \beta \in \units L \), which are also inner automorphisms whenever \(\norm{L/K}(\beta) = 1 \), as well as their restrictions to \( \mathcal R \), preserve the Hamming weight, i.e., they are isometries with respect to the Hamming distance.
\end{prop}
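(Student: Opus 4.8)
The plan is to read off the action of each map on the coordinate representation of an element and observe that it merely rescales the monomial coordinates by nonzero scalars (after applying $\theta$, in the case of $\phi_x$), so that the support of a skew polynomial of degree less than $n$ is neither enlarged nor shrunk and no reduction modulo $x^n - \lambda$ intervenes.

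First I would fix $f = \sum_{i=0}^{n-1} a_i x^i$, the representative of degree less than $n$ used to compute $\weight(f) = \#\{\, i : a_i \ne 0 \,\}$; the Hamming weight on the target $\mathcal S'$ of $\varphi_\beta$ is defined in the same way, via the monomial basis $1, x, \dots, x^{n-1}$. For $\phi_x$, formula \eqref{eq:phxmonomials} gives $\phi_x(f) = \sum_{i=0}^{n-1} \theta(a_i) x^i$, still of degree less than $n$; as $\theta$ is a field automorphism, $\theta(a_i) \ne 0$ if and only if $a_i \ne 0$, so $\phi_x(f)$ has the same support, hence the same weight, as $f$. For $\varphi_\beta$, formula \eqref{eq:phbmonomials} gives $\varphi_\beta(f) = \sum_{i=0}^{n-1} \norm[\theta]{i}(\beta^{-1}) a_i x^i$, again of degree less than $n$; since $\beta \in \units L$, each scalar $\norm[\theta]{i}(\beta^{-1})$ is a product of units of $L$ (the empty product $1$ when $i = 0$), hence nonzero, so $\norm[\theta]{i}(\beta^{-1}) a_i \ne 0$ if and only if $a_i \ne 0$, and the support, therefore the weight, is preserved.

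Next I would upgrade weight-preservation to distance-preservation: $\phi_x$ and $\varphi_\beta$ are ring homomorphisms, in particular additive, and since $\distance(c,c') = \weight(c - c')$ we get $\distance(\phi_x(c), \phi_x(c')) = \weight(\phi_x(c - c')) = \weight(c - c') = \distance(c, c')$, and likewise for $\varphi_\beta$. The claims for the restrictions to $\mathcal R$ follow at once, since $\mathcal R \subseteq \mathcal S$ carries the induced Hamming metric and the restriction of a weight-preserving additive map is again weight-preserving; moreover, as noted in the text preceding the statement, these maps send $\mathcal R$ into a ring of the same shape, because $\theta_{|F} = \sigma$.

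There is essentially no obstacle here. The only point worth checking carefully is that the image of a representative of degree less than $n$ is again of degree less than $n$, so that the Hamming weight can be read directly off the displayed formulas without a preliminary reduction modulo $x^n - \lambda$; this is clear because both maps act diagonally on $1, x, \dots, x^{n-1}$ by multiplication with nonzero scalars (composed with $\theta$ for $\phi_x$).
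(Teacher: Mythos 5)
Your argument is correct and is essentially the paper's own proof, which simply cites \eqref{eq:phbmonomials} and \eqref{eq:phxmonomials}: both maps act diagonally on the monomial basis $1, x, \dots, x^{n-1}$ by nonzero scalars (composed with $\theta$ in the case of $\phi_x$), so supports, hence weights and Hamming distances, are preserved, also after restriction to $\mathcal R$. The only clause you leave untouched, that $\varphi_\beta$ is inner when $\norm{L/K}(\beta) = 1$, is just a recollection of Proposition \ref{varphibetainner}, which the paper's proof cites in a single line.
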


\begin{proof}
They are isometries as a consequence of \eqref{eq:phbmonomials} and \eqref{eq:phxmonomials}. If \(\norm{L/K}(\beta) = 1 \), \( \varphi_\beta \) is inner by Proposition \ref{varphibetainner}.
\end{proof}

Moreover, when restricting these maps to \( \mathcal R \subseteq \mathcal S \), they may result in isomorphisms between rings whose left ideals are skew constacyclic codes, and therefore, by Proposition \ref{prop:isometries}, establish isometries between skew constacyclic codes, as we shall now study.

\begin{lemma}\label{lemma:phitocalR}
The restriction of \( \overline{\phi_x} \) to \( R \) is also an automorphism.
The restriction of \(\phi_x\) to \(\mathcal{R}\) is also an inner automorphism.
\end{lemma}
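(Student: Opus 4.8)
The plan is to reduce both assertions to facts already available: that $\overline{\phi_x}$ is literally the ring automorphism $\theta$ of $S$ introduced in \eqref{eq:thetaL} (shown to be a ring automorphism in the proof of Lemma~\ref{lclminR}), and that $x$ is a unit of the relevant quotient.

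First I would check that $\overline{\phi_x}$ maps $R$ into $R$. A typical element of $R = F[x;\sigma]$ is $\sum_i a_i x^i$ with $a_i \in F$, and since $\theta_{|F} = \sigma$ we get $\overline{\phi_x}\!\left(\sum_i a_i x^i\right) = \sum_i \sigma(a_i) x^i$, whose coefficients again lie in $F$ because $\sigma$ is an automorphism of $F$; hence $\overline{\phi_x}(R) \subseteq R$. The restriction $\overline{\phi_x}|_R$ is then a ring homomorphism (being the restriction of one), it is injective (being the restriction of an injective map), and it is surjective since $\sum_i a_i x^i \mapsto \sum_i \sigma^{-1}(a_i) x^i$ is a well-defined inverse $R \to R$; equivalently, one may recall from the proof of Lemma~\ref{lclminR} that $R = \{ f \in S : \theta^\mu(f) = f \}$ and that $\overline{\phi_x} = \theta$ commutes with $\theta^\mu$, so both $\overline{\phi_x}$ and its inverse preserve $R$. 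Either way, $\overline{\phi_x}|_R$ is an automorphism of $R$.

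Then I would descend this to $\mathcal R$. Because $\lambda \in K = L^\theta$ we have $\overline{\phi_x}(x^n - \lambda) = x^n - \lambda$, so $\overline{\phi_x}|_R$ fixes the twosided ideal $R(x^n-\lambda)$ and projects onto a ring automorphism of $\mathcal R = R/R(x^n-\lambda)$; by \eqref{eq:phxmonomials} this projection sends the class of $ax^i$ to the class of $\theta(a)x^i$, i.e. it coincides with $\phi_x|_{\mathcal R}$, which in particular shows $\phi_x$ does restrict to a well-defined automorphism of $\mathcal R$. Finally, $x \in R$ gives $x \in \mathcal R$, and in $\mathcal R$ one has $x^n = \lambda \in \units K$, so $x^{-1} = \lambda^{-1} x^{n-1} \in \mathcal R$; thus $x \in \units{\mathcal R}$ and $\phi_x|_{\mathcal R}(g) = xgx^{-1}$ is conjugation by a unit of $\mathcal R$, hence an element of $\Inn(\mathcal R)$. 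I do not expect any genuine obstacle here; the only point worth stating with a little care is that $\phi_x$, a priori merely an automorphism of $\mathcal S$, really does send $\mathcal R$ into $\mathcal R$, which is ensured either by the descent argument above or directly by the fact that $\mathcal R$ is a subring of $\mathcal S$ containing both $x$ and $x^{-1} = \lambda^{-1}x^{n-1}$.
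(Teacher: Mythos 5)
Your proof is correct and follows essentially the same route as the paper, which simply declares the statement immediate from \eqref{eq:phxmonomials} and the fact that \(\theta_{|F} = \sigma\); you have merely spelled out the details (stability of \(R\) under \(\overline{\phi_x}\), the descent modulo \(x^n-\lambda\), and \(x \in \units{\mathcal R}\) with \(x^{-1} = \lambda^{-1}x^{n-1}\)). The observation that \(\overline{\phi_x}\) coincides with the automorphism \(\theta\) of \eqref{eq:thetaL} used in Lemma \ref{lclminR} is accurate and a nice way to shortcut the verification.
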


\begin{proof}
	This is immediate from \eqref{eq:phxmonomials} and the fact that \(\theta(a) = \sigma(a)\) for all \(a \in F\).
\end{proof}

The following lemma applies for any $L/F$ and $F/K$ finite, Galois extensions for some fields $L,F,K$.

\begin{lemma}\label{lemma:NLFK}
	For any $a \in F$, $\norm{L/K}(a) = \norm{F/K}(a)^{[L:F]}$.
\end{lemma}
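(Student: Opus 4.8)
The plan is to exploit the transitivity of the norm in a tower of finite Galois extensions. Since $K \subseteq F \subseteq L$ with $L/F$ and $F/K$ finite Galois, the extension $L/K$ is finite separable (in fact, one checks it is Galois or at least works with the norm as a product over embeddings), and the norm is transitive: $\norm{L/K} = \norm{F/K} \circ \norm{L/F}$. First I would recall that for $a \in F$, the norm $\norm{L/F}(a)$ is simply $a^{[L:F]}$, because $a$ already lies in the base field $F$, so every $F$-embedding of $L$ fixes $a$, and there are $[L:F]$ of them (using separability of $L/F$). This gives $\norm{L/F}(a) = a^{[L:F]}$.

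Next I would apply transitivity of the norm to the tower $K \subseteq F \subseteq L$: for any $a \in L$, $\norm{L/K}(a) = \norm{F/K}(\norm{L/F}(a))$. Specializing to $a \in F$ and substituting the previous computation yields
\[
\norm{L/K}(a) = \norm{F/K}\bigl(a^{[L:F]}\bigr) = \norm{F/K}(a)^{[L:F]},
\]
where the last equality uses multiplicativity of $\norm{F/K}$. This completes the argument.

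The only mild subtlety, and the place I would be most careful, is justifying the two standard facts being invoked: transitivity of the norm in a tower of finite separable extensions, and the formula $\norm{E/k}(a) = a^{[E:k]}$ for $a$ in the base field $k$. Both are classical — see e.g. \cite[Chapter VI]{Lang02} — and hold for finite separable extensions, which is guaranteed here since $L/F$ and $F/K$ are Galois (hence separable) and separability is transitive in towers. Since the lemma is stated for the specific tower arising in the paper where all these hypotheses hold, no further work is needed and the proof is essentially a one-line invocation of these standard properties.
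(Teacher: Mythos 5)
Your proposal is correct and follows essentially the same route as the paper's proof: transitivity of the norm $\norm{L/K} = \norm{F/K} \circ \norm{L/F}$ (citing Lang), the observation that $\norm{L/F}(a) = a^{[L:F]}$ for $a$ in the base field $F$, and multiplicativity of $\norm{F/K}$. No gaps to report.
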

\begin{proof}
	$\norm{L/K} = \norm{F/K} \circ \norm{L/F}$, see \cite[Chapter VI, Theorem 5.1]{Lang02}. For any $a \in F$, $\norm{L/F}(a)$ is equal to $a^{[L:F]}$ by definition of the norm, as $\pi(a) = a$ for any $\pi \in \Gal(L/F)$ and $|\Gal(L/F)| = [L:F]$. As the norm map is multiplicative, $\norm{L/K}(a) = \norm{F/K}\left(a^{[L:F]}\right) = \norm{F/K}(a)^{[L:F]}$.
\end{proof}

In our context, where \( L/F \) is a field extension from Definition \ref{degreesextension}, \( [L:F] = s \). We will heavily use the fact that \( \norm{L/K}(a) = \norm{F/K}(a)^s \) from now on.

\begin{lemma}\label{lemma:phibtocalR}
If \(\beta \in \units{F}\), then the restriction of \( \overline{\varphi_\beta} \) to \( R \) is also an automorphism, while \(\varphi_\beta\) restricts to an isomorphism \(\varphi_\beta : \mathcal{R} \to \mathcal{R}'\), where
\[
\mathcal R' = \frac{R}{R(x^n - \norm{F/K}(\beta)^s \lambda)}.
\]
\end{lemma}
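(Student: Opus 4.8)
The plan is to reduce this to the already-established behaviour of $\overline{\varphi_\beta}$ and $\varphi_\beta$ on $S$ and $\mathcal{S}$, observing that the extra hypothesis $\beta \in \units{F}$ is exactly what is needed to keep everything inside $R$ and inside the right quotient ring. First I would recall from the construction preceding Proposition \ref{varphibetainner} that $\overline{\varphi_\beta} : S \to S$ is a ring automorphism with $\overline{\varphi_\beta}(a x^i) = \norm[\theta]{i}(\beta^{-1}) a x^i$. To see that it restricts to an automorphism of $R = F[x;\sigma]$, I would use the characterization (already invoked in the proof of Lemma \ref{lclminR}) that $f \in R$ if and only if $\theta^\mu(f) = f$ under the extension of $\theta$ to $S$ in \eqref{eq:thetaL}. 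Since $\beta \in F = L^{\theta^\mu}$, we have $\theta^\mu(\norm[\theta]{i}(\beta^{-1})) = \norm[\theta]{i}(\beta^{-1})$ because each factor $\theta^j(\beta^{-1})$ for $0 \le j \le i-1$ lies in $F$ (indeed $\theta^j(\beta^{-1}) = \sigma^j(\beta^{-1}) \in F$ as $\theta_{|F} = \sigma$). Hence $\overline{\varphi_\beta}$ commutes with $\theta^\mu$ and therefore maps $R$ to $R$; its inverse $\overline{\varphi_{\beta^{-1}}}$ does likewise, so the restriction is an automorphism of $R$.

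Next I would identify the image of the relevant central element. The computation \eqref{eq:baroncentral} gives $\overline{\varphi_\beta}(x^n - \lambda) = \norm{L/K}(\beta^{-1})(x^n - \norm{L/K}(\beta)\lambda)$, and by Lemma \ref{lemma:NLFK} (with $[L:F]=s$) we have $\norm{L/K}(\beta) = \norm{F/K}(\beta)^s$. Therefore $\overline{\varphi_\beta}$ carries the twosided ideal $R(x^n - \lambda)$ onto the twosided ideal $R(x^n - \norm{F/K}(\beta)^s\lambda)$ of $R$ — note that $\norm{F/K}(\beta)^s \in K$, so $x^n - \norm{F/K}(\beta)^s\lambda$ is indeed central in $R$ and the quotient $\mathcal{R}'$ is well defined. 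Passing to quotients, $\overline{\varphi_\beta}$ induces a ring isomorphism $\varphi_\beta : \mathcal{R} \to \mathcal{R}'$, and comparing with \eqref{eq:phbmonomials} this is exactly the restriction of $\varphi_\beta : \mathcal{S} \to \mathcal{S}'$ to $\mathcal{R} \subseteq \mathcal{S}$, since the formula $\varphi_\beta(a x^i) = \norm[\theta]{i}(\beta^{-1}) a x^i$ sends an element of $\mathcal{R}$ (a skew polynomial in $R$ of degree less than $n$) to another such element.

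I do not anticipate a serious obstacle here; the only point requiring a little care is making sure that $\norm[\theta]{i}(\beta^{-1})$ genuinely lies in $F$, which is where $\beta \in \units{F}$ (rather than merely $\beta \in \units{L}$) is used, and that the new modulus $\norm{F/K}(\beta)^s\lambda$ is central in $R$ so that $\mathcal{R}'$ makes sense as a ring; both are settled by Lemma \ref{lemma:NLFK} together with the fixed-field descriptions $F = L^{\theta^\mu}$ and $K = F^\sigma$. One should also remark that the new constant is nonzero, since $\beta$ and $\lambda$ are units, so $\mathcal{R}'$ falls within the framework of Section 3. With these observations the proof is essentially a transcription of the argument for $\varphi_\beta$ on $\mathcal{S}$, restricted along the inclusion $\mathcal{R} \subseteq \mathcal{S}$, together with the fixed-field computation showing the restriction stays inside $R$.
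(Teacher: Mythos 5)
Your argument is correct and follows the same route as the paper's proof: since $\beta \in \units F$ the truncated norms $\norm[\theta]{i}(\beta^{-1})$ lie in $F$ (the paper phrases this as $\norm[\theta]{i}(\beta) = \norm[\sigma]{i}(\beta)$), so $\overline{\varphi_\beta}(R) \subseteq R$, and then \eqref{eq:baroncentral} together with Lemma \ref{lemma:NLFK} identifies the image of $x^n - \lambda$ and yields the induced isomorphism on the quotients. Your additional remarks (centrality and nonvanishing of the new constant, the inverse $\overline{\varphi_{\beta^{-1}}}$ restricting as well) are natural elaborations of the same argument, not a different method.
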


\begin{proof}
	\(\norm{L/K}(\beta) = \norm{F/K}(\beta)^s\) by Lemma \ref{lemma:NLFK}, and \(\norm[\theta]{i}(\beta) = \norm[\sigma]{i}(\beta)\). Hence, \(\overline{\varphi_\beta}(R) \subseteq R\) and \(\overline{\varphi_\beta}(x^n-\lambda) = \norm{F/K}(\beta)^{-s}(x^n - \norm{F/K}(\beta)^s \lambda)\) by \eqref{eq:baroncentral}. The lemma follows.
\end{proof}

It is immediate from this that skew \( \lambda \)-constacyclic codes are isometrically isomorphic to skew cyclic codes if there is some \( u \in F \) such that \( \norm{F/K}(u)^s = \lambda \), that is, if the latest statement in Proposition \ref{prop:decomposition} is true for some \( u \in F \). The family of isomorphisms given by Lemma \ref{lemma:phibtocalR} has been described for skew constacyclic codes over finite fields in \cite[Section 3]{BBB21}, where it is used to study whether skew \( \lambda \)-constacyclic codes, skew cyclic codes and skew negacyclic codes are equivalent. If \( s = 1 \), it follows that any \( u \in L \) satisfying the latest statement in Proposition \ref{prop:decomposition} is also in \( F \) as \( L = F \). See also \cite[Proposition 23]{GLN19b} for an isomorphism of this family for the case where \( s = 1 \), \(\mathcal R' = R/R(x^n - 1)\) and \( F \) is of the form \( \FF_q(t) \).

\begin{lemma}\label{lemma:phibtocalRautomorphism}
Let \(\beta \in \units{L}\) such that \(\norm{L/K}(\beta) = 1\). Then \(\varphi_\beta\) restricts to an automorphism of \(\mathcal{R}\) if and only if \(\beta \in \units{F}\).
\end{lemma}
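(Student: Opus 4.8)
The plan is to exploit the criterion, already established in Lemma~\ref{lemma:phibtocalR}, that $\varphi_\beta$ maps $\mathcal R$ into $\mathcal R' = R/R(x^n - \norm{F/K}(\beta)^s\lambda)$. When $\norm{L/K}(\beta)=1$ we have $\mathcal S' = \mathcal S$, so $\varphi_\beta$ is an automorphism of $\mathcal S$; the question is precisely whether it carries the subring $\mathcal R$ onto itself. First I would observe that $\varphi_\beta$ restricts to an automorphism of $\mathcal R$ if and only if the underlying map $\overline{\varphi_\beta}$ restricts to an automorphism of $R = F[x;\sigma] = S^{\theta^\mu}$, which happens if and only if $\overline{\varphi_\beta}$ commutes with the action of $\theta^\mu$ on $S$; indeed $\mathcal R$ is the image of $R$, and a skew polynomial $f \in S$ lies in $R$ exactly when $\theta^\mu(f) = f$ (as recalled in the proof of Lemma~\ref{lclminR}).

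For the easy direction, if $\beta \in \units F$ then $\norm[\theta]{i}(\beta) = \norm[\sigma]{i}(\beta) \in F$ for all $i$, so by \eqref{eq:phbmonomials} $\varphi_\beta$ sends each monomial $a x^i$ with $a \in F$ to $\norm[\sigma]{i}(\beta^{-1}) a x^i \in R$; hence $\varphi_\beta(\mathcal R) \subseteq \mathcal R$, and since $\beta^{-1} \in \units F$ as well the same applies to $\varphi_{\beta^{-1}} = \varphi_\beta^{-1}$, giving an automorphism of $\mathcal R$.

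For the converse, suppose $\varphi_\beta$ restricts to an automorphism of $\mathcal R$. Evaluating on the monomial $x$ (which lies in $\mathcal R$) we get $\varphi_\beta(x) = \beta^{-1} x$, which must therefore lie in $\mathcal R$, i.e. its coefficient $\beta^{-1}$ must lie in $F$; hence $\beta \in \units F$. Concretely, one writes that $\beta^{-1}x \in \mathcal R$ forces $\beta^{-1} \in F$ because the $F$-coefficients of any element of $\mathcal R$, viewed inside $\mathcal S$, are constrained to $F = L^{\theta^\mu}$, and $\beta^{-1}$ is exactly the coefficient of $x^1$. This closes the equivalence. The one point requiring a little care — and the main (minor) obstacle — is making rigorous the claim that an element of $\mathcal S$ represented by a skew polynomial of degree $<n$ lies in the subring $\mathcal R$ precisely when all its coefficients lie in $F$; this follows since $\mathcal R \subseteq \mathcal S$ is realized by skew polynomials in $R$ of degree $<n$, and the $F$-linear isomorphism with $F^n$ (resp.\ $L^n$) shows the two submodules meet exactly in the span of $\{x^i : 0 \le i \le n-1\}$ over $F$.
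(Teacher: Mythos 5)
Your proof is correct and follows essentially the same route as the paper: the necessity of $\beta\in\units F$ is read off from $\varphi_\beta(x)=\beta^{-1}x\in\mathcal R$ exactly as in the paper, and your direct monomial computation for sufficiency just re-derives the special case of Lemma~\ref{lemma:phibtocalR} (with $\norm{F/K}(\beta)^s=\norm{L/K}(\beta)=1$, so $\mathcal R'=\mathcal R$) that the paper cites instead.
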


\begin{proof}
Since $\varphi_\beta(x) = \beta^{-1} x$, if \(\varphi_\beta\) restricts to an automorphism of \(\mathcal{R}\) then \(\beta \in \units{F}\). The reciprocal statement follows from the fact that \(\norm{F/K}(\beta)^s = \norm{L/K}(\beta) = 1\) as a result of Lemma \ref{lemma:phibtocalR}.
\end{proof}

\begin{remark}
As shown in the previous proof, \( \norm{L/K}(\beta) = 1 \) implies that \(\norm{F/K}(\beta)\) is an \(s\)-th root of unity, not necessarily being \( 1 \). Hence, there does not necessarily exist any \(\alpha \in F\) such that \(\beta = \sigma(\alpha)\alpha^{-1}\), so the restriction of the inner automorphism \( \varphi_\beta : \mathcal S \to \mathcal S \) to an automorphism in \( \mathcal R \) is not necessarily inner.
\end{remark}

\newcommand{\R}[2]{\mathcal R_{#1, #2}}
These isomorphisms also have the following property, which results in them also keeping the dual distance (that is, the Hamming distance of its dual) of any code. For convenience, we shall define \( \R n a \) as \( \frac R {R(x^n - a)} \) for any \( a \in K = F^\sigma \) and any \( n \in \ZZ^+ \). Note that the left ideals of \( \R n a \) are skew \( a \)-constacyclic codes of length \( n \), and \( \mathcal R = \R n \lambda \).

\begin{prop}\label{prop:isomdual}
For any \(\beta \in \units F\), any \( i \in \ZZ \) and any left ideal $\mathcal D \subseteq \mathcal R$, it follows that \( (\varphi_\beta \circ \phi_x^i)(\mathcal D)^\perp \) and \( (\varphi_{\beta^{-1}} \circ \phi_x^i)(\mathcal D^\perp) \) are the same left ideal in the ring \( \R n {\norm{F/K}(\beta)^{-s}\lambda^{-1}} \).
\end{prop}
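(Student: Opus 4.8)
The plan is to reduce the statement to the behavior of the anti-isomorphism $\Theta$ (from \eqref{eq:hatR}, or more precisely from \eqref{eq:ThetaS} restricted to $\mathcal R$) under composition with $\varphi_\beta$ and $\phi_x$. First I would recall that, by Proposition \ref{prop:dual} (together with the discussion of $h^\Theta$ around Definition \ref{defn:monicrec}), the dual of a skew $\lambda$-constacyclic code $\mathcal D = \mathcal R g \subseteq \mathcal R$ can be described as $\widehat{\mathcal R}\,\Theta(h)$ where $hg = x^n - \lambda$; more usefully, $\mathcal D^\perp$ is precisely the image of the annihilator of $\mathcal D$ under $\Theta$, so that duality is intrinsically governed by $\Theta$ and does not depend on a choice of generator. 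The crucial computational lemma I would isolate is a commutation identity: for $\beta \in \units F$ and $i \in \ZZ$, the maps $\varphi_\beta, \phi_x$ on $\mathcal R$ (and their analogues on the target rings $\mathcal R'$) satisfy
\[
\Theta \circ \varphi_\beta = \varphi_{\beta^{-1}} \circ \Theta, \qquad \Theta \circ \phi_x = \phi_x \circ \Theta
\]
up to the obvious identification of source and target rings. I expect this to be the main obstacle, and it is where the sign/exponent bookkeeping lives: one checks it on monomials $a x^j$ using \eqref{eq:phbmonomials}, \eqref{eq:phxmonomials} and the defining formula $\Theta(a x^j) = \sigma^{-j}(a) x^{-j}$, keeping careful track of how $\norm[\theta]{j}(\beta^{-1})$ transforms under $\Theta$ and of the fact that $x^{-1}$ means $\lambda x^{n-1}$ in the source but $(\norm{F/K}(\beta)^s\lambda) x^{n-1}$ (resp. its inverse) in the target ring.

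Granting the commutation identity, the rest is formal. Since $\varphi_\beta$ and $\phi_x$ are ring isomorphisms, for any left ideal $\mathcal D$ one has
\[
\big((\varphi_\beta \circ \phi_x^i)(\mathcal D)\big)^\perp
= \Theta\!\left(\mathrm{ann}\big((\varphi_\beta \circ \phi_x^i)(\mathcal D)\big)\right)
= \Theta\!\left((\varphi_\beta \circ \phi_x^i)(\mathrm{ann}\,\mathcal D)\right),
\]
and then pushing $\Theta$ through $\varphi_\beta \circ \phi_x^i$ via the commutation identity converts this into $(\varphi_{\beta^{-1}} \circ \phi_x^i)\big(\Theta(\mathrm{ann}\,\mathcal D)\big) = (\varphi_{\beta^{-1}} \circ \phi_x^i)(\mathcal D^\perp)$. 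The target ring of $\varphi_\beta$ applied to $\widehat{\mathcal R} = \R n {\lambda^{-1}}$ is $\R n {\norm{F/K}(\beta)^{-s}\lambda^{-1}}$ by the same computation as in Lemma \ref{lemma:phibtocalR} (with $\lambda$ replaced by $\lambda^{-1}$), which matches the ring named in the statement; and $\phi_x$ is an automorphism of each such ring by Lemma \ref{lemma:phitocalR}. I would also remark that all the maps involved are well defined on $\mathcal R$ precisely because $\beta \in \units F$ (Lemma \ref{lemma:phibtocalR}), so that the hypothesis $\beta \in \units F$, rather than merely $\beta \in \units L$, is exactly what is needed.

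Finally, I would note the payoff for the application: since $\phi_x$ and $\varphi_\beta$ preserve the Hamming weight by Proposition \ref{prop:isometries}, the identification just proved shows that $(\varphi_\beta \circ \phi_x^i)(\mathcal D)$ has the same dual distance as $\mathcal D$, so these isometries can be applied to one member of an LCP while controlling the security parameter of Definition \ref{defn:secparam}. A small technical point to address carefully is that $\Theta$, being an anti-isomorphism, sends left ideals to left ideals only because it sends the right annihilator of $\mathcal D$ in $\mathcal R$ (the correct object attached to $\mathcal D^\perp$) to a left ideal of $\widehat{\mathcal R}$; I would phrase the argument in terms of annihilators to keep the left/right sides straight, or alternatively argue directly with generators, replacing $g$ by $\gcrd{g, x^n-\lambda}$ as in Proposition \ref{prop:dual2} whenever a non-dividing generator appears.
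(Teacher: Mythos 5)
Your proposal is correct in outline, but it takes a genuinely different route from the paper. The paper's proof never invokes \( \Theta \): it works directly with the bilinear form \( \pesc f g = \sum_i a_i b_i \) on coefficient vectors and establishes two adjointness identities, \( \pesc{\phi_x(f)}{\phi_x(g)} = \sigma(\pesc f g) \) and \( \pesc f {\varphi_\beta(g)} = \pesc{\varphi_\beta(f)} g \), from which \( \phi_x(\mathcal C)^\perp = \phi_x(\mathcal C^\perp) \) and \( \varphi_\beta(\mathcal C)^\perp = \varphi_{\beta^{-1}}(\mathcal C^\perp) \) follow by unwinding the definition of the dual; composing the two gives the statement, with Proposition \ref{prop:dual} and Lemmas \ref{lemma:phitocalR}, \ref{lemma:phibtocalR} used only to identify the ambient rings, exactly as you do. Your route instead derives the statement from the \( \Theta \)-description of duality together with the commutation identities \( \Theta\circ\phi_x = \phi_x\circ\Theta \) and \( \Theta\circ\varphi_\beta = \varphi_{\beta^{-1}}\circ\Theta \). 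These identities do hold: on a monomial \( ax^j \) with \( 1 \le j \le n-1 \), both \( \Theta(\varphi_\beta(ax^j)) \) and \( \varphi_{\beta^{-1}}(\Theta(ax^j)) \) evaluate to \( \norm[\sigma]{n-j}(\beta)\,\lambda\,\sigma^{-j}(a)\,x^{n-j} \), once one notes that applying \( \sigma^{-j} \) to \( \norm[\sigma]{j}(\beta^{-1}) \) yields \( \prod_{i=1}^{j}\sigma^{-i}(\beta^{-1}) \) (an index shift that is easy to get wrong) and that \( x^{-j} \) means \( \lambda x^{n-j} \) in the target of \( \Theta \) on \( \mathcal R \) but \( \norm{F/K}(\beta)^{s}\lambda\, x^{n-j} \) in the target of \( \Theta \) on the \( \norm{F/K}(\beta)^{s}\lambda \)-constacyclic ring; since this verification is essentially the whole content of the proof on your route, it should be written out rather than only flagged. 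One further point to close: the formulation \( \mathcal D^\perp = \Theta(\operatorname{ann}\mathcal D) \) is the content of the reference cited in Proposition \ref{prop:dual} but is not literally stated in this paper, so you should either add the one-line observation that the right annihilator of \( \mathcal R g \) in \( \mathcal R \) is \( h\mathcal R \) when \( hg = x^n-\lambda \) (Lemma \ref{lemma:ghhg} makes left and right factorizations interchangeable), or use the generator-based variant you already mention: apply \( \overline{\varphi_\beta}\circ\overline{\phi_x}^{\,i} \) to the factorization \( hg = x^n-\lambda \), use \eqref{eq:baroncentral}, and then invoke Proposition \ref{prop:dual} in the target ring. The trade-off between the two approaches: the paper's inner-product argument is shorter and more elementary, while yours explains structurally why \( \varphi_\beta \) turns into \( \varphi_{\beta^{-1}} \) on duals, namely that duality is conjugation by \( \Theta \) and \( \varphi_{\beta^{-1}} \) is the \( \Theta \)-conjugate of \( \varphi_\beta \).
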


\begin{proof}
Consider any left ideal \( \mathcal C \subseteq \R n a \). By Proposition \ref{prop:dual}, \(\mathcal{C}^\perp \) is a left ideal in \( \R n {a^{-1}} \), while Lemmas \ref{lemma:phitocalR} and \ref{lemma:phibtocalR} show, respectively, that \( \phi_x(\mathcal C) \) is a left ideal in \( \R n a \) and \( \varphi_\beta(\mathcal C) \) is a left ideal in \( \R n {\norm{F/K}(\beta)^s a} \). This means that both \( (\varphi_\beta \circ \phi_x^i)(\mathcal D)^\perp \) and \( (\varphi_{\beta^{-1}} \circ \phi_x^i)(\mathcal D^\perp) \) are left ideals in \( \R n {\norm{F/K}(\beta)^{-s}\lambda^{-1}} \).

The vector space isomorphism \eqref{eq:vsisom} can be considered for any ring of the form \( \R n a \). As a result, we may evaluate the inner product for any pair of elements in two rings of the form \( \R n a, \R n b \): for any \(f = \sum_{i=0}^{n-1} a_i x^i \in \R n a \) and any \( g = \sum_{i=0}^{n-1} b_i x^i \in \R n b\) such that \( a, b \in K \), the inner product of $f$ and $g$ is defined as
\[
\pesc{f}{g} = \sum_{i=0}^{n-1} a_i b_i.
\]

As $\sigma$ is a ring automorphism in $F$, we get from \eqref{eq:phxmonomials} that
\[
\pesc {\phi_x(f)} {\phi_x(g)} = \sum_{i=0}^{n-1} \sigma(a_i) \sigma(b_i) =   \sigma(\pesc f g)
\]
since \(\theta_{\mid F} = \sigma\). 
Thus, $\pesc {\phi_x(f)} g = 0$ if and only if $\pesc f {\phi_x^{-1}(g)} = 0$. Therefore, if \( \mathcal C \subseteq \R n a \), then
\[
\begin{aligned}
		\phi_x(\mathcal C)^\perp & = \{ g \in \R n {a^{-1}} : \pesc f g = 0 \ \tforall f \in \phi_x(\mathcal C) \} \\
		& = \{ g \in \R n {a^{-1}} : \pesc {\phi_x(f)} g = 0 \ \tforall f \in \mathcal C \} \\
		& = \{ g \in \R n {a^{-1}} : \pesc f {\phi_x^{-1}(g)}= 0 \ \tforall f \in \mathcal C \} \\
		& = \{ \phi_x(g) : g \in \R n {a^{-1}}, \pesc f g = 0 \ \tforall f \in \mathcal C \} \\
		& = \phi_x(\mathcal C^\perp).
	\end{aligned}
	\]
On the other hand, by \eqref{eq:phbmonomials}
\[
\pesc f {\varphi_\beta(g)} = \sum_{i=0}^{n-1} \norm[\theta]{i}(\beta)^{-1} a_i b_i = \pesc {\varphi_\beta(f)} g,
\]
so for any \( \mathcal C \subseteq \R n a \)
\[
\begin{aligned}
		\varphi_\beta(\mathcal C)^\perp & = \{ g \in \R n {\norm{F/K}(\beta)^{-s}a^{-1}} : \pesc f g = 0 \ \tforall f \in \varphi_\beta(\mathcal C) \} \\
		& = \{ g \in \R n {\norm{F/K}(\beta)^{-s}a^{-1}} : \pesc {\varphi_\beta(f)} g = 0 \ \tforall f \in \mathcal C \} \\
		& = \{ g \in \R n {\norm{F/K}(\beta)^{-s}a^{-1}} : \pesc f {\varphi_\beta(g)}= 0 \ \tforall f \in \mathcal C \} \\
		& = \{ \varphi_{\beta^{-1}}(g) : g \in \R n {a^{-1}}, \pesc f g = 0 \ \tforall f \in \mathcal C \} \\
		& = \varphi_{\beta^{-1}}(\mathcal C^\perp).
\end{aligned}
\]
By combining the previous identities and applying them for a left ideal \( \mathcal D \subseteq \mathcal R = \R n \lambda \), we get that
\[
	\varphi_\beta(\phi_x^i(\mathcal D))^\perp
	= \varphi_{\beta^{-1}}(\phi_x^i(\mathcal D)^\perp)
	= \varphi_{\beta^{-1}} (\phi_x^i(\mathcal D^\perp)),
\]
which is the asserted equality of the two left ideals.
\end{proof}

These results mean that we may replace one or both of the codes in a pair \( (\mathcal C, \mathcal D) \) with the result of applying these isomorphisms to them and the security parameter remains the same.

\newcommand{\isom}{\zeta}
\begin{prop}\label{prop:isom-secparam}
	The security parameter, as given in Definition \ref{defn:secparam}, for a pair of codes \( (\mathcal C, \mathcal D) \) is the same as the one for \( (\isom_1(\mathcal C), \isom_2(\mathcal D)) \) where \( \isom_1 = \varphi_{\beta_1} \circ \phi_x^{i_1} \) and \( \isom_2 = \varphi_{\beta_2} \circ \phi_x^{i_2} \) for any \( \beta_1, \beta_2 \in \units F \) and any \( i_1, i_2 \in \ZZ \).
\end{prop}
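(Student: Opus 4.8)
The statement asserts that the security parameter, which is the minimum of $\distance(\mathcal C)$ and $\distance(\mathcal D^\perp)$, is invariant under applying maps of the form $\isom_1 = \varphi_{\beta_1} \circ \phi_x^{i_1}$ to $\mathcal C$ and $\isom_2 = \varphi_{\beta_2} \circ \phi_x^{i_2}$ to $\mathcal D$. The plan is to handle the two pieces of the minimum separately. For the first piece, I would invoke Proposition \ref{prop:isometries}, which says that $\phi_x$ and each $\varphi_\beta$ (and their restrictions to $\mathcal R$) are Hamming-weight-preserving, hence Hamming-distance-preserving, bijections. A composition of isometries is an isometry, so $\isom_1 : \mathcal R \to \R n {\norm{F/K}(\beta_1)^s \lambda}$ is an isometry; therefore $\distance(\isom_1(\mathcal C)) = \distance(\mathcal C)$, since a bijective isometry sends a code onto a code with the same multiset of pairwise distances and in particular the same minimum distance.

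For the second piece, the key input is Proposition \ref{prop:isomdual}, which identifies $(\varphi_{\beta_2} \circ \phi_x^{i_2})(\mathcal D)^\perp$ with $(\varphi_{\beta_2^{-1}} \circ \phi_x^{i_2})(\mathcal D^\perp)$ as left ideals in the same ring $\R n {\norm{F/K}(\beta_2)^{-s}\lambda^{-1}}$. Thus $\isom_2(\mathcal D)^\perp = (\varphi_{\beta_2^{-1}} \circ \phi_x^{i_2})(\mathcal D^\perp)$, and the right-hand side is again the image of $\mathcal D^\perp$ under a composition of maps of the type covered by Proposition \ref{prop:isometries} (note $\beta_2^{-1} \in \units F$ whenever $\beta_2 \in \units F$). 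Hence it is an isometric image of $\mathcal D^\perp$, so $\distance(\isom_2(\mathcal D)^\perp) = \distance(\mathcal D^\perp)$. Taking the minimum of the two equalities gives that the security parameter of $(\isom_1(\mathcal C), \isom_2(\mathcal D))$ equals that of $(\mathcal C, \mathcal D)$.

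I do not anticipate a genuine obstacle here: all the substantive work has already been done in Propositions \ref{prop:isometries} and \ref{prop:isomdual}. The only point requiring a little care is bookkeeping about which ring each code lives in — the ambient ring changes from $\mathcal R$ to $\R n {\norm{F/K}(\beta)^{\pm s}\lambda^{\pm 1}}$ — but this does not affect the Hamming metric, since the metric on any $\R n a$ comes from the same coordinate identification \eqref{eq:vsisom}, and distance is computed intrinsically within each code. One should also remark that $\isom_1(\mathcal C)$ and $\isom_2(\mathcal D)$ are skew constacyclic codes (left ideals) in a common ring, namely $\R n {\norm{F/K}(\beta_1)^s\lambda}$ and with the pairing arranged so that the dual of $\isom_2(\mathcal D)$ sits in the ring for which $\isom_1(\mathcal C)$'s ambient ring is the corresponding "primal" one — but the statement as phrased only concerns the numerical value of the security parameter, so this compatibility is not strictly needed for the proof and can be left as an observation.
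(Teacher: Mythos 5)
Your argument is correct and is essentially identical to the paper's proof: it deduces $\distance(\isom_1(\mathcal C)) = \distance(\mathcal C)$ from Proposition \ref{prop:isometries} (compositions of isometries are isometries) and then uses Proposition \ref{prop:isomdual} to rewrite $\isom_2(\mathcal D)^\perp$ as $(\varphi_{\beta_2^{-1}} \circ \phi_x^{i_2})(\mathcal D^\perp)$, which is again an isometric image of $\mathcal D^\perp$. The extra bookkeeping about the ambient rings $\R n a$ is a harmless (and accurate) elaboration of what the paper leaves implicit.
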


\begin{proof}
	Proposition \ref{prop:isometries} shows that \( \distance(\mathcal C) = \distance(\isom_1(\mathcal C)) \) as the composition of isometries is an isometry. By Proposition \ref{prop:isomdual}, \( \isom_2(\mathcal D)^\perp = (\varphi_{\beta_2} \circ \phi_x^{i_2})(\mathcal D)^\perp = (\varphi_{(\beta_2)^{-1}} \circ \phi_x^{i_2})(\mathcal D^\perp) \), which has the same minimum distance as \( \mathcal D^\perp \) for the same reason.
\end{proof}

Lemmas \ref{lemma:phitocalR} and \ref{lemma:phibtocalRautomorphism} have shown which ones of the previously discussed isomorphisms are automorphisms. These span a group of automorphisms in \( \mathcal R \), which can be used to modify a pair of skew \( \lambda \)-constacyclic codes into another one with the same security parameter. The following result explicitly describes this group.

\begin{prop}\label{prop:automorphismgroup}
	The group \(G\) spanned by $\phi_x$ and $\{ \varphi_\beta : \beta \in F, \norm{F/K}(\beta)^s = 1 \}$ is
	\begin{equation}\label{eq:G}
		G = \{ \varphi_\beta \circ \phi_x^i : \beta \in F, \ \norm{F/K}(\beta)^s = 1, \ 0 \le i < \mu \}
	\end{equation}
	where $\mu = |\sigma|$.
	These elements are all distinct. Therefore, if $F$ is a finite field, $G$ has exactly
	\begin{equation}\label{eq:orderG}
		\mu\frac{|\units F|}{|\units K|}\gcd(s, |\units K|)
	\end{equation}
	 elements.
\end{prop}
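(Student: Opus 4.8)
The plan is to work out the multiplicative structure of the family $\{\varphi_\beta \circ \phi_x^i\}$, to identify the set displayed in \eqref{eq:G} as a subgroup of the automorphism group of $\mathcal R$, to prove the listed elements are pairwise distinct, and finally to count them when $F$ is finite.

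The conceptual crux is a commutation relation between $\phi_x$ and the maps $\varphi_\beta$. Evaluating both sides on a monomial $a x^i$ via \eqref{eq:phxmonomials} and \eqref{eq:phbmonomials} (recall that $\norm[\theta]{i} = \norm[\sigma]{i}$ on $F$), and using the identity $\sigma(\norm[\sigma]{i}(\beta^{-1})) = \norm[\sigma]{i}(\sigma(\beta)^{-1})$, I would obtain $\phi_x \circ \varphi_\beta = \varphi_{\sigma(\beta)} \circ \phi_x$ for every $\beta \in \units F$, hence $\phi_x^j \circ \varphi_\beta = \varphi_{\sigma^j(\beta)} \circ \phi_x^j$ for all $j$. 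Together with $\varphi_{\beta_1} \circ \varphi_{\beta_2} = \varphi_{\beta_1 \beta_2}$ and $\phi_x^\mu = \operatorname{id}$ --- both immediate from the monomial formulas, the latter because $|\sigma| = \mu$ --- this gives the composition rule $(\varphi_{\beta_1} \circ \phi_x^{i_1}) \circ (\varphi_{\beta_2} \circ \phi_x^{i_2}) = \varphi_{\beta_1 \sigma^{i_1}(\beta_2)} \circ \phi_x^{i_1 + i_2}$.

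With this rule in hand I would check that the right-hand side $H$ of \eqref{eq:G} is closed under composition and inverses: the norm $\norm{F/K}$ is $\langle \sigma \rangle$-invariant, so the constraint $\norm{F/K}(\beta)^s = 1$ survives applying powers of $\sigma$ and multiplying the $\beta$'s, while powers of $\phi_x$ may be reduced modulo $\mu$. Since each $\varphi_\beta$ with $\norm{F/K}(\beta)^s = 1$ is an automorphism of $\mathcal R$ by Lemma \ref{lemma:phibtocalR} (as then $\norm{F/K}(\beta)^s\lambda = \lambda$) and $\phi_x$ is one by Lemma \ref{lemma:phitocalR}, $H$ is a subgroup of the automorphism group of $\mathcal R$; it contains the generators $\phi_x$ (take $\beta = 1$, $i = 1$) and every $\varphi_\beta$ (take $i = 0$), and conversely every element of $H$ is visibly a product of these generators, so $H = G$. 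For distinctness, if $\varphi_{\beta_1} \circ \phi_x^{i_1} = \varphi_{\beta_2} \circ \phi_x^{i_2}$ with $0 \le i_1, i_2 < \mu$, then evaluating at a scalar $a \in F$ forces $\sigma^{i_1} = \sigma^{i_2}$ on $F$, hence $i_1 = i_2$; cancelling the equal powers of $\phi_x$ and evaluating at $x$ (which $\phi_x$ fixes) forces $\beta_1^{-1}x = \beta_2^{-1}x$, hence $\beta_1 = \beta_2$. Consequently $|G| = \mu \cdot |\{\beta \in \units F : \norm{F/K}(\beta)^s = 1\}|$.

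Finally, for the order formula I would regard $\beta \mapsto \norm{F/K}(\beta)^s$ as the composite of $\norm{F/K} : \units F \to \units K$ with the $s$-th power map on $\units K$; the set to be counted is its kernel. When $F$ is finite, $\norm{F/K} : \units F \to \units K$ is a surjective homomorphism (Remark \ref{remark:existenceoflambda}), so its kernel has order $|\units F| / |\units K|$, and the preimage of any subset of $\units K$ of cardinality $m$ has cardinality $m\,|\units F| / |\units K|$; taking $m$ to be the number of $s$-th roots of unity in $\units K$, which equals $\gcd(s, |\units K|)$ since $\units K$ is cyclic, yields \eqref{eq:orderG}. I do not expect a serious obstacle here: the only points requiring care are the monomial bookkeeping behind the commutation relation and the elementary group theory in this last step.
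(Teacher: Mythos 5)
Your proposal is correct and follows essentially the same route as the paper: establish the composition rule $(\varphi_{\beta_1}\circ\phi_x^{i_1})\circ(\varphi_{\beta_2}\circ\phi_x^{i_2})=\varphi_{\beta_1\sigma^{i_1}(\beta_2)}\circ\phi_x^{i_1+i_2}$, use it to see the displayed set is closed under composition and inverses and hence equals $G$, prove distinctness by evaluating on scalars and on $x$, and count via surjectivity of the norm and the $\gcd(s,|\units K|)$ many $s$-th roots of unity in the cyclic group $\units K$. The only differences (deriving the rule from the commutation relation $\phi_x\circ\varphi_\beta=\varphi_{\sigma(\beta)}\circ\phi_x$, and checking distinctness directly for two elements rather than reducing to the identity) are cosmetic.
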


\begin{proof}
Let \(H = \{ \varphi_\beta \circ \phi_x^i : \beta \in F, \ \norm{F/K}(\beta)^s = 1, \ 0 \le i < \mu \}\). Clearly, \(H \subseteq G\). Since \(\varphi_\beta \in H\) for each \(\beta \in F\) such that \(\norm{F/K}(\beta)^s = 1\) and \( \phi_x \in H \), we have to prove that \(H\) is closed under composition and inverses in order to prove \(G = H\). For any $0 \le i, j < \mu$ and any $\beta_1,\beta_2 \in F$ such that $\norm{F/K}(\beta_1)^s = \norm{F/K}(\beta_2)^s = 1$, it is straightforward to check that
\[
\norm{F/K}(\beta_1 \sigma^i(\beta_2))^s = 1
\]
by the multiplicativity of the norm, and
\[
\varphi_{\beta_1} \circ \phi_x^i \circ \varphi_{\beta_2} \circ \phi_x^j = \varphi_{\beta_1 \sigma^i(\beta_2)} \circ \phi_x^{i+j \bmod{n}}.
\]
It follows that \(H\) is closed under composition and that the inverse map of $\varphi_{\beta} \circ \phi^i$ is $\varphi_{\sigma^{-i}(\beta^{-1})} \circ \phi^{\mu-i}$, which is also in $H$.

	In order to show that these elements are all distinct, it is enough to check that the only element equal to the identity map is $\varphi_1 \circ \phi^0$. If $\varphi_\beta \circ \phi_x^i$ is the identity map, then for any $\alpha \in F$ such that $\sigma^i(\alpha) \ne \alpha$ for all $1 \le i < \mu$, we get from \eqref{eq:phbmonomials} and \eqref{eq:phxmonomials} that
\[
\alpha = (\varphi_\beta \circ \phi_x^i)(\alpha) = \varphi_\beta(\sigma^i(\alpha)) = \norm[\theta]{0}(\beta)^{-1} \sigma^i(\alpha) = \sigma^i(\alpha),
\]
so $i = 0$. Again by \eqref{eq:phbmonomials} and \eqref{eq:phxmonomials},
\[
x = (\varphi_\beta \circ \phi_x^i)(x) = \varphi_\beta(x) = \norm[\theta]{1}(\beta)^{-1}x = \beta^{-1}x,
\]
which implies that $\beta = 1$.

Finally, if \(F\) is finite, there are exactly \(\frac{|\units{F}|}{|\units{K}|}\) elements in \(\units{F}\) with a given nonzero norm with respect to the extension \( F/K \), and there are \(\gcd(s,|\units{K}|)\) \(s\)-th roots of \(1\) in the cyclic group \( \units K \). As a result of these well-known facts (see \cite{Lidl97}), the order of \(G\) is \(\mu \frac{|\units{F}|}{|\units{K}|}\gcd(s,|\units{K}|)\). 
\end{proof}

\section{Examples of constructions of LCPs of skew constacyclic codes with control over the security parameter}\label{section:examples}

Theorem \ref{thm:LCP} gives a simple criterion for evaluating whether, given the generator polynomials for two skew \( \lambda \)-constacyclic codes \( \mathcal C, \mathcal D \subset \mathcal R \) where \( \mathcal R \) is as in \eqref{introR}, the pair $(\mathcal C, \mathcal D)$ constitutes a linear complementary pair of codes. On the other hand, computing the security parameter for the pair, or a lower bound thereof, requires, in principle, either prior knowledge about both \( \mathcal C \) and \( \mathcal D^\perp \) or running some algorithms which are significantly more costly than the one for determining whether the pair is an LCP of codes. The same is true for the more general setting considered in \cite{Ngo15}.

However, the isomorphisms described in the previous section, as well as the results on duals of skew constacyclic codes, allow us to take different approaches from the one in \cite{Ngo15}. We may first consider a pair \( (\mathcal C, \mathcal D) \) of skew \( \lambda \)-constacyclic codes of length \( n \) such that the sum of their dimensions is \( n \) and the security parameter is high enough for the desired purposes, not necessarily caring at first about whether the pair is an LCP of codes. Then, we may apply automorphisms in the group \( G \), described in \eqref{eq:G}, to one element in the pair until the modified pair, which keeps the same security parameter by Proposition \ref{prop:isom-secparam}, is shown to be an LCP of skew constacyclic codes through Theorem \ref{thm:LCP}.

Theorem \ref{thm:BCHLCP}, which directly yields LCPs of codes, offers a relatively straightforward way to generate a pair of codes with some known lower bound for its security parameter. In fact, for \( s = 1 \), it yields LCPs of MDS codes, so the resulting security parameter is the maximum possible for an \( [n, k] \) code and an \( [n, n-k] \) code, which is the value from the Singleton bound, \( n - k + 1 \). However, one might be interested in another approach if the minimum distance given by BCH codes happens to be unsatisfactory for the required purposes, which might be the case for \( s > 1 \), or if one or both of the codes are required to belong to a particular subfamily of skew constacyclic codes in order to, for example, support a particular error-correcting decoding algorithm. As we are considering a security parameter as high as possible and this is covered by Theorem \ref{thm:BCHLCP} for \( s = 1 \), we will restrict ourselves to \( s > 1 \) in the following examples, although analogous constructions can be done for \( s = 1 \).

If exacly one of the codes, either \( \mathcal C \) or \( \mathcal D \), is given as a result of some restrictions, then the security parameter is at most \( \distance(\mathcal C) \) or, respectively, \( \distance(\mathcal D^\perp) \). In this case, we may attempt to complete the pair in such a way that the security parameter takes exactly that value.
There are several options for doing so. We will illustrate a few of them through examples.

As the dual of a skew \( \lambda \)-constacyclic code is a skew \( \lambda^{-1} \)-constacyclic code, as shown in Proposition \ref{prop:dual}, if \( \lambda^{-1} = \lambda \) then the most immediate option is to take \( \mathcal D = \mathcal C^\perp \). This is equivalent to checking whether the given code is a linear complementary dual code; that is, if \( (\mathcal C, \mathcal C^\perp) \) or \( (\mathcal D^\perp, \mathcal D) \) is an LCP of codes. If it is not an LCP of codes, such a pair may result by applying isomorphisms to one of the members in the pair.

\begin{example}\label{ex:1roos}
	Take as \( \mathcal C = \mathcal R g \) the first code in \cite[Table 1]{ALN21}, which is the skew BCH cyclic \( [12, 6, 6] \) \( F \)-linear code for \( L = \FF_{2^{12}} = \FF_2(\gamma) \) where \( \gamma^{12} + \gamma^7 + \gamma^6 + \gamma^5 + \gamma^3 + \gamma + 1 = 0 \), \( \theta : L \to L \) is the Frobenius endomorphism (hence \( K = L^\theta = \FF_2, n = 12 \)), \( F = L^{\theta^6} = \FF_{2^6} \) (so \( \mu = 6, s = 2, \sigma = \theta_{|F} \)), \( \alpha = \gamma^5 \) and
	\[
		g = \lclm{x - \theta^{i + 6j}(\alpha^{-1}\theta(\alpha)) ~|~ 2 \le i \le 4, 0 \le j \le 1}.
	\]

	Note that $\alpha^{-1} \theta(\alpha) = u^\alpha$ for $u = 1$, using the notation in \eqref{eq:conjugate}. By Proposition \ref{propBCHlike}, \( g \in R \) and the minimum Hamming distance of \( \mathcal C = \mathcal R g \) is at least \( 4 \), while according to \cite[Table 1]{ALN21} it is in fact \( 6 \). Such a minimum distance cannot be guaranteed for \( \mathcal D^\perp \), which should have dimension \( 6 \), by constructing it through Theorem \ref{thm:BCHLCP} or Proposition \ref{propBCHlike}.
	
	\newcommand{\genF}{\eta}
	If we choose \( \genF = \gamma^9 + \gamma^5 + \gamma^4 + \gamma^2 + \gamma \) as the element such that \( F = K(\genF) \), then \( g = x^6 + (\genF^4 + \genF^3 + 1) x^5 + (\genF^4 + \genF^3 + \genF) x^4 + \genF x^3 + \genF^5 x^2 + (\genF^4 + \genF^3 + \genF^2 + \genF) x + \genF^2 + 1 \) and we may compute the generator \( h^\Theta \) of \( (\mathcal R g)^\perp \) from \( h \), where \( hg = x^{12} - 1 \), using Equation \eqref{eq:dualgenerator} and Definition \ref{defn:monicrec}. This results in \( h^\Theta = x^6 + (\genF^5 + \genF^4)x^5 + (\genF^4 + \genF^3 + \genF^2)x^4 + (\genF^5 + 1)x^3 + (\genF^5 + \genF^3 + 1)x^2 + (\genF^5 + \genF^3 + \genF^2 + \genF + 1)x + \genF^2 + 1 \). The greatest common right divisor of \( g \) and \( h^\Theta \) is \( x^2 + (\genF^5 + \genF^2)x + \genF^5 + \genF^3 + \genF^2 + 1 \). As a result, we have to use the previously studied automorphisms to produce an LCP of codes. For example, \( \phi_x^2(h^\Theta) = x^6 + (\genF^4 + \genF^2)x^5 + (\genF^3 + \genF^2 + 1)x^4 + (\genF^2 + \genF)x^3 + (\genF^5 + \genF^3 + \genF^2 + \genF + 1)x^2 + \genF^3x + \genF^5 + \genF^4 + \genF^2 + \genF \) is such that \( \gcrd{g, \phi_x^2(h^\Theta)} = 1 \). Hence, by Theorem \ref{thm:LCP}, \( (\mathcal R g, \mathcal R \phi_x^2(h^\Theta)) \) is an LCP of codes with security parameter \( 6 \), which is as high as possible for the chosen \( \mathcal C = \mathcal R g \). We can also get a different LCP of codes with \( \phi_x^4(h^\Theta) \), but not \( \phi_x(h^\Theta) \), \( \phi_x^3(h^\Theta) \) or \( \phi_x^5(h^\Theta) \) as they share a proper right divisor with \( g \).

	For every \( u \in \units F \) it is true that \( \norm{F/K}(u)^s = 1 \), as the norm of any element is nonzero and in \( K \), which has \( 1 \) as its only nonzero element. As a result, \( 144 \) LCPs of codes with security parameter \( 6 \) can be obtained from modifying \( h^\Theta \) in \( (g, h^\Theta) \) through the \( 378 \) possible compositions of the \( |\units F| = 63 \) automorphisms of the form \( \varphi_u \) with the six different powers of \( \phi_x \).
\end{example}

\begin{example}\label{ex:9bgu}
	Consider the only code in \cite[Table 2]{BGU07}, which is the skew cyclic \( [44, 20, 17] \) \( \FF_9 \)-linear code generated by \( g = x^{24} + x^{21} + x^{20} + \alpha^7 x^{19} + \alpha^3 x^{18} + 2 x^{17} + \alpha^3 x^{16} + \alpha^5 x^{14} + \alpha^5 x^{13} + 2 x^{12} + \alpha^2 x^{10} + \alpha^7 x^{9} + 2 x^{6} + \alpha^5 x^{5} + \alpha^7 x^{4} + \alpha^3 x^{3} + \alpha^7 x^{2} + \alpha^2 x + 2 \), where \( \alpha \) is such that \( F = \FF_9 = \FF_3(\alpha) \) (we are using \( \alpha^2 = \alpha+1 \)) and \( \sigma \) is the Frobenius endomorphism in \( F \). We shall remark that, according to \cite{Grassltables}, no \( [44, 20] \) code over \( \FF_9 \) is known to have a minimum distance over \( 17 \), which would be required in order to achieve a security parameter over \( 17 \). In fact, the code we are considering was proposed as an improvement over the previously known minimum distance for \( [44, 20] \) codes over \( \FF_9 \).
	
	In this example, we will use \( \mathcal R g \) as both \( \mathcal C \) and \( \mathcal D^\perp \). The generator of \( (\mathcal R g)^\perp \) obtained from Proposition \ref{prop:dual} is \( h = x^{20} + \alpha^6 x^{19} + \alpha x^{18} + \alpha x^{17} + \alpha^2 x^{16} + x^{15} + \alpha^2 x^{14} + \alpha x^{13} + \alpha^2 x^{12} + \alpha^3 x^{11} + \alpha^6 x^{10} + \alpha x^{9} + \alpha^7 x^{8} + \alpha x^{6} + \alpha x^{5} + 2 x^{4} + 2 x^{3} + 1 \). In this case, \( \gcrd{g,h} = 1 \), so \( (\mathcal Rg, \mathcal Rh) \) is a linear complementary pair of codes with security parameter \( 17 \).
	
	By using the isomorphism group described in Proposition \ref{prop:automorphismgroup}, we may get different pairs. In this case, any modification of the pair through applying isomorphisms to one or both of the elements results in yet another linear complementary pair of codes, as the greatest common right divisor is \( 1 \) in all cases. As the cardinality of the isomorphism group is \( 16 \) (\( \mu = 2 \), \( s = 22 \) and \( 2^s = 1 \) so the map \( (\norm{F/K})^s : F \to K \) sends every unit to \( 1 \)), this approach yields \( 16^2 = 256 \) possible LCPs of codes whose security parameter, by Proposition \ref{prop:isom-secparam}, is \( 17 \). In other words, we get \( 16 \) \( [44, 20, 17] \) codes and \( 16 \) \( [44, 24] \) codes with dual distance \( 17 \) such that any pair yields an LCP of codes.
\end{example}

\begin{example}\label{ex:8roos}
	Take as \( \mathcal D = \mathcal R h \) the last code in \cite[Table 1]{ALN21}, which is the skew BCH cyclic \( [10, 4, 7] \) \( F \)-linear code for \( L = \FF_{5^{10}} = \FF_5(\gamma) \) where \( \gamma^{10} + 3\gamma^5 + 3\gamma^4 + 2\gamma^3 + 4\gamma^2 + \gamma + 2 = 0 \), \( \theta : L \to L \) is the Frobenius endomorphism (hence \( K = L^\theta = \FF_5, n = 10 \)), \( F = L^{\theta^5} = \FF_{5^5} \) (so \( \mu = 5, s = 2, \sigma = \theta_{|F} \)), \( \alpha = \gamma^9 + \gamma^7 + \gamma^6 + 3\gamma^5 + 2\gamma^3 + \gamma + 3 = \gamma^{7\,861\,528} \)
	and
	\[
		h = \lclm{x - \theta^{i + 5j}(\alpha^{-1}\theta(\alpha)) ~|~ 0 \le i \le 2, 0 \le j \le 1}.
	\]

	Since \( \mathcal D \) is an MDS code, by \cite[Theorem 2.4.3]{HP03} so is its dual \( \mathcal D^\perp \), which is generated by \( g^\Theta \), where \( g \) is such that \( gh = x^{10} - 1 \), by Proposition \ref{prop:dual}.
	\newcommand{\genF}{\eta}%
	Taking \( \genF = \gamma^{15\,630} \), this results in \( h = x^6 + \genF^{27} x^5 + \genF^{2957} x^4 + \genF^{968} x^3 + \genF^{1148} x^2 + \genF^{2955} x + \genF^{2038} \), and the generator of \( \mathcal D^\perp \) is \( g^\Theta = x^4 + \genF^{599}x^3 + \genF^{1816}x^2 + \genF^{2309}x + \genF^{720} \).
	
	As \( \gcrd{g^\Theta, h} = x + \genF^{570} \), \( (\mathcal D^\perp, \mathcal D) \) fails to be an LCP of codes, while \( (\mathcal R\phi_x(g^\Theta), \mathcal D) \) does result in an LCP of codes with a security parameter of \( \distance(\mathcal D^\perp) = 5 \), which is the maximum possible for a \( [10, 6] \) code and a \( [10, 4] \) code, as \( \phi_x(g^\Theta) = x^4 + \genF^{2995}x^3 + \genF^{2832}x^2 + \genF^{2173}x + \genF^{476} \) does satisfy the condition on the greatest common right divisor. The same is true for \( \phi_x^4(g^\Theta) \), while it is not for \( \phi_x^2(g^\Theta), \phi_x^3(g^\Theta) \).

	By Proposition \ref{prop:automorphismgroup}, there are \( 7810 \) automorphisms available if the ones of the form \( \varphi_\beta \) are also considered. This results in \( 4820 \) possible supplements for \( \mathcal D = \mathcal R h \).
\end{example}

If \( \lambda^{-1} \ne \lambda \), the previous approach can still be used if there is some element \( \beta \in F \) such that \( \norm{F/K}(\beta)^s = \lambda^2 \), as then by Lemma \ref{lemma:phibtocalR} \( \varphi_\beta \) is an isomorphism from \( \widehat{\mathcal R} = R/R(x^n - \lambda^{-1}) \) to \( \mathcal R \), so \( (\mathcal C, \mathcal C^\perp) \) or \( (\mathcal D^\perp, \mathcal D) \) has the same security parameter (which is \( \distance(\mathcal C) \) or, respectively, \( \distance(\mathcal D^\perp) \)) as the pair of skew \( \lambda \)-constacyclic codes \( (\mathcal C, \varphi_\beta(\mathcal C^\perp)) \) or, respectively, \( (\varphi_\beta(\mathcal D^\perp), \mathcal D) \), by Proposition \ref{prop:isom-secparam}.

\begin{example}\label{ex:20,9,10}
	Take as \( \sigma \) the square of the Frobenius endomorphism in \( F = \FF_{2^8} = \FF_2(a) \) where \( a^8 + a^4 + a^3 + a^2 + 1 = 0 \), so \( K = F^\sigma = \FF_{2^2} = \FF_{2}(\lambda) \) for \( \lambda = a^7 + a^6 + a^4 + a^2 + a \), which is such that \( \lambda^2 = \lambda + 1 \). Then define \( R = F[x;\sigma] \) and \( \mathcal R = R/R(x^{20} - \lambda) \). Any left ideal of \( \mathcal R \) is a skew \( \lambda \)-constacyclic code of length \( 20 \). The skew polynomial \( g = x^{11} + a^{101}x^{10} + a^{165}x^9 + a^{157}x^8 + a^{229}x^7 + a^{193}x^6 + a^{211}x^5 + a^{178}x^4 + a^{47}x^3 + a^{112} x^2 + a^{107} x + a^{58} \) right divides \( x^{20} - 1 \), so \( \mathcal C = \mathcal R g \) is a \( [20, 9] \) \( F \)-linear skew \( \lambda \)-constacyclic code.

	A parity-check matrix for the code \( \mathcal C \) is \( \textbf H \), whose transpose is
\[
\textbf H^T =
\left(
\begin{smallmatrix}
1 & 0 & 0 & 0 & 0 & 0 & 0 & 0 & 0 & 0 & 0 \\
0 & 1 & 0 & 0 & 0 & 0 & 0 & 0 & 0 & 0 & 0 \\
0 & 0 & 1 & 0 & 0 & 0 & 0 & 0 & 0 & 0 & 0 \\
0 & 0 & 0 & 1 & 0 & 0 & 0 & 0 & 0 & 0 & 0 \\
0 & 0 & 0 & 0 & 1 & 0 & 0 & 0 & 0 & 0 & 0 \\
0 & 0 & 0 & 0 & 0 & 1 & 0 & 0 & 0 & 0 & 0 \\
0 & 0 & 0 & 0 & 0 & 0 & 1 & 0 & 0 & 0 & 0 \\
0 & 0 & 0 & 0 & 0 & 0 & 0 & 1 & 0 & 0 & 0 \\
0 & 0 & 0 & 0 & 0 & 0 & 0 & 0 & 1 & 0 & 0 \\
0 & 0 & 0 & 0 & 0 & 0 & 0 & 0 & 0 & 1 & 0 \\
0 & 0 & 0 & 0 & 0 & 0 & 0 & 0 & 0 & 0 & 1 \\
a^{58} & a^{107} & a^{112} & a^{47} & a^{178} & a^{211} & a^{193} & a^{229} & a^{157} & a^{165} & a^{101} \\
a^{207} & a^{231} & a^{93} & a^{161} & a^{31} & a^{40} & a^{24} & a^{221} & a^{55} & a^{141} & a^{154} \\
a^{164} & a^{206} & a^{241} & a^{87} & a^{22} & a^{152} & a^{3} & a^{225} & a^{254} & a^{203} & a^{173} \\
a^{240} & a^{41} & a^{81} & a^{10} & a^{220} & a^{90} & a^{78} & a^{147} & a^{67} & a^{247} & a^{120} \\
a^{28} & a^{241} & a^{141} & a^{158} & a^{201} & a^{145} & a^{212} & a^{20} & a^{20} & a^{130} & a^{42} \\
a^{226} & a^{39} & a^{104} & a^{162} & a^{136} & a^{209} & a^{40} & a^{165} & a^{91} & a^{128} & a^{110} \\
a^{243} & a^{3} & a^{208} & a^{15} & a^{174} & a^{92} & a^{212} & a^{184} & a^{142} & a^{64} & a^{183} \\
a^{25} & a^{69} & a^{228} & a^{161} & a^{230} & a^{123} & a^{214} & a^{159} & a^{90} & a^{161} & a^{217} \\
a^{161} & a^{226} & a^{146} & a^{115} & a^{187} & a^{55} & a^{64} & a^{46} & a^{48} & a^{32} & a^{198} \\
\end{smallmatrix}
\right).
\]
	By using SageMath, we find out that every set of \( 9 \) columns of \( \textbf H \) is linearly independent, which means that the minimum distance of \( \mathcal C \) is at least \( 10 \), while there exist sets of \( 10 \) linearly dependent columns. Hence, \( \mathcal C \) is a \( [20, 9, 10] \) code.

	By Proposition \ref{prop:dual}, \( \mathcal C^\perp \) is the skew \( \lambda^{-1} \)-constacyclic code \( \widehat{\mathcal R} h^\Theta \), where \( \widehat{\mathcal R} = R/R(x^{20} - \lambda^{-1}) \) and \( h g = x^{20} - \lambda \). The generator of \( \mathcal C^\perp \) is \( h^\Theta = x^{9} + a^{49}x^8 + a^{15} x^7 + a^{122} x^6 + a^{54} x^5 + a^{27} x^4 + a^{110} x^3 + a^{61} x^2 + a^{233} x + a^{147} \). In order to get a \( \lambda \)-constacyclic code whose dual is isomorphic to the one of \( \mathcal C^\perp \), we use an isomorphism of the form \( \varphi_\beta \) for some \( \beta \in \units F \), as described in Lemma \ref{lemma:phibtocalR}, such that \( \norm{F/K}(\beta)^s = \lambda^2 \). As \( \norm{F/K} \) is a map from \( F \) to \( K \) and the map \( \gamma \mapsto \gamma^s = \gamma^5 \) is the Frobenius endomorphism in \( K \) since \( \gamma^4 = \gamma \) for all \( \gamma \in K \), any element \( u \in \units F \) such that \( \norm{F/K}(u) = \lambda \), which must exist as the norm over finite field extensions is surjective, is also such that \( \norm{F/K}(u)^s = \lambda^2 \). It turns out that \( \norm{F/K}(a) = \lambda \) and therefore we can use \( \varphi_a : \widehat{\mathcal R} \to \mathcal R \) for our purpose. The monic skew polynomial that generates the same code as \( \varphi_a(h^\Theta) \) is \( h' = x^{9} + a^{50}x^8 + a^{80}x^7 + a^{203} x^6 + a^{139} x^5 + a^{113} x^4 + a^5 x^3 + a^{227} x^2 + a^{148} x + a^{63} \). Its greatest common right divisor with \( g \) is \( 1 \), so \( (\mathcal Rg, \mathcal Rh') \) is directly an LCP of codes with security parameter \( 10 \).

	Through applying the automorphism group given in Proposition \ref{prop:automorphismgroup} to \( h' \), we get \( 340 \) possible supplements for \( \mathcal Rg \), \( 200 \) of them resulting in an LCP of codes.
\end{example}

Another option is to generate \( \mathcal D^\perp \) (or \( \mathcal C \) if we had \( \mathcal D \)) as a skew \( \lambda^{-1} \)-constacyclic (respectively, \( \lambda \)-constacyclic) code with a high enough lower bound for its distance. A way to do so, if the dimension needed for the code is divisible by \( s \), is through Definition \ref{defnBCHlike}, which gives a skew BCH code. If the code must have dimension \( k = s k' \), then the skew BCH code will have \( \mu - k' + 1 \), where \( \mu = |\sigma| \), as its designed distance. If \( s = 1 \), this will actually yield MDS codes, so the security parameter is guaranteed not to be limited by this added code. In contrast, the resulting distance might be unsatisfactory for \( s > 1 \), depending on the desired application and the first code.

\begin{example}\label{ex:9ht}
	Consider as \( \mathcal C = \mathcal R g \) the ninth code in \cite[Table 1]{GLNN18}, which is the skew cyclic \( F \)-linear code of length \( n = 16 \) and dimension \( k = 8 \) for \( L = \FF_{2^{16}} = \FF_2(a) \) where \( a^{16} + a^{11} + a^2 + a + 1 = 0 \), \( \theta : L \to L \) is the Frobenius endomorphism (hence \( K = L^\theta = \FF_2, n = 16 \)), \( F = L^{\theta^8} = \FF_{2^8} \) (so \( \mu = 8, s = 2, \sigma = \theta_{|F} \)), \( \alpha = a^5 \)
	and
	\[
		g = \lclm{x - \theta^{i + 8j}(\alpha^{-1}\theta(\alpha)) ~|~ i \in \{ 0, 1, 3, 4 \}, 0 \le j \le 1}.
	\]

	While Proposition \ref{propBCHlike} only guarantees a minimum distance for \( \mathcal C = \mathcal R g \) of at least \( 3 \), it is at least \( 4 \) by the Hartmann--Tzeng bound described in \cite{GLNN18}. We may get a pair whose security parameter matches \( \distance(\mathcal C) \) whatever its actual value happens to be by repeating the steps in Example \ref{ex:1roos}. Here, we shall instead construct the supplement \( \mathcal D \) from a skew BCH cyclic code.

	We will use \( b = a^8 + a^7 + a^3 + a^2 + 1 \) as the element such that \( F = \FF_2(b) \), so \( g = x^8 + b^{110}x^7 + b^{125}x^6 + b^{219}x^5 + b^{101}x^4 + b^{191}x^3 + b^{45}x^2 + b^{67}x + b^{85} \).
	The skew polynomial
	\[
		h = \lclm{x - \theta^{i + 8j}(\alpha^{-1}\theta(\alpha)) ~|~ 0 \le i \le 3, 0 \le j \le 1}
	\]
	is such that the minimum weight of the \( [16, 8] \) code \( \mathcal R h \) is at least \( 5 \) by Proposition \ref{propBCHlike}. Now we define \( \mathcal D = (\mathcal R h)^\perp \), which by Proposition \ref{prop:dual} is \( \mathcal R \hat h \) where \( \hat h = x^8 + b^{235}x^7 + b^{222}x^6 + b^{178}x^5 + b^{17}x^4 + b^{111}x^3 + b^{71}x^2 + b^{194}x + b^{198} \) is the result of left multiplying \( x^8\Theta(h') \) by the inverse of its leading coefficient, and \( h' \) is such that \( h' h = x^{16} - 1 \).

	The codes \( \mathcal R g \) and \( \mathcal R \hat h \) are not an LCP of codes, as \( \gcrd{g, \hat h} \ne 1 \). The group from Proposition \ref{prop:automorphismgroup}, which has \( 2040 \) elements, results in \( 672 \) possible supplements for \( \mathcal R g \). For example, the only one among those which is given by a power of \( \phi_x \) is the automorphism \( \phi_x^6 \). This yields \( \phi_x^6(\hat h) = x^8 + b^{250}x^7 + b^{183}x^6 + b^{172}x^5 + b^{68}x^4 + b^{219}x^3 + b^{209}x^2 + b^{176}x + b^{177} \), which satisfies \( \gcrd{g, \phi_x^6(\hat h)} = 1 \). Consequently, \( (\mathcal R g, \mathcal R \phi_x^6(\hat h)) \) is an LCP of codes with security parameter at least \( 4 \).

	The size of this example is small enough to check the exact distance of the codes through their parity check matrices, as it was done in Example \ref{ex:20,9,10}. For both \( \mathcal R g \) and \( (\mathcal R \hat h)^\perp \), this distance is \( 8 \), so the actual security parameter of the pairs of codes in this example is \( 8 \). However, note that using a BCH code only guarantees a distance of \( 5 \) in this case.
\end{example}

If the dimension of the chosen code is half its length and its distance and dual distance are equal (which can be known beforehand if the code is MDS, as the dual of an MDS code is an MDS code, see \cite[Theorem 2.4.3]{HP03}), we could complete the pair by repeating the same code. In this case, we will need the automorphism group given by Proposition \ref{prop:automorphismgroup}, as it is immediate that the pair will not be an LCP of codes.

\begin{example}\label{ex:12,6,6}
	Take the same \( \sigma \), \( F \), \( a \), \( \lambda \) and \( R \) as in Example \ref{ex:20,9,10} and define \( \mathcal R = R/R(x^{12} - \lambda) \). In this case, left ideals of \( \mathcal R \) are skew \( \lambda \)-constacyclic codes of length \( n = 12 \). As the skew polynomial \( g = x^{6} + a^{16} x^{5} + a^{131} x^{4} + a^{159} x^{3} + a^{46} x^{2} + a^{61} x + a^{218} \) right divides \( x^{12} - \lambda \), \( \mathcal C = \mathcal R g \) is a \( [12, 6] \) \( F \)-linear skew \( \lambda \)-constacyclic code.
	
	Note that, as \( s = 3 \) and \( \gamma^3 = 1 \) for every \( \gamma \in \units K \), the map \( \left(\norm{F/K}\right)^s : F \to K \) sends every nonzero element into \( 1 \). In particular, there is no \( \beta \in \units F \) such that an isomorphism \( \varphi_\beta : R/R(x^{12} - \lambda^{-1}) \to \mathcal R \), as given in Lemma \ref{lemma:phibtocalR}, exists.

	A parity-check matrix for \( \mathcal C \) is
\[
\setcounter{MaxMatrixCols}{12}
\textbf H =
\begin{pmatrix}
1 & 0 & 0 & 0 & 0 & 0 & a^{218} & a^{27} & a^{139} & a^{65} & a^{184} & a^{237} \\
0 & 1 & 0 & 0 & 0 & 0 & a^{61} & a^{92} & a^{92} & a^{168} & a^{240} & a^{42} \\
0 & 0 & 1 & 0 & 0 & 0 & a^{46} & a^{32} & a^{184} & a^{145} & a^{155} & a^{134} \\
0 & 0 & 0 & 1 & 0 & 0 & a^{159} & a^{35} & a^{78} & a^{3} & a^{133} & a^{127} \\
0 & 0 & 0 & 0 & 1 & 0 & a^{131} & a & a^{227} & a^{152} & a^{182} & a^{162} \\
0 & 0 & 0 & 0 & 0 & 1 & a^{16} & a^{44} & a^{153} & a^{119} & a^{196} & a^{158} \\
\end{pmatrix},
\]
	which can be checked to have no sets of \( 5 \) linearly dependent columns while it does have sets of \( 6 \) linearly dependent columns, so \( \distance(\mathcal C) = 6 \). This distance cannot be matched through the designed distance of a skew BCH code as given by Definition \ref{defnBCHlike}. However, we can pick any skew \( \lambda \)-constacyclic code \( \mathcal D \) such that \( \dim_F(\mathcal C) + \dim_F(\mathcal D) = n \) and check the distance of its dual code using the same procedure as we did for \( \mathcal C \). In this case \( n = 2 \dim_F(\mathcal C) \), so \( \dim_F(\mathcal D) \) must match \( \dim_F(\mathcal C) \). As a result, we may try \( \mathcal D = \mathcal C \). The matrix
	\[
\textbf G = 
\begin{pmatrix}
a^{218} & a^{61} & a^{46} & a^{159} & a^{131} & a^{16} & 1 & 0 & 0 & 0 & 0 & 0 \\
0 & a^{107} & a^{244} & a^{184} & a^{126} & a^{14} & a^{64} & 1 & 0 & 0 & 0 & 0 \\
0 & 0 & a^{173} & a^{211} & a^{226} & a^{249} & a^{56} & a & 1 & 0 & 0 & 0 \\
0 & 0 & 0 & a^{182} & a^{79} & a^{139} & a^{231} & a^{224} & a^{4} & 1 & 0 & 0 \\
0 & 0 & 0 & 0 & a^{218} & a^{61} & a^{46} & a^{159} & a^{131} & a^{16} & 1 & 0 \\
0 & 0 & 0 & 0 & 0 & a^{107} & a^{244} & a^{184} & a^{126} & a^{14} & a^{64} & 1 \\
\end{pmatrix}
\]
	is a generator matrix for \( \mathcal C \) (note that the \(i\)-th row has the coefficients of \( x^{i-1} g \)) and therefore a parity-check matrix for \( \mathcal C^\perp \). We can check that no set of \( 5 \) linearly dependent columns exists, while some sets of \( 6 \) columns are dependent, hence \( \distance(\mathcal C^\perp) = 6 \).
	
	It is immediate that, while the pair \( (\mathcal C, \mathcal C) \) has \( 6 \) as its security parameter, it is not a linear complementary pair of codes. This is sorted out by applying \( \phi_x^k(g) \), for any \( k \in \{ 1, 2, 3 \} \), to one of the elements in the pair, as then the greatest common right divisor of the generator skew polynomials becomes \( 1 \) and Theorem \ref{thm:LCP} applies. We can take advantage of the fact that, as \( \phi_x^k(g) = x^k g x^{-k} \), the \( (k + 1) \)-th row of the generator matrix \( \textbf G \) contains the coefficients of \( \phi_x^k(g) \) moved \( k \) columns to the right. For example, from the second (or last) row we conclude that \( \phi_x(g) = x^6 + a^{64}x^5 + a^{14}x^4 + a^{126}x^3 + a^{184} x^2 + a^{244}x + a^{107} \).

	We may also consider the automorphisms of the form \( \varphi_\beta \) for \( \beta \in \units F \). As previously shown, \( \gamma^s = 1 \ \forall \gamma \in K \), so \( \varphi_\beta \) is always an isomorphism \( \mathcal R \to \mathcal R \). This results, if compositions with powers of \( \phi_x \) are also used, in \( 1019 \) isomorphisms to consider, excluding the identity map. Among those, no pair of them results in the same generator and only \( 8 \) fail to yield an LCP of codes, so there are \( 1011 \) possible LCPs of codes of the form \( (\mathcal C, \isom(\mathcal C)) \) where \( \isom \) is an isomorphism from the group described in Proposition \ref{prop:automorphismgroup}, and there are \( 1020 \cdot 1011 = 1\,031\,220 \) LCPs of codes of the form \( (\isom_1(\mathcal C),\isom_2(\mathcal C)) \), all of them with security parameter \( 6 \), since we can also apply any of the \( 1020 \) available isomorphisms to the whole pair.
	
	Note that, while \( \distance(\mathcal C) = \distance(\mathcal C^\perp) \), this is not the case for every code of length \( 6 \) in our context. For example, \( p = x^6 + a^{24}x^5 + a^{183}x^4 + a^{164}x^3 + a^{82}x^2 + a^{70}x + a^{89} \) is a right divisor of \( x^{12} - \lambda \) which generates a \( [12, 6, 6] \) skew \( \lambda \)-constacyclic code whose dual is a \( [12, 6, 4] \) code. In fact, \( (\mathcal R p, \mathcal R \phi_x(p)) \) is an LCP of codes. This shows that, in general, for an LCP of codes \( (\mathcal C, \mathcal D) \), it is not guaranteed that \( \distance(\mathcal C) = \distance(\mathcal D^\perp) \), as it is the case for non-skew cyclic codes, see \cite[Theorem 2.4]{Guneri18}.
\end{example}

\begin{table}[h]
\caption{Parameters for the examples above, where \( k \) stands for the dimension over \( F \) of the first element in each LCP of codes and \( \text{sp} \) is the security parameter for the resulting LCPs.}\label{table:examples}
\begin{tabular}{c|r|r|r|c|r}
\text{Ex.} & $n$ & $k$ & sp & $|K|^\mu$ & $s$ \\
\midrule
\ref{ex:1roos}   & 12 &  6 &  6 & $2^6$ &  2 \\
\ref{ex:9bgu}    & 44 & 20 & 17 & $3^2$ & 22 \\
\ref{ex:8roos}   & 10 &  6 &  5 & $5^5$ &  2 \\
\ref{ex:20,9,10} & 20 &  9 & 10 & $4^4$ &  5 \\
\ref{ex:9ht}     & 16 &  8 &  8 & $2^8$ &  2 \\
\ref{ex:12,6,6}  & 12 &  6 &  6 & $4^4$ &  3
\end{tabular}
\end{table}

Table \ref{table:examples} summarizes the parameters of the discussed examples, including the length \( n \), the cardinality of the field \( F \) (shown as \( |K|^\mu \)) and the security parameter. Note that no effort was made to maximize the security parameter with respect to \( n \), \( k \) and \( |F| \) as the goal was to illustrate some available options to construct LCPs of codes by completing a given code into a pair. This table is not meant to show upper bounds for the possible security parameters.

\end{document}